\newcommand{\bbrule}[4][]{\inferrule*[left={\bbrulename{#2}},#1]{#3}{#4}}
\newcommand{\bbrulename}[1]{{\sc #1}}
\newcommand{\fnstyle}[1]{\text{\textsmaller{\sc{#1}}}}
\newcommand{\deffn}[2]{%
    \expandafter\newcommand\expandafter{\csname #1\endcsname}[1][]{\fnstyle{#2}\ifthenelse{\isempty{##1}}{}{(##1)}}%
}
\renewcommand{\listof}[2][]{{\color{red}\textrm{\bfseries We are not using \texttt{\char`\\listof} in this document!}}}
\renewcommand{\setof}[2][]{{\color{red}\textrm{\bfseries We are not using \texttt{\char`\\setof} in this document!}}}
\newsavebox{\codeBoxA}
\newsavebox{\codeBoxB}
\newsavebox{\someBoxA}
\newsavebox{\someBoxB}
\newcommand{\setlabel}[2]{\def\@currentlabel{#2}\label{#1}}
\definecolor{tableZebra}{gray}{0.9}
\newcommand{\tttilde}{{\raisebox{-0.1em}{\texttt{\char`\~}}}}
\newcommand{\codetilde}{\tttilde}
\newcommand{\tint}{\texttt{\small int}}
\newcommand{\tbool}{\texttt{\small bool}}
\newcommand{\vtrue}{\texttt{\small true}}
\newcommand{\vfalse}{\texttt{\small false}}
\newcommand{\pfun}{\texttt{\small fun}}
\newcommand{\picki}{\texttt{\small pick\_i}}
\newcommand{\pickb}{\texttt{\small pick\_b}}
\newcommand{\mzero}{\texttt{\small mzero}}
\newcommand{\retag}{\texttt{\small retag}}
\newcommand{\hastype}[2]{#1\ \colon #2}
\newcommand{\mkfun}[2]{#1 \texttt{\small \ -> } #2}
\newcommand{\mkfunv}[2]{\texttt{\small fun }#1 \texttt{\small \ -> } #2}
\newcommand{\mkdfun}[3]{\ttop\hastype{#1}{#2}\ttcp \texttt{\small \ -> } #3}
\newcommand{\mkvariant}[2]{V_1 \texttt{\small\ of }#1 | \cdots | V_n \texttt{\small\ of }#2}
\newcommand{\mkintersect}[2]{#1 \cap \cdots \cap #2}
\newcommand{\mktset}[2]{\ttob#1 \mathrel{\vert} #2\ttcb}
\newcommand{\mkmiu}[2]{\mu#1 . #2}
\newcommand{\mkgen}[1]{\fnstyle{Generator}(#1)}
\newcommand{\mkche}[2]{\fnstyle{Checker}(#1, #2)}
\newcommand{\mkwrap}[2]{\fnstyle{Wrapper}(#1, #2)}
\newcommand{\letfun}{\texttt{\small let f x = e in e}}
\newcommand{\letfunt}{\texttt{\small let f (x : \syntype) : \syntype \ = e in e}}
\newcommand{\eerror}{\texttt{\small ERROR}}
\newcommand{\stackname}[1]{\fnstyle{StackName}\ifthenelse{\isempty{#1}}{}{(#1)}}
\newcommand{\extract}[2]{\fnstyle{Extract}\ifthenelse{\isempty{#1}}{}{(#1,#2)}}
\newcommand{\rawval}[1]{\fnstyle{RawVal}\ifthenelse{\isempty{#1}}{}{(#1)}}
\newcommand{\smallstep}[1][]{\mathrel{\longrightarrow^1_{#1}}}
\newcommand{\smallsteps}[1][]{\mathrel{\longrightarrow^*_{#1}}}
\newcommand{\nsmallsteps}[1][]{\mathrel{\longarrownot\longrightarrow^*_{#1}}}
\newcommand{\freshvarA}[2]{\ensuremath{\freshen\ifthenelse{\isempty{#1}}{}{_{#1}(#2)}}}
\newcommand{\freshvarC}[2]{\ensuremath{\freshen^{\text{cs}}\ifthenelse{\isempty{#1}}{}{_{#1}(#2)}}}
\newcommand{\splitEnv}[3]{\fnstyle{SplitEnv}\ifthenelse{\isempty{#1}}{}{(#1\ifthenelse{\isempty{#2}}{}{,#2,#3})}}
\newcommand{\extractEnv}[1]{\fnstyle{ExtractEnv}\ifthenelse{\isempty{#1}}{}{(#1)}}
\newcommand{\extractEnvOld}[2]{\toDeprecate{\fnstyle{ExtractEnv}\ifthenelse{\isempty{#1}}{}{(#1\ifthenelse{\isempty{#2}}{}{,#2})}}}
\newcommand{\llookup}[3][\lenv]{#1(#2\ifthenelse{\isempty{#3}}{}{,#3})}
\newcommand{\currentStack}[1]{\fnstyle{CurStack}\ifthenelse{\isempty{#1}}{}{(#1)}}
\newcommand{\lbbr}{[\mkern-6.3mu [\mkern-6.35mu [}
\newcommand{\rbbr}{]\mkern-6mu ]\mkern-6mu ]}
\renewcommand{\lbbr}{[\mkern-6.7mu [\mkern-6.7mu [}
\renewcommand{\rbbr}{]\mkern-6.3mu ]\mkern-6.3mu ]}
\newcommand{\substitute}[3]{{\small #1[#2/#3]}}
\newcommand{\matches}[2]{#1 \ \tttilde \ #2}
\newcommand{\ife}[3]{\texttt{\small if } #1 \texttt{\small \ then } #2 \texttt{\small \ else }#3}
\newcommand{\letin}[3]{\texttt{\small let }#1\texttt{\small\ = }#2\texttt{\small\ in }#3}
\newcommand{\letint}[4]{\texttt{\small let (}#1\texttt{\small\ : #2) = }#3\texttt{\small\ in }#4}
\newcommand{\redcon}[2]{\bar{R}[#1]\lbbr#2\rbbr}
\newcommand{\tc}[2]{\fnstyle{TC}(#1,#2)}
\newcommand{\ruleref}[1]{\textsc{#1}}
\newcommand{\keywordcolor}{\color{NavyBlue}}
\newcommand{\symbolcolor}{\color{RoyalPurple}}
\newcommand{\commentcolor}{\color{ForestGreen}}
\newcommand{\codesize}{\footnotesize} 
\newcommand{\plangbasicstyle}{\linespread{0.9}\ttfamily\codesize}
\newcommand{\plangkeywordstyle}{\plangbasicstyle\keywordcolor}
\newcommand{\plangsymbolstyle}{\plangbasicstyle\symbolcolor}
\newcommand{\plangcommentstyle}{\plangbasicstyle\commentcolor}
\newcommand{\ttob}{\text{\upshape\ttfamily\char`\{}}
\newcommand{\ttcb}{\text{\upshape\ttfamily\char`\}}}
\newcommand{\ttop}{\text{\upshape\ttfamily(}}
\newcommand{\ttcp}{\text{\upshape\ttfamily)}}
\newcommand{\codefigurestart}[1][.4\textwidth]{\hspace*{15pt}\begin{minipage}{#1-20pt}\camlset\footnotesize}
\newcommand{\codefigurestop}{\end{minipage}}
\long\def\tossit#1{}
\long\def\keepit#1{#1}
\def\fullversion{\long\def\infull{\keepit}\long\def\inshort{\tossit}}
\def\trimmedversion{\long\def\inregular{\tossit}\long\def\intrimmed{\keepit}}
\definecolor{ForestGreen}{rgb}{.132,.545,.132}
\definecolor{Plum}{rgb}{.868,.628,.868}
\definecolor{RoyalPurple}{rgb}{.38,.25,.6}
\definecolor{NavyBlue}{rgb}{0,0,.5}
\definecolor{VioletRed}{rgb}{.816,.125,.565}
\newcommand*\rot{\rotatebox{90}}
\newcommand*\red{\textcolor{red}}
\keywords{Semantic Typing; Incorrectness; Symbolic Execution; Test Generation}
\begin{document}

\title[Semantic-Type-Guided Bug Finding]{Semantic-Type-Guided Bug Finding}

\author{Kelvin Qian}
\orcid{0009-0002-1768-4018}
\affiliation{%
  \institution{Johns Hopkins University}
  \city{Baltimore}
  \country{USA}
}
\email{kqian2@alumni.jh.edu}

\author{Scott Smith}
 \orcid{0009-0005-0495-2716}
 \affiliation{%
   \institution{Johns Hopkins University}
   \city{Baltimore}
   \country{USA}
 }
\email{scott@cs.jhu.edu}

\author{Brandon Stride}
\orcid{0009-0000-9369-204X}
\affiliation{%
   \institution{Johns Hopkins University}
   \city{Baltimore}
   \country{USA}
}
\email{bstride1@jhu.edu}

\author{Shiwei Weng}
\orcid{0000-0001-5440-4471}
 \affiliation{%
   \institution{Johns Hopkins University}
   \city{Baltimore}
   \country{USA}
}
\email{wengshiwei@jhu.edu}

\author{Ke Wu}
\orcid{0009-0008-2647-4198}
\affiliation{%
  \institution{Johns Hopkins University}
  \city{Baltimore}
  \country{USA}
}
\email{kwu48@jhu.edu}

\begin{abstract}
    In recent years, there has been an increased interest in tools that establish \emph{incorrectness} rather than correctness of program properties. In this work we build on this approach by developing a novel methodology to prove incorrectness of \emph{semantic typing} properties of functional programs, extending the incorrectness approach to the model theory of functional program typing. We define a semantic type refuter which refutes semantic typings for a simple functional language. We prove our refuter is co-recursively enumerable, and that it is sound and complete with respect to a semantic typing notion. An initial implementation is described which uses symbolic evaluation to efficiently find type errors over a functional language with a rich type system.
\end{abstract}

\maketitle
\section{Introduction}

Program verification tools verify that programs meet their specifications. However, recent works suggest that developers may find it more helpful to find \emph{incorrectness}: having a tool that shows a program fails to meet the specification.  Since a sound system for establishing correctness properties will sometimes fail (and report a potential error) on correct programs due to incompleteness, developers might lose trust and interest in the tool, causing them to avoid using it \cite{MakingStaticAnalysisWork}. 

With this in mind, some recent approaches find incorrectnesses rather than establish correctness \cite{10.1145/3371078, 10.1145/3632909, 10.1145/3527325, jakob2015falsificationviewsuccesstyping}. This ensures that users will only receive notifications when there is a provable failure. For example, Incorrectness Logic is a formalization of error-finding in a first-order, imperative context \cite{10.1145/3371078}. Similar ideas have also been investigated for statically typed functional languages: recent work on two-sided type systems explored incorrectness reasoning in higher-order program verification \cite{10.1145/3632909}. Within the context of dynamic languages, a notable example is \emph{success typing} in Elixir \cite{10.1145/1140335.1140356, jakob2015falsificationviewsuccesstyping}: Dialyzer, its static analysis tool, uses constraint-based type inference to report only provable type errors at compile time.

In this work, we present a type-directed bug finding technique that shares a similar philosophy with these approaches: our goal is to discover the type incorrectness of a program by finding provable type violations. There are several distinctive aspects of our work: (1) we study higher-order functional programs and a rich array of types including polymorphic, dependent, and refinement types; (2) our typing notion uses a theoretical foundation of \emph{semantic typing}, so it is a purely model-theoretic and non-proof-theoretic approach; and (3) we prove that the theoretical core of our type checker is both sound and complete with respect to this semantic typing notion: semantically ill-typed programs will be discovered type-incorrect by our algorithm, and programs discovered type-incorrect by our algorithm are semantically ill-typed.

\subsection{Semantic Typing}

Semantic typing \cite{DreyerBlog} is an approach to typing where, rather than relying on a static type system to define the meaning of types, type inhabitants are determined solely by the behaviors of expressions under an operational or denotational semantics of (untyped) expressions; this is more a model-theoretic notion as opposed to the proof-theoretic approach of type theory.  Semantic typing originated with Milner \cite{MILNER1978348}, where semantic types were used in place of a type system to show type soundness; however, they have largely been overshadowed by the proof-theoretic approach of \emph{type systems}.  The semantic approach to typing is a component of logical relations, dating back to Tait \cite{TaitRealizability}, but logical relations generally were used as a characterization of a type system to prove properties such as strong normalization, and not as the basis of meaning. There has been some resurgence of using semantic types / logical relations as the basis of meaning \cite{StepIndexedLogicalRelations,10.5555/1037736}. We believe our work is the first to use semantic types for type-directed bug finding by implementing a semantic-type-based bug finding method that is both sound and complete.
%
%
\subsection{Semantic-Type-Guided Refutation}

One fundamental challenge with semantic types is that type membership is undecidable: to type check a function, it must successfully type check when applied to all the elements of the function domain, which is often infinite in size and has computations with unbounded numbers of steps. However, if the focus is on finding \emph{incorrect} typings in the spirit of Incorrectness Logic, there is no infinite domain that needs to be positively verified; instead, only a single counterexample is needed.  We have produced a type incorrectness finder which is co-recursively enumerable (co-r.e.), i.e. if an expression does \emph{not} belong to a semantic type, our refutation tool will be able to find it in finite time. Inspired by ideas from dynamic contracts and related works \cite{Findler-Contracts,SoftContractVerif,RelCompleteCounterexamples,Meunier:2006,ParametricContracts}, as well as property-based testing \cite{10.1145/351240.351266}, we have designed a novel runtime representation of semantic types that allows type refutations to be encoded as expressions in the program itself and reduces type checking in our system to an error state reachability problem. 
%

We develop an implementation of our ideas that uses symbolic evaluation to refute typings:  symbolic evaluation can cover an infinite state space based on only one execution and so allows for more efficient type error finding. 



\subsection{Contributions}
The primary contributions of this paper are as follows:
\begin{enumerate}
    \item We define a novel embedding of semantic type refutation inside an untyped core functional language which is co-recursively enumerable.
    \item We prove that the embedding is sound and complete with respect to a standard semantic type definition for basic types (i.e. $\tint$, $\tbool$, and function types).
    \item We extend our refutation approach to other types, including refinement types, dependent function types, and parametric polymorphism, for which we also prove soundness and completeness with respect to semantic typing notions.
    \item We additionally present extensions for record, variant, intersection, and recursive types and subtyping, which we implement, but we do not establish correctness of these extensions.
    \item We describe our type refuter implementation and show it performs reasonably on benchmark examples.
\end{enumerate}

\section{Overview}
\label{sec_overview}

In this section we give a flavor of semantic-type-based refutation. Consider the following simple function containing a type error:

\begin{lstlisting}
let appl_int (fn : int -> int) : int = 
  let res = fn 1 in if res != 32767 then f 0 else (res - 1) < 0 
\end{lstlisting}

Since \lstinline|(res - 1)$ $  < 0| results in a boolean value, it clearly violates the specified \lstinline|int| return type. How can we quickly discover this type violation using a semantic typing notion?

A key aspect of semantic types is that membership is determined solely by program behaviors. For basic types such as \lstinline|int| or \lstinline|bool|, membership is decided by whether the given expression evaluates to an integer or boolean value. For function types, semantic typing dictates that for all argument values in the domain's semantic type, the application result evaluates to an element of the codomain's semantic type. In this example, the domain type is \lstinline|int ->$\ $int|, and the codomain type is \lstinline|int|, so to show type incorrectness, we need to find some element of \lstinline|int ->$\ $int| such that the application does not return an integer.





Given the unbounded nature of the function value space, how is it feasible to automatically generate a function in \lstinline|int ->$\ $int| that causes the above typing to fail?  The answer is surprisingly simple: in order to trigger a type error, it suffices to find only one input-output mapping performed by the function that will lead to the error. In particular, we can define a generator of arbitrary elements of type \lstinline|int ->$\ $int| as follows:

\mkgen{$\mkfun{\tint}{\tint}$} = \lstinline|fun i ->| \lstinline|if| \mkche{\lstinline|int|}{\lstinline|i|} \lstinline| then pick_i else ERROR|


This generated function will verify its argument is an integer, throw it away, and return a random integer. For \emph{some} run this can behave like \emph{any} fixed function on the integers and so gives a complete characterization of all the arguments for a function of e.g. type \lstinline|(int ->$\ $int) ->$\ $int|. Since these generated functions are very much a subset of all possible functions, the proof that they will suffice for the purpose of type refutation is subtle and makes our soundness proof challenging. 

Essentially, we have to prove that if there exists a concrete function that will trigger a type violation, then the generated function can reproduce the corresponding control flows leading to the error by picking the correct input and output pairs at every branching point, effectively simulating the behaviors of the concrete function without having to emulate its actual semantics. The full definition of generators and checkers has some additional nuance and is presented in Section \ref{sec_formalization}. We also prove that generators and checkers can be used to give a sound and complete characterization of semantic typing.

Another notable property of this construct is that the  generators and checkers are all directly implementable in the language itself, as can be seen in the example above. As a result, we can transform the type declarations into instrumented programs that are executable, checkable code. The instrumented programs only need \lstinline|pick_i| to obtain a random integer.  These instrumented programs can be directly run in an interpreter and may hit the error if the picked value happens to trigger it, but this is similar to property-based testing in that there is no completeness; on the example above, for instance, random choice is very unlikely to pick the path causing the type error.  However, if the instrumented program is executed in a symbolic evaluator, each \lstinline|pick_i| will initially be unconstrained, and constraints can accumulate as the program runs and can then be solved to find a path leading to the type error.


\subsection{Other Types}

This approach to typing would not be very useful if it only worked for simple function types.  Fortunately, it can be extended to incorporate a very rich grammar of types. In Sections \ref{sec_sound_extension} and \ref{sec_additional_extension}, we illustrate how refinement, dependent, polymorphic, variant, intersection, record, and recursive types, plus subtyping, can be added as extensions to the core system.

The following example showcases how our approach can refute a declaration that uses both refinement and dependent function types:
\begin{lstlisting}
let f (l : { list int | nonneg }) : { int | fun a -> a > 0 and (list_sum l 0 == a) } = list_sum l 0 
\end{lstlisting}
where \lstinline|list_sum| is a function that sums up a list of integers, and \lstinline|nonneg| is a boolean-valued function that constrains a list to contain only non-negative integers. The type requires that the returned value be the sum of the input list, which is positive. Using the generators and checkers for this function type, our type checker will produce list values that meet the \lstinline|nonneg| predicate and check whether the application result has the output type. Since \lstinline|nonneg| allows a list of zeroes, there will be executions where the application \lstinline!sum l 0! returns $0$, and our implementation can prove this fails to type check. 

To model recursive types, the generator/checker definitions above become non-well-founded as the type is arbitrarily deep. (It is also a problem in semantic typing and requires some subtlety to solve \cite{StepIndexedLogicalRelations}.) We resolve this issue by taking the further step to embed the checker definition \emph{in the language itself}: we embed types as pairs of generator and checker expressions.  Once types are embedded in the language, we can simply use the Y-combinator to define recursive types as fixed-points of type functions.  With types as values, parameterized types can also be easily defined.


\subsection{Use Checking}
We have up to now discussed how \emph{definitions} can be verified to be of the proper type; with just definition checking, however, a function could be \emph{used} at a type it was not declared at.  For example, the identity function typed as taking integers to integers could have a boolean passed to it, violating its declared interface. Simply checking definitions are correct will not lead to uses being correct.  Our implementation can insert use checks (so-called ``wraps'') to verify that type interfaces are not violated.  These checks are very similar to the use checks used in dynamic contracts \cite{Findler-Contracts}, so we will not include them in our formalization.
\section{Formalization}
\label{sec_formalization}

In this section we will first define a core functional language and its semantic typing rules. Then, we will define our type checker in terms of generator and checker embeddings as outlined in the previous section. We then give a theorem of soundness and completeness: our type checker fully and faithfully implements semantic typing.

Note that from this section on, we will use the term \textit{type checking} in place of semantic-type-guided bug finding. We use this term for its  broader meaning of validating the correctness of type declarations, not for a specifically type-system-based validation. 


\subsection{Core Language}

\begin{wrapfigure}{r}{.55\textwidth}
    \begin{grammar}
    \grule[values]{\eval}{
      \mathbb{Z}
      \gor \mathbb {B}
      \gor
      \mkfunv{\ev}{\expr}
    }
    \grule[expressions]{\expr}{
        \eval
        \gor \ev
        \gor \expr\ \expr
        \gor \expr\ \binop\ \expr
        \gline 
        \gor \ife{\expr}{\expr}{\expr}
        \gor \matches{\expr}{p}
        \gline
        \gor \picki 
        \gor \pickb
        \gor \eerror
    }
    \grule[variables]{\ev}{
                \textit{(identifiers)}
    }
    \grule[patterns]{p}{
      \tint
      \gor \tbool
      \gor \pfun
    }
    \grule[binops]{\binop}{
      \gtplus
      \gor \gtminus
      \gor \gtless
      \gor \gteq
      \gor \gtand
      \gor \gtor
      \gor \gtxor
    }
    \grule[types]{\syntype}{
      \tint
      \gor \tbool
      \gor \tfun
    }
    \grule[redexes]{r}{
      \eval \binop \eval
      \gor \ife{\vtrue}{\expr}{\expr}
      \gline
      \gor \ife{\vfalse}{\expr}{\expr}
      \gor \eval \ \eval
      \gline
      \gor \matches{\eval}{p}
      \gor \picki \gor \pickb
    }
    \grule[reduction contexts]{\reductioncon}{
      \bullet
      \gor \reductioncon \ \expr
      \gor \expr \ \reductioncon
      \gor \reductioncon \binop \expr
      \gor \eval \binop \reductioncon
      \gline
      \gor \ife{\reductioncon}{\expr}{\expr}
      \gor \matches{\reductioncon}{p}
  }
  \end{grammar}
  \caption{Core Language Grammar}
  \label{fig:core_grammar}
 \end{wrapfigure}

The grammar for the core language is given in Figure~\ref{fig:core_grammar}. It is mostly standard; the only non-traditional features are nondeterministic choice
for integers and booleans (\lstinline|$\picki$| and \lstinline|$\pickb$| respectively) and run-time typecasing
on integer, boolean, and function types (via the boolean-valued operator \lstinline|$\matches{\expr}{p}$|). The core language itself is untyped; we define types as properties of expressions over the untyped operational semantics.

We are using a small-step contextual operational semantics and include the definition of reduction (aka evaluation) contexts $\reductioncon$ and redexes $r$ in Figure~\ref{fig:core_grammar}. The black hole, $\bullet$, in a reduction context is unique. Filling a reduction context with an expression, denoted $R\lbbr\expr\rbbr$, is the operation of replacing the black hole, $\bullet$, in $\reductioncon$ with $\expr$.  \lstinline|ERROR| represents a runtime type error, and it is not considered a divergence. Finally, we will occasionally use the syntax \lstinline|let $x$ = $\expr_1$ in $\expr_2$|, which is syntactic sugar for \lstinline|(fun $x$ -> $\expr_2$) $\expr_1$|.

We define the following function to implement typecasing in the operational semantics.

\begin{definition}[Pattern Match]
  \label{def:pattern_match}
  \textsc{matches}$(\eval, p)$ is defined by the following clauses:
  \begin{description}
    \item \textsc{matches}$(\eval, \tint) =   
    \begin{cases}
      \gttrue & \text{if}\ \eval \in \mathbb{Z} \\
      \gtfalse & \text{if}\ \eval \not\in \mathbb{Z}
    \end{cases}$
    \ \ \ \ \ \ \ \ \ \textsc{matches}$(\eval, \tbool) =   
    \begin{cases}
      \gttrue & \text{if}\ \eval \in \mathbb{B} \\
      \gtfalse & \text{if}\ \eval \not\in \mathbb{B}
    \end{cases}$    \item \textsc{matches}$(\eval, \gtfun) =   
    \begin{cases}
      \gttrue & \text{if}\ \eval \text{ is a function value} \\
      \gtfalse & \text{if}\ \eval \text{ is not a function value}
    \end{cases}$
  \end{description}
\end{definition}


The operational semantics for this language is defined in Figure~\ref{fig:core_op_sem}. The definitions for bound, free, and closed are standard.

\begin{definition} Assuming closed expressions $\expr$ and $\expr'$, we define the following relations:
  \label{def:eval_relation}
  \begin{enumerate}
    \item $\expr \smallstep[] \expr'$ is the least relation satisfying the rules of Figure~\ref{fig:core_op_sem}.
    \item $\expr_1 \smallsteps[] \expr_n$ iff there exists a sequence $\expr_1 \smallstep[] \expr_2, \cdots, \expr_{n-1} \smallstep[] \expr_n$.
    \item $\expr \Uparrow$ iff for all $n > 0$, there exists a sequence $\expr \smallstep[] \expr_1, \cdots, \expr_{n-1} \smallstep[] \expr_n$.
  \end{enumerate}
\end{definition}

\begin{figure}[ht]
  \begin{mathpar}

      \bbrule{Red}{
        r \smallstep[] \expr
      }{
        R\lbbr r \rbbr \smallstep[] R'\lbbr\expr\rbbr
      }

      \bbrule{Err}{
        \expr \smallstep[] \eerror
      }{
        R\lbbr\expr\rbbr \smallstep[] \eerror
      }

      \bbrule{Add}{
        n_1, n_2 \in \mathbb{Z}
      }{
        n_1 + n_2 \smallstep[] \text{integer sum of } n_1 \text{ and } n_2
      }

      \bbrule{Add-Err}{
        \eval_1 \text{ or } \eval_2 \text{ is not an integer}
      }{
        \eval_1 + \eval_2 \smallstep[] \eerror
      }

      \bbrule{Appl}{
      }{
        (\mkfunv{\ev}{\expr_f}) \ \eval \smallstep[] \substitute{\expr_f}{\eval}{\ev}
      }

      \bbrule{Appl-Err}{
        \eval \text{ is not a function value}
      }{
        \eval \ \eval' \smallstep[] \eerror
      }

      \bbrule{If-True}{
        }{
          \ife{\vtrue}{\expr}{\expr'} \smallstep[] \expr
      }

      \bbrule{If-False}{
        }{
          \ife{\vfalse}{\expr}{\expr'} \smallstep[] \expr'
      }

      \bbrule{If-Err}{
        \eval \text{ is not a boolean value}
      }{
        \ife{\eval}{\expr}{\expr'} \smallstep[] \eerror
      }

      \bbrule{Nondet-int}{
        n \in \mathbb{Z}
      }{
        \picki \smallstep[] n
      }

      \bbrule{Nondet-bool}{
        b \in \mathbb{B}
      }{
        \pickb \smallstep[] b
      }

      \bbrule{Pattern}{
        \beta = \textsc{matches}(\eval, p)
      }{
        \matches{\eval}{p} \smallstep[] \beta
      }

  \end{mathpar}
  \caption{Operational Semantics for the Core Language}
  \label{fig:core_op_sem}
\end{figure}

\subsection{Modeling Types}

In this section, we first define a standard semantic type notion $\models \hastype{\expr}{\syntype}$ meaning $\expr$ semantically has type $\syntype$.   We then define our type checking relation $\tc{\expr}{\syntype}$ meaning $\expr$ type checks at type $\syntype$, and show it is co-r.e.

\subsubsection{Semantic Typing.} 
The semantic typing relation is defined as follows:

\begin{definition}[Semantic Types for the Core Language]
  \label{def:lr_core}
  \ \par
  \begin{enumerate}
      \item $\ \models \hastype{\expr}{\tint}$ iff $\expr \nsmallsteps[] \eerror$ and $\forall \eval.$ if $\expr \smallsteps[] \eval$, then $\eval \in \mathbb{Z}$.
      \item $\ \models \hastype{\expr}{\tbool}$ iff $\expr \nsmallsteps[] \eerror$ and $\forall \eval.$ if $\expr \smallsteps[] \eval$, then $\eval \in \mathbb{B}$.
      \item $\ \models \hastype{\expr}{\mkfun{\syntype_1}{\syntype_2}}$ iff $\expr \nsmallsteps[] \eerror$ and $\forall \eval_f.$ if $\expr \smallsteps[] \eval_f$, then $\forall{\eval}.$ if $\models \hastype{\eval}{\syntype_1}$, then $\models \hastype{\eval_f\ \eval}{\syntype_2}$.
   \end{enumerate}
\end{definition}

This is similar to the standard inductive definition of semantic typing \cite{MILNER1978348} and unary logical relations \cite{TaitRealizability,StepIndexedLogicalRelations,plotkin_lambda_1973}. There are a few differences from the standard presentation seen in these works: (1) we choose to only model closed expressions for simplicity and have no need to separate values from expressions in the definitions; (2) we choose to include soundness (i.e. well-typed expressions cannot evaluate to \lstinline|ERROR|) as part of the definition because this work is not trying to prove the correctness of these definitions but rather tries to use semantic typing to perform type checking over an untyped core language.

The semantic typing relation is not decidable because it requires enumeration of infinite function domains in the ``$\forall v$'' in the definition of a function type.  We will establish that it is in fact co-r.e. later in this section.


\subsubsection{Defining a Type Checker.}

Now we will give a co-r.e. definition of type checking based on the outline given in Section \ref{sec_overview}.  First, we will define the checkers and generators for each type $\syntype$.

\begin{definition}[Checker for Core]
  \label{def:checker_core}
  \ \par
  \begin{enumerate}
      \item \mkche{\tint}{\expr} = $\expr$ $\tttilde$ \tint
      \item \mkche{\tbool}{\expr} = $\expr$ $\tttilde$ \tbool
      \item $\mkche{\mkfun{\syntype_1}{\syntype_2}}{\expr}$ = 
      \lstinline|if $\expr\ \tttilde$ fun then let arg = |\mkgen{$\syntype_1$}\ \lstinline| in |\mkche{$\syntype_2$}{\lstinline|$\expr\ $ arg|} \lstinline| else false|
   \end{enumerate}
\end{definition}

\begin{definition}[Generator for Core]
  \label{def:generator_core}
  \ \par 
  \begin{enumerate}
    \item \mkgen{\tint} = \lstinline|pick_i|
    \item \mkgen{\tbool} = \lstinline|pick_b|
    \item \mkgen{$\mkfun{\syntype_1}{\syntype_2}$} = \lstinline|fun x -> $\ $if| \pickb \ \lstinline|then|
   \lstinline|if| $\mkche{\syntype_1}{\texttt{x}}$ \lstinline|then| $\mkgen{\syntype_2}$ \lstinline|else ERROR|
    
    \hspace*{1.75in}\lstinline|else| $\mkgen{\syntype_2}$
  \end{enumerate}
\end{definition}

Note that for a fixed type $\syntype$, the checker or generator is simply an expression in the core language.  Intuitively, checkers are functions that determine whether an expression has a declared type and return \lstinline|false| if the checking fails, and generators are expressions that produce arbitrary values from the specified type.

The base cases are straightforward: the checker performs typecasing on the expression, and the generator is the nondeterministic \lstinline|pick| for the corresponding type.
Function types are more interesting. To check a function type, we first generate an argument value of the input type and then check whether the application result has the output type. 
A complete enumeration is obtained via the nondeterminism in the generators.

There are two things of note regarding the generator definition. First, the generator checks whether the given argument has the correct input type, but only non-deterministically. The check is not always performed because for the case that the argument is a function, it might be that the argument is unused but diverges upon calling (e.g. the argument is \lstinline|fun x -> $\ \Omega \ \Omega$|), and we do not want the argument check to trigger a divergence that otherwise would not arise. The nondeterministic check ensures that the system will be able to catch both use errors and type errors that are only discoverable if the diverging argument is never invoked. Second, if the argument check passes or is not performed, the generator will produce an arbitrary value in the output type. As a result, the outputs of our generated functions are independent of their inputs. This implies that function generators only aim to capture the type correctness aspect of  functions. Essentially, for each run of a program containing a concrete function value of the specified type, if we replace this function with a generated function, there will be some set of specific choices for each nondeterministic \lstinline|pick| that will result in the same observable program behavior as the concrete function.

Finally, we can define the type checking relation, {\sc tc}.  This definition is straightforward: an expression has type $\syntype$ if and only if the checker for $\syntype$ at $\expr$ returns at most \lstinline|true|. 

\begin{definition}[Type Checker for Core]
  \label{def:tc_core}
  $\tc{\expr}{\syntype}$ iff \mkche{$\syntype$}{\expr} $\nsmallsteps[] \eerror$ and $\forall \eval. $ if \mkche{$\syntype$}{\expr}$ \smallsteps[] \eval$, then $\eval = \gttrue$.
\end{definition}

It is easy to show that this relation is co-r.e. given this definition.

\begin{restatable}{lemma}{tccore}
  \label{lem:tc_co_re}
  $\tc{\expr}{\syntype}$ is co-r.e.
\end{restatable}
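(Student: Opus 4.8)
The plan is to prove that the \emph{complement} of $\tc{\cdot}{\cdot}$ is recursively enumerable; by Definition~\ref{def:tc_core} that is exactly what it means for $\tc{\cdot}{\cdot}$ to be co-r.e. Unfolding that definition, $\tc{\expr}{\syntype}$ \emph{fails} precisely when there is a \emph{finite} reduction sequence, under the semantics of Figure~\ref{fig:core_op_sem}, that starts from $\mkche{\syntype}{\expr}$ and ends either in $\eerror$ or in a value $\eval$ with $\eval \neq \gttrue$. This is a $\Sigma_1$ statement, so the task is just to repackage it as a semi-decision procedure that halts exactly on the non-type-checking pairs $(\expr, \syntype)$.

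I would carry this out in three steps. First, the map $(\expr, \syntype) \mapsto \mkche{\syntype}{\expr}$ is effectively computable: Definitions~\ref{def:checker_core} and~\ref{def:generator_core} recurse structurally on the type $\syntype$, which ranges over the finite type grammar of Figure~\ref{fig:core_grammar}, so the unfolding always terminates and yields a concrete (closed, when $\expr$ is closed) core expression. Second, the single-step relation $\smallstep[]$ is effectively enumerable from any starting expression: given a closed expression one can compute its decomposition into a reduction context and a redex and then list that redex's successors, the only multiplicity coming from $\picki$ and $\pickb$, whose successors enumerate $\mathbb{Z}$ and $\mathbb{B}$ respectively; hence one can enumerate, by dovetailing, all finite reduction sequences issuing from a fixed expression. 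Third, assemble the semi-decider: set $e_0 := \mkche{\syntype}{\expr}$, enumerate all finite sequences $e_0 \smallstep[] \cdots \smallstep[] e_k$, and for each one test the decidable predicate ``$e_k = \eerror$, or $e_k$ is a value with $e_k \neq \gttrue$''; accept as soon as some sequence passes the test. This procedure halts iff a witnessing sequence exists, iff $\tc{\expr}{\syntype}$ fails, so $\tc{\cdot}{\cdot}$ is co-r.e.

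The lemma is not deep — the paper itself flags it as easy — and the one place that wants a little care is the second step: because $\picki$ has a successor for every integer, the reduction tree rooted at $e_0$ is infinitely branching, so ``all reduction sequences of length at most $k$'' is not a finite set and cannot be searched naively. The standard remedy is to interleave the length bound with an enumeration of the nondeterministic choices taken at each $\picki$/$\pickb$ node (enumerate codes $\langle k, c\rangle$ where $c$ records the picks along the path), which still exhausts every finite reduction path. It is also worth remarking in passing that $\tc{\cdot}{\cdot}$ is presumably not decidable: deciding membership would require certifying that $\mkche{\syntype}{\expr}$ never reaches a bad terminal state, which subsumes deciding divergence — but the lemma asks only for co-r.e., and the construction above delivers exactly that.
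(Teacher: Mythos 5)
Your proof is correct and follows essentially the same route as the paper: dovetail the (nondeterministic) evaluation of $\mkche{\syntype}{\expr}$ and report any run reaching \lstinline|ERROR| or a non-\lstinline|true| value, so that the complement of $\tc{\expr}{\syntype}$ is semi-decided. You are in fact somewhat more careful than the paper's one-paragraph argument, since you explicitly handle the infinite branching introduced by \lstinline|pick_i| by interleaving the path-length bound with an enumeration of the choices made at each pick, a point the paper leaves implicit in the word ``dovetail.''
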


\begin{proof}
  To show a relation is co-r.e.~we need to show all counterexamples can be exhaustively enumerated.  For this relation we dovetail enumeration of all $\expr,\syntype$ pairs with the computation of Definition~\ref{def:tc_core}, reporting each \lstinline|ERROR| or \lstinline|false| case as a failure of the relation as it arises.  Since all failures must terminate individually, the dovetailing will in the limit enumerate all such cases.
\end{proof}

\subsection{Soundness and Completeness}

We now show that the two typing definitions above are equivalent by proving the following theorem. This theorem establishes that our $\tc{\expr}{\syntype}$ definition fully captures the ``typedness'' property dictated by the semantic typing model and thus can be safely used as the basis for type refutation.

\begin{theorem}[Soundness and Completeness]
  \label{thm:sound_and_complete}$\forall \expr, \syntype.$ $\tc{\expr}{\syntype}$ iff $\models \hastype{\expr}{\syntype}$.
\end{theorem}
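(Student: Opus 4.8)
The plan is to prove the biconditional in Theorem~\ref{thm:sound_and_complete} by structural induction on the type $\syntype$, simultaneously establishing both directions (soundness: $\tc{\expr}{\syntype} \Rightarrow\ \models \hastype{\expr}{\syntype}$; completeness: $\models \hastype{\expr}{\syntype} \Rightarrow \tc{\expr}{\syntype}$). The key technical observation that makes the induction go through is a characterization of how $\mkche{\syntype}{\expr}$ behaves: it should evaluate to $\gttrue$ (on some run) exactly when $\expr$ runs to a value of the right shape whose components satisfy the corresponding semantic typing, and it should be able to reach $\eerror$ exactly when $\expr$ can reach $\eerror$ or produces a value violating $\syntype$. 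I would isolate this as a lemma relating the reduction behavior of $\mkche{\syntype}{\expr}$ to that of $\expr$ — essentially that checkers are ``faithful'' to semantic typing — and prove it by the same induction.

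For the base cases $\tint$ and $\tbool$: here $\mkche{\tint}{\expr} = \expr\ \tttilde\ \tint$, and by the reduction context rules together with \ruleref{Pattern} and \ruleref{Err}, the evaluations of $\expr\ \tttilde\ \tint$ are in bijection with those of $\expr$ — every value outcome $\eval$ of $\expr$ becomes $\gttrue$ if $\eval \in \mathbb{Z}$ and $\gtfalse$ otherwise, every $\eerror$ outcome of $\expr$ propagates, and divergence is preserved. Comparing Definition~\ref{def:tc_core} with Definition~\ref{def:lr_core}(1) then gives the equivalence directly. The function case is where the real work lies. Unfolding Definition~\ref{def:checker_core}(3), $\mkche{\mkfun{\syntype_1}{\syntype_2}}{\expr}$ first evaluates $\expr$; if it reaches $\eerror$ both sides fail via \ruleref{Err}, if it diverges both sides are fine, and if it reaches a non-function value the checker returns $\gtfalse$ while $\models \hastype{\expr}{\mkfun{\syntype_1}{\syntype_2}}$ fails by the shape requirement on $\eval_f$. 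When $\expr \smallsteps[] \eval_f$ with $\eval_f$ a function value, the checker proceeds to \lstinline|let arg = $\mkgen{\syntype_1}$ in $\mkche{\syntype_2}{\eval_f\ \mathtt{arg}}$|, and we must connect the nondeterministic choices in $\mkgen{\syntype_1}$ with the universal quantifier ``$\forall \eval.$ if $\models \hastype{\eval}{\syntype_1}$'' in Definition~\ref{def:lr_core}(3).

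The main obstacle, as the authors themselves flag in the overview, is the completeness direction at function types: given $\not\models \hastype{\expr}{\mkfun{\syntype_1}{\syntype_2}}$, we need the checker to actually be able to witness this, which requires $\mkgen{\syntype_1}$ to be able to produce — or at least to \emph{simulate the relevant behavior of} — the particular offending argument $\eval$ with $\models \hastype{\eval}{\syntype_1}$ on which $\eval_f\ \eval$ misbehaves. The generated elements of $\syntype_1$ are far from all of $\syntype_1$ (e.g. for a higher-order $\syntype_1$, generated functions ignore their inputs and return arbitrary outputs), so one cannot literally reproduce $\eval$. The right move is a \emph{simulation lemma}: for any closed $\eval$ with $\models \hastype{\eval}{\syntype_1}$ and any finite reduction sequence of a program context using $\eval$, there is a choice of the nondeterministic picks inside $\mkgen{\syntype_1}$ making the generated value induce the same observable sequence of reductions (in particular, reaching $\eerror$ iff the original does). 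I would prove this simulation lemma by induction on the type together with an inner induction on the length of the reduction sequence, tracking the finitely many call sites at which the generated function must ``replay'' the outputs the concrete $\eval$ would have produced; the nondeterministic (rather than unconditional) argument check in Definition~\ref{def:generator_core}(3) is exactly what lets the replay avoid spuriously diverging or erroring on arguments the concrete function never actually inspects. With this lemma in hand, soundness at function types follows from the converse reading (any behavior of a generated argument is matched by \emph{some} genuine $\eval \in \syntype_1$, which requires the dual claim that generated values are themselves semantically well-typed), and the induction closes.
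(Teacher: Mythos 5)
Your proposal is correct and follows essentially the same route as the paper: the paper likewise splits the theorem into (i) a completeness lemma resting on the facts that generators are semantically well-typed and that a checker reaching $\eerror$ or $\gtfalse$ refutes the semantic typing, and (ii) a soundness lemma whose core is exactly your simulation lemma --- if $\models \hastype{\eval}{\syntype}$ and $\exprcon[\eval] \smallsteps[] \eerror$ then $\exprcon[\mkgen{\syntype}] \smallsteps[] \eerror$ --- proved by induction on the type with an inner induction on the length of the reduction, using a context-factorization apparatus to handle the uniform (``parametric'') steps. One small slip: in your final paragraph you label the hard direction ($\not\models \hastype{\expr}{\syntype}$ implies the checker can fail) as ``completeness,'' but by your own (and the paper's) convention that is the contrapositive of soundness; the mathematical content is otherwise aligned.
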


We will first establish completeness by proving the following more general lemma.  Note that proofs not given here are found in \inshort{Appendix~A of the supplementary material}\infull{Appendix~\ref{app_proofs}}.

  \begin{restatable}{lemma}{completenessextra}
  \label{lemma:completeness_extra}
  For all types $\syntype$,
  \begin{enumerate}
      \item $\models$ \mkgen{\syntype} $\colon$\syntype, and 
      \item $\forall \expr.$ if \mkche{\syntype}{\expr} $\smallsteps[] \eerror$ or if \mkche{\syntype}{\expr}$\smallsteps[] \gtfalse$, then $\not \models$ \expr $\colon$\syntype.
  \end{enumerate}
\end{restatable}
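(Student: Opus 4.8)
The plan is to prove both parts simultaneously by induction on the structure of the type $\syntype$, since the generator for a function type invokes checkers for its subtypes and vice versa, so the two statements are mutually dependent and both recursions bottom out at $\tint$ and $\tbool$. For the base cases $\tint$ and $\tbool$: part~(1) is immediate since $\mkgen{\tint} = \picki$ only steps to integers and never to $\eerror$ (and dually for $\tbool$), so $\models \hastype{\picki}{\tint}$ by Definition~\ref{def:lr_core}; part~(2) follows because $\mkche{\tint}{\expr} = \expr\ \tttilde\ \tint$, and by inspection of the \ruleref{Pattern} and \ruleref{Red}/\ruleref{Err} rules, if this expression reaches $\eerror$ then $\expr$ itself must reach $\eerror$ (so $\not\models \hastype{\expr}{\tint}$), and if it reaches $\gtfalse$ then $\expr$ reaches some value $\eval \notin \mathbb{Z}$ (again $\not\models \hastype{\expr}{\tint}$). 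These base cases are routine contextual-semantics bookkeeping.

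For the inductive step at $\syntype = \mkfun{\syntype_1}{\syntype_2}$, assume both parts hold for $\syntype_1$ and $\syntype_2$. For part~(1), I must show $\models \hastype{\mkgen{\mkfun{\syntype_1}{\syntype_2}}}{\mkfun{\syntype_1}{\syntype_2}}$. The generator is a function value $\mkfunv{x}{\cdots}$, so it does not step to $\eerror$ and the only value it steps to is itself; it remains to check that for every $\eval$ with $\models \hastype{\eval}{\syntype_1}$, applying the generator to $\eval$ yields something of type $\syntype_2$. Unfolding one step of \ruleref{Appl}, the body branches on $\pickb$: on the \vfalse{} branch it reduces directly to $\mkgen{\syntype_2}$, which has type $\syntype_2$ by the induction hypothesis part~(1); on the \vtrue{} branch it first evaluates $\mkche{\syntype_1}{x}$ with $x$ bound to $\eval$. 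Here I need that $\models \hastype{\eval}{\syntype_1}$ prevents this checker from reaching $\eerror$ or $\gtfalse$ — this is exactly the contrapositive of induction hypothesis part~(2) — so the checker can only diverge or return $\gttrue$; if it returns $\gttrue$ we again land on $\mkgen{\syntype_2}$, and if it diverges the whole application diverges, which is consistent with membership in $\syntype_2$. In all cases $\models \hastype{\mkgen{\mkfun{\syntype_1}{\syntype_2}}\ \eval}{\syntype_2}$, giving part~(1).

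For part~(2) at the function type, suppose $\mkche{\mkfun{\syntype_1}{\syntype_2}}{\expr}$ reaches $\eerror$ or $\gtfalse$; I must exhibit a witness that $\not\models \hastype{\expr}{\mkfun{\syntype_1}{\syntype_2}}$. Unfolding the checker, it first evaluates $\expr\ \tttilde\ \pfun$: if this reaches $\eerror$, then $\expr$ reaches $\eerror$ and we are done; if it reaches $\gtfalse$, the checker returns $\gtfalse$ and $\expr$ has reached a non-function value, again refuting the type. Otherwise $\expr$ reaches a function value $\eval_f$, we bind $\texttt{arg}$ to $\mkgen{\syntype_1}$, and evaluate $\mkche{\syntype_2}{\expr\ \texttt{arg}}$. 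Since $\mkgen{\syntype_1}$ has type $\syntype_1$ by induction hypothesis part~(1), \emph{some} reduction of it produces a value $\eval$ with $\models \hastype{\eval}{\syntype_1}$; picking that reduction path, if $\mkche{\syntype_2}{\eval_f\ \eval}$ reaches $\eerror$ or $\gtfalse$ then by induction hypothesis part~(2) we get $\not\models \hastype{\eval_f\ \eval}{\syntype_2}$, and since $\expr \smallsteps[] \eval_f$ with $\models \hastype{\eval}{\syntype_1}$ but $\eval_f\ \eval$ not of type $\syntype_2$, Definition~\ref{def:lr_core} gives $\not\models \hastype{\expr}{\mkfun{\syntype_1}{\syntype_2}}$.

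The main obstacle is the interaction between nondeterminism and the semantic typing definition in part~(2): because $\smallsteps[]$ is nondeterministic, "the checker reaches $\eerror$ or $\gtfalse$" asserts the existence of \emph{one} bad run, and I need to thread that particular run's nondeterministic choices (including the $\picki$/$\pickb$ choices inside nested generators and the choice inside $\mkgen{\syntype_1}$) into a single coherent evaluation that both produces an honest $\syntype_1$-element as the argument and drives the $\syntype_2$-checker to failure. Care is needed to argue these choices can be made independently — that the subreduction of $\mkgen{\syntype_1}$ to an honest value does not conflict with the subreduction witnessing the checker's failure — which requires a small confluence-style or context-decomposition argument about how the contextual semantics interleaves the independent redexes. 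A secondary subtlety is that a generator or checker can legitimately diverge (e.g. when the argument is a non-terminating unused function), so throughout I must phrase claims as "does not reach $\eerror$ and every value reached is ..." rather than "evaluates to a value," matching the shape of Definition~\ref{def:lr_core}.
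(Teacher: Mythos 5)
Your proposal is correct and follows essentially the same route as the paper's proof: a simultaneous induction on the type (the paper measures size rather than raw structure, which coincides here), with the identical case split on the $\pickb$ branches for part~(1), the same use of the contrapositive of part~(2) to rule out checker failure on well-typed arguments, and the same witness extraction via $\mkgen{\syntype_1}$ for part~(2). Your extra care in resolving $\mkgen{\syntype_1}$ to a concrete value before invoking the function-type clause of Definition~\ref{def:lr_core} is slightly more precise than the paper's wording (which uses the generator expression directly as the witness), and the sequencing forced by the let-desugaring already resolves the independence concern you raise, so no separate confluence argument is needed.
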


\begin{restatable}[Completeness]{lemma}{completeness}
  \label{lemma:completeness}$\forall \expr,\syntype.$ if $\models \hastype{\expr}{\syntype}$, then $\tc{\expr}{\syntype}$.
\end{restatable}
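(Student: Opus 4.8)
The plan is to derive Completeness almost directly from part~(2) of Lemma~\ref{lemma:completeness_extra}, after first recording one easy structural fact: every checker is \emph{Boolean-valued}, i.e.\ by induction on $\syntype$ one shows that whenever $\mkche{\syntype}{\expr'}\smallsteps[]\eval$ then $\eval\in\mathbb{B}$. The $\tint$ and $\tbool$ cases bottom out in a \ruleref{Pattern} redex, and the function case of Definition~\ref{def:checker_core} terminates either at the literal $\gtfalse$ of its \texttt{else} branch or, after the argument generator converges and its value is substituted for \texttt{arg}, at $\mkche{\syntype_2}{\cdot}$, to which the induction hypothesis applies (using that the checker/generator metafunctions commute with substitution into their hole, modulo freshness of their internal binders). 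Granting this, Completeness follows: assuming $\models\hastype{\expr}{\syntype}$, the contrapositive of Lemma~\ref{lemma:completeness_extra}(2) gives $\mkche{\syntype}{\expr}\nsmallsteps[]\eerror$ and $\mkche{\syntype}{\expr}\nsmallsteps[]\gtfalse$; together with Boolean-valuedness this forces every value reachable from $\mkche{\syntype}{\expr}$ to be $\gttrue$, so $\tc{\expr}{\syntype}$ by Definition~\ref{def:tc_core}.

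So the real work is Lemma~\ref{lemma:completeness_extra} itself, which I would prove by a single induction on the structure of $\syntype$, establishing (1) and (2) \emph{simultaneously} --- the function case of (1) calls on (2) at the domain type and vice versa, so the two must be bundled into one induction hypothesis. Two routine auxiliary facts are used throughout. \emph{Forward preservation}: if $\models\hastype{\expr'}{\syntype'}$ and $\expr'\smallstep[]\expr''$ then $\models\hastype{\expr''}{\syntype'}$, which is immediate from Definition~\ref{def:lr_core} since every execution of $\expr''$ extends to one of $\expr'$, so $\expr''$ yields no $\eerror$ and no reachable value that $\expr'$ does not already. \emph{Inhabitation}: every $\syntype'$ has a closed value $\eval$ with $\models\hastype{\eval}{\syntype'}$ --- trivial for base types, and for a function type the generator is itself a value of that type by part~(1). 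The base cases of both parts are direct: $\mkgen{\tint}=\picki$ steps only (via \ruleref{Nondet-int}) to integers, never to $\eerror$; dually $\mkche{\tint}{\expr'}=\matches{\expr'}{\tint}$ can reach $\eerror$ or $\gtfalse$ only if $\expr'$ reaches $\eerror$ or a non-integer value, both forbidden by $\models\hastype{\expr'}{\tint}$. The $\tbool$ cases are symmetric.

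For $\syntype=\mkfun{\syntype_1}{\syntype_2}$: in part~(1), $\mkgen{\syntype}$ is already a value (so never $\eerror$, and reduces only to itself), so it suffices to apply it to an arbitrary $\eval$ with $\models\hastype{\eval}{\syntype_1}$ and show the result has type $\syntype_2$; stepping through the \pickb\ guard the application goes either straight to $\mkgen{\syntype_2}$ (typed by the IH on (1)) or to $\ife{\mkche{\syntype_1}{\eval}}{\mkgen{\syntype_2}}{\eerror}$, where the IH on (2) for $\syntype_1$ plus Boolean-valuedness tells us $\mkche{\syntype_1}{\eval}$ never reaches $\eerror$ or $\gtfalse$, so this \texttt{if} either diverges harmlessly or reduces to $\mkgen{\syntype_2}$. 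In part~(2), unfolding $\mkche{\syntype}{\expr}$ the scrutinee $\matches{\expr}{\gtfun}$ either diverges --- whereupon the whole checker diverges, producing neither $\eerror$ nor $\gtfalse$ --- or $\expr$ reaches a value $\eval_f$; choosing (by inhabitation) some value $\eval$ of type $\syntype_1$, $\models\hastype{\expr}{\syntype}$ yields $\models\hastype{\eval_f\ \eval}{\syntype_2}$, which forces $\eval_f$ to be a function value (else it would immediately step to $\eerror$), so $\matches{\expr}{\gtfun}\smallsteps[]\gttrue$ and the \texttt{else} $\gtfalse$ branch is unreachable; the checker then reaches $\mkche{\syntype_2}{\expr\ \eval_{\mathrm{arg}}}$ for some value $\eval_{\mathrm{arg}}$ produced by $\mkgen{\syntype_1}$, and since $\models\hastype{\eval_{\mathrm{arg}}}{\syntype_1}$ (part~(1) and forward preservation) we get $\models\hastype{\expr\ \eval_{\mathrm{arg}}}{\syntype_2}$ from $\models\hastype{\expr}{\syntype}$, so the IH on (2) at $\syntype_2$ closes the case.

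The main obstacle will be exactly this last case, part~(2) at a function type. The checker there has three separate sources of non-termination --- the typecase on $\expr$, the argument generator $\mkgen{\syntype_1}$, and the recursive sub-check $\mkche{\syntype_2}{\cdot}$ --- and in each I must argue that divergence is benign, producing neither $\eerror$ nor a stray $\gtfalse$; in particular one has to show the \texttt{else} $\gtfalse$ branch is genuinely dead whenever $\models\hastype{\expr}{\syntype}$, which hinges on the observation that semantic well-typedness at a function type already prevents $\expr$ from ever reducing to a non-function value. Threading these side conditions through while exploiting the mutual dependence of parts (1) and (2) is the delicate part; the rest is bookkeeping over Definitions~\ref{def:lr_core}, \ref{def:checker_core} and \ref{def:generator_core}.
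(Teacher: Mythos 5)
Your proposal is correct and takes essentially the same route as the paper: Completeness is read off from the contrapositive of Lemma~\ref{lemma:completeness_extra}(2) together with the observation that checkers only ever evaluate to \lstinline|ERROR| or Booleans, and your supporting sketch of Lemma~\ref{lemma:completeness_extra} (simultaneous induction on $\syntype$, with parts (1) and (2) feeding each other in the function case) matches the paper's appendix proof. Your explicit treatment of inhabitation and forward preservation makes precise steps the paper leaves implicit, but it is the same argument.
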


\begin{proof}
  This is equivalent to showing: if \mkche{\syntype}{\expr} $\smallsteps[] \eerror$ or if $\exists\eval. \mkche{\syntype}{\expr} \smallsteps[] \eval$ and $\eval \neq \gttrue$, then $\not\models \hastype{\expr}{\syntype}$. By examining Definition~\ref{def:checker_core}, we can see that \textsc{checker} can only return \lstinline|ERROR| or boolean values, making the completeness statement follow immediately from Lemma~\ref{lemma:completeness_extra}.
\end{proof}

Soundness is considerably more challenging than completeness: we need to show that the generators indeed exhaustively simulate all correctly-typed concrete function arguments.  This requires some special notation to factor a computation into the uniform parts between the actual/simulated run and the holes which are filled by either actual or simulated run. The full definition of additional notations can be found in \inshort{Appendix~A of the supplementary material}\infull{Appendix~\ref{app_proofs}}.  With these definitions we can establish the following lemma:

\begin{restatable}{lemma}{soundnessextra}
  \label{lemma:soundness_extra}
  For all types $\syntype$, 
  \begin{enumerate}
    \item $\forall \eval.$ if $\models\hastype{\eval}{\syntype}$, then $\forall\exprcon.$ if $\exprcon[\eval] \smallsteps[] \eerror$, then $\exprcon[\mkgen{\syntype}] \smallsteps[] \eerror$.
    \item $\forall\expr.$ if $\expr \smallsteps[] \eval$ and $\not\models\hastype{\eval}{\syntype}$, then $\neg\tc{\expr}{\syntype}$.
  \end{enumerate}
\end{restatable}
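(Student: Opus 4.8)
The plan is to prove statements~(1) and~(2) of Lemma~\ref{lemma:soundness_extra} together by structural induction on the type $\syntype$. One such induction suffices because the function-type case of~(2) at $\mkfun{\syntype_1}{\syntype_2}$ invokes~(1) at the smaller type $\syntype_1$ and~(2) at the smaller type $\syntype_2$, whereas the function-type case of~(1) at $\mkfun{\syntype_1}{\syntype_2}$ invokes~(1) at $\syntype_2$ and~(2) at $\syntype_1$ (together with an internal induction on reduction length). Three facts are used throughout: $\mkche{\syntype}{\cdot}$ is strict in its argument --- the argument occupies a reduction position --- so $\mkche{\syntype}{\expr}\smallsteps[]\mkche{\syntype}{\eval}$ whenever $\expr\smallsteps[]\eval$ and $\mkche{\syntype}{\expr}\smallsteps[]\eerror$ whenever $\expr\smallsteps[]\eerror$; by inspection of Definition~\ref{def:checker_core} every terminating run of $\mkche{\syntype}{\cdot}$ yields $\gttrue$, $\gtfalse$, or $\eerror$, so $\neg\tc{\expr}{\syntype}$ holds exactly when some run of $\mkche{\syntype}{\expr}$ reaches $\eerror$ or $\gtfalse$; and, unfolding Definition~\ref{def:lr_core}, $\not\models\hastype{\expr}{\syntype}$ iff $\expr\smallsteps[]\eerror$ or $\expr\smallsteps[]\eval$ for some value $\eval$ with $\not\models\hastype{\eval}{\syntype}$. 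The cases $\syntype\in\{\tint,\tbool\}$ are immediate: for~(1), $\mkgen{\tint}=\picki$ can step to any integer, so any $\eerror$ obtained by placing a concrete integer in a context is obtained by placing $\picki$ there; for~(2), $\expr\smallsteps[]\eval$ with $\eval\notin\mathbb{Z}$ gives $\mkche{\tint}{\expr}\smallsteps[]\matches{\eval}{\tint}\smallsteps[]\gtfalse$.

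For~(2) at $\mkfun{\syntype_1}{\syntype_2}$, suppose $\expr\smallsteps[]\eval$ and $\not\models\hastype{\eval}{\mkfun{\syntype_1}{\syntype_2}}$; by strictness it suffices to make $\mkche{\mkfun{\syntype_1}{\syntype_2}}{\eval}$ reach $\eerror$ or $\gtfalse$. If $\eval$ is not a function value the function-value test $\matches{\eval}{\pfun}$ fails and the checker returns $\gtfalse$; so assume $\eval$ is a function value. Unfolding Definition~\ref{def:lr_core} gives a value $\eval^\ast$ with $\models\hastype{\eval^\ast}{\syntype_1}$ and $\not\models\hastype{\eval\ \eval^\ast}{\syntype_2}$, whence by the dichotomy ($\eerror$-propagation in one case, statement~(2) at $\syntype_2$ applied to $\eval\ \eval^\ast$ in the other) some run of $\mkche{\syntype_2}{\eval\ \eval^\ast}$ reaches an $o\in\{\eerror,\gtfalse\}$. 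To transport this to the \emph{generated} argument, fix a closed $\Omega$ with $\Omega\Uparrow$ and set $\exprcon\defeq\ife{\mkche{\syntype_2}{\eval\ \bullet}}{\Omega}{\eerror}$, with $\bullet$ the hole. Then $\exprcon[\eval^\ast]$ has a run to $\eerror$ (propagation if $o=\eerror$, and \textsc{If-False} then the literal $\eerror$ if $o=\gtfalse$), so~(1) at $\syntype_1$ gives $\exprcon[\mkgen{\syntype_1}]\smallsteps[]\eerror$. Along any such run the condition $\mkche{\syntype_2}{\eval\ \mkgen{\syntype_1}}$ must itself reach $\eerror$ or $\gtfalse$, since reaching $\gttrue$ would send the run into $\Omega$ and make it diverge; and since $\mkgen{\syntype_1}$ sits in a reduction position and always reduces to a value, there is a value $\eval_g$ of $\mkgen{\syntype_1}$ with $\mkche{\syntype_2}{\eval\ \eval_g}$ reaching $\eerror$ or $\gtfalse$. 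Passing $\eval_g$ through the \texttt{let}-binding of the generated argument in $\mkche{\mkfun{\syntype_1}{\syntype_2}}{\eval}$ makes this checker reach $\eerror$ or $\gtfalse$, i.e.\ $\neg\tc{\eval}{\mkfun{\syntype_1}{\syntype_2}}$.

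Statement~(1) at $\mkfun{\syntype_1}{\syntype_2}$ is the core of the argument: given $\models\hastype{\eval}{\mkfun{\syntype_1}{\syntype_2}}$ and $\exprcon[\eval]\smallsteps[]\eerror$, produce a run $\exprcon[\mkgen{\mkfun{\syntype_1}{\syntype_2}}]\smallsteps[]\eerror$. This is where multi-hole context notation is needed: one factors a reduction sequence into a hole-free skeleton plus a list of occurrences of the value being simulated, and argues by induction on the length of the reducing sequence, the structural induction on $\syntype$ supplying what is needed whenever a tracked copy of $\eval$ is invoked. After resolving any base-type generators in the holes to the concrete values they carry in the real run, each step not invoking a tracked copy of $\eval$ is replayed verbatim. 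The crucial step is an application $\eval\ a\smallstep[]\substitute{\expr_f}{a}{\ev}$ of a tracked $\eval=\mkfunv{\ev}{\expr_f}$ to a value $a$; on the simulated argument $a'$ the generated function steps to $\ife{\pickb}{\ife{\mkche{\syntype_1}{a'}}{\mkgen{\syntype_2}}{\eerror}}{\mkgen{\syntype_2}}$, and we choose $\pickb$ according to whether $\models\hastype{a}{\syntype_1}$. If so, then $\models\hastype{\eval\ a}{\syntype_2}$, so --- the real run being finite --- $\eval\ a$ cannot error and reaches a value $\bar w$ with $\models\hastype{\bar w}{\syntype_2}$; take $\pickb=\gtfalse$, produce a fresh $\mkgen{\syntype_2}$ where the real run has $\bar w$, and invoke~(1) at $\syntype_2$ on the strictly shorter remainder (adding $\bar w$ to the tracked values). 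If not, then~(2) at $\syntype_1$ gives that $\mkche{\syntype_1}{a}$ reaches $\eerror$ or $\gtfalse$; take $\pickb=\gttrue$ and force the if-expression into $\eerror$, transferring this fact from $a$ to $a'$ by the same $\Omega$-context device as in~(2) plus a sub-use of the simulation on the values filling $a$. The nondeterminism of $\picki$ and $\pickb$ is precisely what lets the generated function track the concrete arguments and the concrete branch choices, while the $\eerror$ and $\gtfalse$ branches never interfere because they lie off the chosen paths.

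I expect the function-type case of~(1) to be the main obstacle, for two reasons. First, the multi-hole bookkeeping has to be set up so that ``the remainder of the run'' after a tracked application is genuinely a shorter computation over semantically-typed values and so that the simulated run mirrors the real run's skeleton step for step --- this is the role of the ``special notation'' mentioned in the excerpt. Second, one must see that the induction stays well-founded even though a tracked function of type $\mkfun{\syntype_1}{\syntype_2}$ may be applied to an argument $a$ containing copies of that same function, so that the sub-use of the simulation inside $\mkche{\syntype_1}{a}$ is \emph{not} at a smaller type; the resolution is that $a$ is nonetheless \emph{checked} against the structurally smaller $\syntype_1$, so the checker's own recursion terminates, the self-reference being confined to the opaque values in the holes, which the simulation replaces wholesale by their generators. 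Finally, the single-hole form of~(1) as printed is slightly too weak to carry the induction --- one needs its multi-hole generalization, and one uses the $\Omega$-context observation to reduce reproduction of $\gtfalse$-outcomes to reproduction of $\eerror$-outcomes --- but these are the only strengthenings required.
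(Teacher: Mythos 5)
Your overall strategy coincides with the paper's: a simultaneous induction on the size of $\syntype$ for clauses (1) and (2), with clause (1)'s function case handled by an inner induction on the length of the reduction $\exprcon[\eval]\smallsteps[]\eerror$, a factorization of each step into a multi-hole skeleton plus a redex context, verbatim replay of ``parametric'' steps, and the $\ife{\mkche{\syntype_2}{\eval\ \circ}}{\cdots}{\eerror}$ context trick to push checker failures through clause (1). Clause (2) and the base cases match the paper essentially exactly.

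There is, however, a concrete flaw in the crucial application case of clause (1). You branch on whether the \emph{real} argument $a=\evalcon_2[\eval]$ satisfies $\models\hastype{a}{\syntype_1}$, whereas what the generated function actually inspects is the \emph{simulated} argument $a'=\evalcon_2[\eval_g]$, and these can have different semantic types (the holes in $\evalcon_2$ are filled with $\eval$ in one and with $\mkgen{\mkfun{\syntype_1}{\syntype_2}}$ in the other). In your sub-case $\not\models\hastype{a}{\syntype_1}$ you set $\pickb=\gttrue$ and try to force $\mkche{\syntype_1}{a'}$ to fail by ``transferring'' the failure of $\mkche{\syntype_1}{a}$ to $a'$ via another use of the simulation. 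This breaks down twice: first, if $a'$ happens to satisfy $\models\hastype{a'}{\syntype_1}$ the checker on $a'$ need not fail at all, so the transferred conclusion is simply false; second, even when both are ill-typed, the proposed transfer invokes the simulation at the \emph{same} type $\mkfun{\syntype_1}{\syntype_2}$ on the fresh computation $\mkche{\syntype_1}{a}$, which is not a suffix of the original run, so neither the type-size measure nor the run-length measure decreases --- this is exactly the circularity you flag, and confining the self-reference to ``opaque values in the holes'' does not discharge it. The paper avoids all of this by casing directly on $\models\hastype{\evalcon_2[\eval_g]}{\syntype_1}$: in the negative case, clause (2) at the strictly smaller type $\syntype_1$ applies \emph{directly} to $\evalcon_2[\eval_g]$, so $\mkche{\syntype_1}{\evalcon_2[\eval_g]}$ reaches $\gtfalse$ or $\eerror$ and $\pickb=\gttrue$ yields $\eerror$ with no transfer step; in the positive case, the hybrid application $\eval\ \evalcon_2[\eval_g]$ of the real function to the simulated argument inhabits $\syntype_2$, the inner induction hypothesis handles the (shorter) continuation, and clause (1) at $\syntype_2$ replaces the result by $\mkgen{\syntype_2}$. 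With that repair your argument aligns with the paper's; as written, the key case does not go through.
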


The $\exprcon$'s in this Lemma are standard expression contexts. Informally, expression contexts are expressions with holes in which expressions can be placed. Their formal definition as well as the proof of this Lemma can be found in \inshort{Appendix~A of the supplementary material}\infull{Appendix~\ref{app_proofs}}. With this Lemma we then may establish soundness as a direct corollary.

\begin{restatable}[Soundness]{lemma}{soundness}
  \label{lemma:soundness}
  $\forall \expr, \syntype.$ if $\tc{\expr}{\syntype}$, then $\models \hastype{\expr}{\syntype}$.
\end{restatable}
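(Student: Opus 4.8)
The plan is to derive soundness directly from Lemma~\ref{lemma:soundnessextra}, part (2), after first establishing part (2) of that lemma from part (1) and the structure of the checkers. Suppose $\tc{\expr}{\syntype}$ holds; by Definition~\ref{def:tc_core} this means $\mkche{\syntype}{\expr} \nsmallsteps[] \eerror$ and every value it reduces to is $\gttrue$. We want $\models \hastype{\expr}{\syntype}$. Taking the contrapositive of Lemma~\ref{lemma:soundnessextra}(2): if $\not\models\hastype{\expr}{\syntype}$, then there exists $\eval$ with $\expr \smallsteps[] \eval$ and $\not\models\hastype{\eval}{\syntype}$ — \emph{but this is exactly part (2) of Lemma~\ref{lemma:soundnessextra}} once we observe that failure of $\models \hastype{\expr}{\syntype}$ must come either from $\expr \smallsteps[] \eerror$ (easily handled, since then $\mkche{\syntype}{\expr}$ propagates the \lstinline|ERROR| through its evaluation context and contradicts $\tc{\expr}{\syntype}$) or from $\expr$ reducing to some value $\eval$ not semantically in $\syntype$. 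So the real content is Lemma~\ref{lemma:soundnessextra}(2), and Lemma~\ref{lemma:soundness} is the promised direct corollary.

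For Lemma~\ref{lemma:soundnessextra}(2) itself I would argue by induction on the structure of $\syntype$, using part (1) as the crucial tool for the function case. For $\syntype \in \{\tint, \tbool\}$: if $\expr \smallsteps[] \eval$ with $\eval$ not an integer (resp.\ boolean), then $\mkche{\syntype}{\expr}$, which is $\expr\ \tttilde\ \syntype$, reduces (filling the pattern-match reduction context with the reduction $\expr \smallsteps[] \eval$) to $\matches{\eval}{p}$, which steps to $\gtfalse$ by \ruleref{Pattern}; hence $\neg\tc{\expr}{\syntype}$. For the function case $\syntype = \mkfun{\syntype_1}{\syntype_2}$: suppose $\expr \smallsteps[] \eval_f$ with $\not\models \hastype{\eval_f}{\mkfun{\syntype_1}{\syntype_2}}$. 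Since $\eval_f$ is a value reached by reduction, the only way it can fail the function type is that it is a function value but $\exists \eval$ with $\models \hastype{\eval}{\syntype_1}$ and $\not\models\hastype{\eval_f\ \eval}{\syntype_2}$ (if $\eval_f$ is not a function value the checker's \lstinline|$\expr\ \tttilde$ fun| branch returns \lstinline|false| directly). Unfold $\mkche{\mkfun{\syntype_1}{\syntype_2}}{\expr}$: it first confirms $\expr\ \tttilde\ \pfun$, then binds \lstinline|arg| to $\mkgen{\syntype_1}$, then evaluates $\mkche{\syntype_2}{\expr\ \texttt{arg}}$. Here I use Lemma~\ref{lemma:completenessextra}(1) — actually part (1) of the \emph{soundness} extra lemma is what we need in the other direction — to make the generator $\mkgen{\syntype_1}$ simulate the particular $\eval$ witnessing the failure: this is the step that requires the context-factoring machinery, because we must show that \emph{some} nondeterministic run of $\mkgen{\syntype_1}$ behaves exactly like $\eval$ along the control-flow path $\eval_f\ \eval$ takes to produce its bad result. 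Concretely, apply Lemma~\ref{lemma:soundnessextra}(1) with the context $\exprcon \defeq \mkche{\syntype_2}{\expr\ \bullet}$ (or more carefully an expression context corresponding to the checker's body with the argument position as the hole): from $\not\models\hastype{\eval_f\ \eval}{\syntype_2}$ and the induction hypothesis at $\syntype_2$, $\mkche{\syntype_2}{\expr\ \eval}$ can reach \lstinline|ERROR| or a non-\lstinline|true| value; then Lemma~\ref{lemma:soundnessextra}(1) transports this to $\mkche{\syntype_2}{\expr\ \mkgen{\syntype_1}}$, giving $\neg\tc{\expr}{\syntype}$.

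The main obstacle is the function case, and specifically the use of Lemma~\ref{lemma:soundnessextra}(1) — the statement that a well-typed concrete value $\eval$ can be replaced by the generator $\mkgen{\syntype}$ in any context without losing reachability of \lstinline|ERROR|. This is exactly the ``generators exhaustively simulate all correctly-typed concrete arguments'' claim flagged as subtle in the overview: one has to show that for the finitely many input-output pairs the concrete function $\eval$ exercises along the erroring run, the generated function's nondeterministic \lstinline|$\pickb$| and \lstinline|pick| choices can be resolved to reproduce precisely those pairs, and that the generator's occasional (nondeterministic) argument check does not spuriously derail the simulation — which is why the check is only performed nondeterministically. Formalizing this requires the context-factoring notation deferred to the appendix: decomposing a reduction sequence into the parts common to the real and simulated runs and the ``holes'' where actual-vs-generated values diverge, then doing an induction on that decomposition. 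The plus-one subtlety is that $\eval$ may itself be higher-order, so the simulation is mutually recursive with part (2) of the lemma at smaller type structure, and one has to set up the induction (on type structure, with the reduction-length argument nested inside) carefully so the appeals to the induction hypothesis are well-founded. Everything above Lemma~\ref{lemma:soundnessextra} — the corollary step — is then routine.
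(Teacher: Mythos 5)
Your proposal matches the paper's proof of Lemma~\ref{lemma:soundness} exactly: take the contrapositive, dispose of the case $\expr \smallsteps[] \eerror$ by observing that the checker propagates the error through its reduction context, and reduce the remaining case (where $\expr \smallsteps[] \eval$ with $\not\models\hastype{\eval}{\syntype}$) to clause~(2) of Lemma~\ref{lemma:soundness_extra}. Your supplementary sketch of that auxiliary lemma also tracks the paper's appendix argument (induction on type size, with clause~(1) supplying the generator-simulates-argument step in the function case), the only refinement being that the paper applies clause~(1) not to $\mkche{\syntype_2}{\eval_f\ \circ}$ itself but to the context \lstinline|if| $\mkche{\syntype_2}{\eval_f\ \circ}$ \lstinline|then 1 else ERROR|, so that a \lstinline|false| outcome is first converted into reachability of \lstinline|ERROR|, which is the only thing clause~(1) transports.
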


So, we finally have both soundness and completeness.
\begin{proof}[Proof of Theorem \ref{thm:sound_and_complete}]
  The forward implication follows from Lemma~\ref{lemma:completeness} and the reverse from Lemma~\ref{lemma:soundness}.
\end{proof}

Given the equivalence of the two definitions, we can obtain co-r.e.-ness of the semantic typing relation as a corollary.

\begin{corollary}
  The $\models \hastype{\expr}{\syntype}$ relation is co-r.e.
\end{corollary}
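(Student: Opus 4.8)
The plan is to derive this immediately from the soundness-and-completeness theorem together with the co-r.e.-ness of the type checker. Recall that a relation being co-r.e.\ means its complement is recursively enumerable, i.e.\ the set of pairs $(\expr,\syntype)$ for which the relation \emph{fails} can be effectively enumerated. By Theorem~\ref{thm:sound_and_complete}, $\models\hastype{\expr}{\syntype}$ holds exactly when $\tc{\expr}{\syntype}$ holds, so the two relations, viewed as sets of pairs, have the same complement. Since Lemma~\ref{lem:tc_co_re} establishes that $\tc{\expr}{\syntype}$ is co-r.e., its complement is r.e.; being literally the same set, the complement of $\models\hastype{\expr}{\syntype}$ is r.e.\ as well, and hence $\models\hastype{\expr}{\syntype}$ is co-r.e.

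I would emphasize that routing the argument through the equivalence is what makes it clean. A direct attempt to enumerate counterexamples to $\models\hastype{\expr}{\syntype}$ straight from Definition~\ref{def:lr_core} runs into the nested universal quantifier over the infinite-domain semantic type in the function-type clause (``$\forall\eval.$ if $\models\hastype{\eval}{\syntype_1}$ then $\models\hastype{\eval_f\ \eval}{\syntype_2}$''), so it is not a priori obvious that a refutation of a semantic typing is always finitely witnessed. The content of Theorem~\ref{thm:sound_and_complete} is precisely that the checker-and-generator embedding repackages every such refutation as a finite, terminating computation reaching \eerror{} or \gtfalse{}, which is exactly what the dovetailing argument behind Lemma~\ref{lem:tc_co_re} exploits.

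There is essentially no obstacle here beyond having the theorem and lemma in hand; the only subtlety worth flagging is the one just noted, namely that the corollary is \emph{not} trivial from the raw semantic-typing definition and genuinely relies on the equivalence with $\tc{\expr}{\syntype}$. Accordingly, the proof is a one-liner: combine Theorem~\ref{thm:sound_and_complete} with Lemma~\ref{lem:tc_co_re}.
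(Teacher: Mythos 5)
Your proof is correct and is exactly the paper's argument: the corollary is obtained immediately by combining Theorem~\ref{thm:sound_and_complete} with Lemma~\ref{lem:tc_co_re}, since equivalent relations have the same (r.e.) complement. Your remark that a direct enumeration of counterexamples from Definition~\ref{def:lr_core} is blocked by the universal quantifier in the function-type clause is also precisely the observation the paper makes after stating the corollary.
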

\begin{proof}
  Immediate by combining Theorem \ref{thm:sound_and_complete} and Lemma \ref{lem:tc_co_re}.
\end{proof}

This property would not be straightforward to prove from the semantic typing definition alone: e.g. for the definition of an expression \lstinline|g| having type \lstinline|(int ->$\ $int)$\ $->$\ $int| we would need to enumerate all functions \lstinline|f| with $\models$ \lstinline|f $\colon$int -> int|, but co-r.e.-ness only provides a refuter for such a judgment.  Fortunately, Theorem \ref{thm:sound_and_complete} shows semantic typing to be isomorphic to our checker, which is co-r.e. via Lemma \ref{lem:tc_co_re}, so the result immediately follows.  A nontrivial compactness property is thus hiding in Theorem \ref{thm:sound_and_complete}; in particular, the nondeterminism of our generator provides that compactness.
\section{Sound Language Extensions}
\label{sec_sound_extension}

In the last section, we showed that type checking using generators and checkers is equivalent to semantic typing for integer, boolean, and function types. 
In this section, we will demonstrate how this technique can be extended to refinement types, dependent function types, and parametric polymorphism.  We will establish soundness for these extensions using similar techniques as was used for the core language of the previous section.  This section does not include all of the extensions in our implementation; in Section \ref{sec_additional_extension}, we define several other extensions we have implemented but have not yet proven sound.

\begin{wrapfigure}{r}{.5\textwidth}
  \begin{grammar}
    \grule[expressions]{\expr}{
      \ldots
      \gor \mzero
    }
    \grule[types]{\syntype}{
      \ldots
      \gor \mktset{\syntype}{\expr}
      \gor \mkdfun{\ev}{\syntype}{\syntype}
    }
  \end{grammar}
  \caption{Extended Grammar with Refinement and Dependent Function Types}
  \label{fig:grammar_refinement_dependent}
\end{wrapfigure}

\subsection{Refinement and Dependent Function Types}

To introduce refinement and dependent function types into the system, we need to extend the language grammar as is shown in Figure~\ref{fig:grammar_refinement_dependent}.

\begin{wrapfigure}{r}{.25\textwidth}
  \begin{minipage}{0in} 
      \begin{mathpar}
            \bbrule{Mzero}{
          \expr \smallstep[] \mzero
        }{
          R\lbbr \expr \rbbr \smallstep[] \mzero
        }
    \end{mathpar}
  \end{minipage}
  \caption{Additional Operational Semantics Rules for $\mzero$}
  \label{fig:mzero_op_sem} 
\end{wrapfigure}

The new \lstinline|mzero| expression defined in Figure~\ref{fig:mzero_op_sem} is a dual to \lstinline|ERROR|. Instead of indicating the presence of a runtime error caused by the user, it signifies that the system itself has made a mistake, and this particular execution is invalid and thus can be safely discarded.  \lstinline|mzero| plays a similar role to \lstinline|assume| in other programming languages.  It is in fact equivalent to divergence, so \lstinline|mzero| can take on any type in our theory just like a diverging program can. We can use \lstinline|mzero| instead of an acually-diverging computation to immediately detect the divergence.

Now we will present semantic typing relations for refinement and dependent function types:

\begin{definition}[Semantic Typing for Refinement and Dependent Function Types] We extend Definition~\ref{def:lr_core} with the following clauses:
  \label{def:lr_refinement_dependent}
  \ \par
  \begin{enumerate}
    \setcounter{enumi}{3}
    \item $\ \models \hastype{\expr}{\mktset{\syntype}{\expr_p}}$ iff $\expr \nsmallsteps[] \eerror$, $\models \hastype{\expr_p}{\mkfun{\syntype}{\tbool}}$, and $\forall \eval.$ if $\expr \smallsteps[] \eval$, then $\ \models \hastype{\eval}{\syntype}$ and $\forall \eval_p.$ if $\expr_p \ \eval \smallsteps[] \eval_p$, then $\eval_p = \gttrue$.
    
    \item $\ \models \hastype{\expr}{\mkdfun{\ev}{\syntype_1}{\syntype_2}}$ iff $\expr \nsmallsteps[] \eerror$, and $\forall{\eval_f}.$ if $\expr \smallsteps[] \eval_f$, then $\forall \eval.$ if $\models \hastype{\eval}{\syntype_1}$, then $\models \hastype{\eval_f\ \eval}{\substitute{\syntype_2}{\eval}{\ev}}$.
  \end{enumerate}
\end{definition}

This extended definition is unsurprising: for refinement types, we require the expression to both be in the base type $\syntype$ as well as passing the predicate $p$. For dependent function types, the definition is very similar to that of regular function types, where the only difference is that the argument value is substituted into the output type. In this version of the theory, we also restrict the predicate functions in refinement types to be total, deterministic functions for simplicity (while it may be sound to include such predicates, it complicates the proofs).


We also need to extend Definitions~\ref{def:checker_core} and \ref{def:generator_core} with the following new clauses, respectively.

\begin{definition}[Type Checker with Refinement and Dependent Function Types] 
  \label{def:checker_refinement_and_dependent}
  \ \par 
  \begin{enumerate}
    \setcounter{enumi}{3}
    \item \mkche{$\mktset{\syntype}{\expr_p}$}{\expr} = \mkche{\syntype}{\expr}\ \ \lstinline| and | ($\expr_p$ \ \expr)
    \item \mkche{$\mkdfun{\ev}{\syntype_1}{\syntype_2}$}{\expr} = \\\lstinline|if $\expr\ \tttilde$ fun then let arg =|\ \mkgen{$\syntype_1$}\ \lstinline|in| \mkche{\substitute{$\syntype_2$}{\lstinline|arg|}{\ev}}{\lstinline|($\expr\ $ arg)|}\ \lstinline| else false|
  \end{enumerate}
\end{definition}

Notably, in the refinement type case above, the checker is defined by performing a logical \lstinline|and| on the result of the base type check and the predicate check.

\begin{definition}[Type Generator with Refinement and Dependent Function Types]
  \label{def:generator_refinement_and_dependent}
  \ \par 
  \begin{enumerate}
    \setcounter{enumi}{3}
    \item \mkgen{$\mktset{\syntype}{\expr_p}$} = \lstinline|let gend = | \mkgen{\syntype} \lstinline| in if ($\expr_p$ gend) then gend else mzero|
    
    \item \mkgen{$\mkdfun{\ev}{\syntype_1}{\syntype_2}$} = \lstinline|fun $\ev'$ ->$\ $if| \pickb \ \lstinline|then|
    
    \quad \lstinline|if| $\mkche{\syntype_1}{\ev'}$ \lstinline|then| $\mkgen{\substitute{\syntype_2}{\ev'}{\ev}}$ \lstinline|else ERROR| 
    
    \lstinline|else| $\mkgen{\substitute{\syntype_2}{\ev'}{\ev}}$
  \end{enumerate}
\end{definition}

For refinement types, the generator will first produce an arbitrary value in the base type, and then it will check whether this value satisfies the predicate. If the predicate check fails then we return \lstinline|mzero|, which signals that this particular execution is invalid. The generator for dependent function types is similar to that of non-dependent function types--the only change needed is to make sure that the argument value can be used in the output type, $\syntype_2$.

We will prove the extended theory sound and complete for a combined system that incorporates refinement and dependent function types in the next section.

\subsection{Parametric Polymorphism}

This section shows how to extend the system with parametric polymorphism.  We only type check prenex polymorphism because full higher-order polymorphism is beyond co-r.e. complexity. We follow OCaml convention and leave the quantifiers implicit in the syntax here.

%
\begin{wrapfigure}{r}{.4\textwidth}
  \begin{grammar}
          \grule[expressions]{\expr}{
            \ldots
            \gor \expr \simeq \tpoly
          }
          \grule[values]{\eval}{
            \ldots
            \gor \texttt{\small V(}\alpha\texttt{\small)}
          }
          \grule[type variables]{\tpoly}{
            $\lstinline`'a`\ $
            \gor $\lstinline`'b`\ $
            \gor \ldots
          }
          \grule[types]{\syntype}{
            \ldots
            \gor \tpoly
        }
      \end{grammar}
  \caption{Extended Grammar for Parametric Polymorphism}
  \label{fig:para_poly_grammar}
\end{wrapfigure}

Intuitively, since parametric functions must work uniformly for inputs of arbitrary type, we enforce uniformity by using singleton ``untouchable'' placeholder values \lstinline`V($\alpha$)`, which are essentially black boxes that can only be passed around but not examined or operated upon. For each type variable $\alpha$, the singleton member is \lstinline`V($\alpha$)`. 

\begin{wrapfigure}{r}{.4\textwidth}
  \begin{minipage}{0in} 
      \begin{mathpar}
            \bbrule{Poly-Check-True}{
          \eval = \texttt{V}\ttop\alpha\ttcp
        }{
          \eval \simeq \alpha \smallstep[] \vtrue
        }
  
        \bbrule{Poly-Check-False}{
          \eval \neq \texttt{V}\ttop\alpha\ttcp
        }{
          \eval \simeq \alpha \smallstep[] \vfalse
        }
  
        \bbrule{Opaque Pattern}{
        }{
          \matches{\texttt{V}\ttop\alpha\ttcp}{p} \smallstep[] \eerror
        }
    \end{mathpar}
  \end{minipage}
  \caption{Additional Operational Semantics Rules for Polymorphism}
    \label{fig:para_poly_op_sem} 
  \end{wrapfigure}

These untouchable values serve only for type checking and should not be found in user code.  To verify parametricity, we need to make sure the correct singleton that went in came out; this correspondence will be checked by the new expression, $\expr \simeq \alpha$ in Figure~\ref{fig:para_poly_grammar}, which also requires the addition of the operatational semantics rules in Figure \ref{fig:para_poly_op_sem} to the rules of Figure \ref{fig:core_op_sem}.

The approach we take here bears some resemblance to the \emph{sealing} used in dynamic contracts to preserve parametricity \cite{ParametricContracts}.  In a dynamic contract system, the instantiating types are concrete: for example the identity function is never type checked without being instantiated at a particular concrete type. Such an approach is not viable in our modular, statically-checked approach: to refute that a function $f$ has a type such as $\forall \alpha. \alpha \rightarrow \alpha$ would require finding a particular concrete type $\tau$ such that $f$ fails to have the type $\tau \rightarrow \tau$, and that would require an enumeration of all types.  For this reason, we abstract it further to an untouchable singleton.

With these new definitions in place, we are ready to define the semantic typing relation for parametric polymorphism.

\begin{definition}[Semantic Typing for Parametric Polymorphism] We extend Definition~\ref{def:lr_core} with the following clause:
  \label{def:lr_poly}
  \ \par
  \begin{enumerate}
    \setcounter{enumi}{3}
    \item $\ \models \hastype{\expr}{\tpoly}$ iff $\expr \nsmallsteps[] \eerror$ and $\forall \eval.$ if $\expr \smallsteps[] \eval$, then $\eval \simeq \alpha \smallsteps[] \vtrue$.
  \end{enumerate}
\end{definition}

Consider refuting that the function \lstinline!f = $\ $fun x -> if x ~ int then x + 1 else x! has type \lstinline`'a ->$\ $'a`. According to Definition~\ref{def:lr_core}, we have to check whether for any value in the input type, \lstinline`'a`, the application result will also be in the output type. That is to say, if we pass an untouched value into a polymorphic function and get the same value in return, then we know that the function behavior must be independent of the input's actual type. Since \lstinline`'a` has only one unique value, \lstinline`V('a)`, we only need to check whether \lstinline`f$\ $V('a) $\smallsteps[]$ V('a)`.  However, by the \ruleref{Opaque Pattern} rule of Figure \ref{fig:para_poly_op_sem}, \lstinline`V('a) $\ \tttilde$ int` returns \lstinline|ERROR| and so the overall result will be \lstinline|ERROR|, and the typing fails.  Note that the program is in fact ``fine'' in the sense that it will not generate any runtime errors, but it violates the required parametricity of polymorphism.  An alternative definition of semantic typing for polymorphic types is that the typing holds for all (infinite) possibilities of concrete type instantiations of \lstinline|'a|, and under that notion of semantic typing, this program would be type-correct.  Unfortunately the infinite nature of the above assertion makes refutation challenging: all concrete types must be enumerated.

Now, we will provide the checker and generator definitions for polymorphic types.

\begin{definition}[Type Checker with Parametric Polymorphism] We extend Definition~\ref{def:checker_core} with the following clause:
  \ \par 
  \label{def:checker_poly}
  \begin{enumerate}
    \setcounter{enumi}{3}
    \item $\mkche{\tpoly}{\expr} = \expr \simeq \tpoly$
  \end{enumerate}
\end{definition}

\begin{definition}[Type Generator with Parametric Polymorphism] We extend Definition~\ref{def:generator_core} with the following clause:
  \ \par 
  \label{def:gen_poly}
  \begin{enumerate}
    \setcounter{enumi}{3}
    \item $\mkgen{\tpoly} = $ \lstinline`V($\alpha$)`
  \end{enumerate}
\end{definition}

The definitions match the intuitions discussed earlier, where the checker will only pass if the expression is the corresponding unique untouched value, and the generator in turn produces that unique value for each polymorphic type variable. 

To avoid redundancy and to demonstrate viability of multiple feature interactions, we prove equivalence between semantic typing and type checking for the core language extended with refinement, dependent function types and parametric polymorphism by establishing the following theorem:

\begin{theorem}[Soundness and Completeness of Extended System]
  \label{thm:sound_and_complete_ext}For all types $\syntype$ defined in Definitions~\ref{def:lr_core}, \ref{def:lr_refinement_dependent}, and \ref{def:lr_poly}, $\forall \expr.$ $\tc{\expr}{\syntype}$ iff $\models \hastype{\expr}{\syntype}$.
\end{theorem}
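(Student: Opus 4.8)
The plan is to reuse the architecture of the proof of Theorem~\ref{thm:sound_and_complete} essentially verbatim: restate Lemmas~\ref{lemma:completeness_extra} and~\ref{lemma:soundness_extra} over the enlarged type grammar --- now also ranging over $\mktset{\syntype}{\expr_p}$, $\mkdfun{\ev}{\syntype_1}{\syntype_2}$, and $\tpoly$ --- prove each by induction on the type, and obtain Theorem~\ref{thm:sound_and_complete_ext} just as Theorem~\ref{thm:sound_and_complete} was obtained, i.e.\ the forward direction from the extended Lemma~\ref{lemma:soundness} and the backward direction from the extended Lemma~\ref{lemma:completeness}. The backward direction still uses that a checker can only reduce to $\eerror$ or a boolean value; this remains true for the new clauses, since the conjunction in the refinement checker of Definition~\ref{def:checker_refinement_and_dependent} reduces to $\eerror$ when handed a non-boolean, and $\expr \simeq \tpoly$ always reduces to a boolean unless $\expr$ first reaches $\eerror$.

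Before handling the new cases I would fix the induction metric. Dependent function types make $\substitute{\syntype_2}{\eval}{\ev}$ appear in the recursive clauses of Definitions~\ref{def:checker_refinement_and_dependent} and~\ref{def:generator_refinement_and_dependent} and in the lemma statements, so the induction cannot be on the raw syntactic size of the type. Instead I would induct on the number of type constructors --- the ``type skeleton'' --- which is unaffected by substituting a term-level value for a term variable, so that $\substitute{\syntype_2}{\eval}{\ev}$ is strictly smaller than $\mkdfun{\ev}{\syntype_1}{\syntype_2}$. With this metric the $\tint$, $\tbool$, and arrow cases carry over unchanged, leaving three new cases.

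The type-variable case is immediate. By \ruleref{Poly-Check-True}/\ruleref{Poly-Check-False}, a closed value $\eval$ satisfies $\models \hastype{\eval}{\tpoly}$ precisely when $\eval$ is $\mkgen{\tpoly}$ itself; hence the generator is well-typed (part~1 of Lemma~\ref{lemma:completeness_extra}), a run of $\expr \simeq \tpoly$ that reaches $\eerror$ or $\gtfalse$ exhibits a reachable value $\neq \mkgen{\tpoly}$ and so refutes $\models \hastype{\expr}{\tpoly}$ (part~2), and in the simulation clause of Lemma~\ref{lemma:soundness_extra} the value being simulated already equals the generator, making the conclusion trivial. For refinement types the crucial observation is that $\mzero$ behaves like divergence --- by \ruleref{Mzero} it reaches neither $\eerror$ nor a value --- so any computation that steps to $\mzero$ vacuously inhabits every semantic type. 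Generator well-typedness for $\mktset{\syntype}{\expr_p}$ then follows from the base-type induction hypothesis together with the totality and determinism of $\expr_p$ and the standing well-formedness assumption $\models \hastype{\expr_p}{\mkfun{\syntype}{\tbool}}$: when the sampled base value passes $\expr_p$ it lies in $\syntype$ and satisfies the predicate, and otherwise the generator yields $\mzero$. For the simulation clause, given $\models \hastype{\eval}{\mktset{\syntype}{\expr_p}}$ (so $\eval \in \syntype$ and $\expr_p\ \eval \smallsteps[] \gttrue$) and $\exprcon[\eval] \smallsteps[] \eerror$, I would apply part~1 of the induction hypothesis at $\syntype$ to the context obtained by wrapping $\exprcon$ around the body of $\mkgen{\mktset{\syntype}{\expr_p}}$ (Definition~\ref{def:generator_refinement_and_dependent}) with its occurrence of $\mkgen{\syntype}$ replaced by the hole: filling this context with $\eval$ reduces --- because the predicate check on $\eval$ succeeds --- to $\exprcon[\eval]$ and hence to $\eerror$, while filling it with $\mkgen{\syntype}$ yields exactly $\exprcon[\mkgen{\mktset{\syntype}{\expr_p}}]$. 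The checker clause for refinement types is a routine case split on which conjunct of Definition~\ref{def:checker_refinement_and_dependent} fails, appealing to the base-type induction hypothesis for the first and the refinement clause of Definition~\ref{def:lr_refinement_dependent} for the second.

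The main obstacle is the dependent-function case of the simulation clause (part~1 of Lemma~\ref{lemma:soundness_extra}) --- precisely the point where, already for the non-dependent arrow, the context-factoring (``holes'') machinery of the core soundness proof is needed to show that a single nondeterministic generated function can reproduce the control flow of an arbitrary correctly-typed concrete function $\eval_f$. I would generalize that machinery so that the hole created by an application of $\eval_f$ to an argument $\eval_a \in \syntype_1$ is filled by the recursively-obtained simulator for the type $\substitute{\syntype_2}{\eval_a}{\ev}$ (which has a smaller skeleton), using $\models \hastype{\eval_f\ \eval_a}{\substitute{\syntype_2}{\eval_a}{\ev}}$ from the dependent clause of Definition~\ref{def:lr_refinement_dependent} and the fact that binding the generated function's parameter $\ev'$ to $\eval_a$ turns $\mkgen{\substitute{\syntype_2}{\ev'}{\ev}}$ into $\mkgen{\substitute{\syntype_2}{\eval_a}{\ev}}$. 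The nondeterministic argument-check branch in $\mkgen{\mkdfun{\ev}{\syntype_1}{\syntype_2}}$ is handled exactly as for the core arrow: an argument the concrete run never inspects is matched by taking the skip branch, while an ill-typed argument on which the concrete run errors is matched by steering the generated function's own $\mkche{\syntype_1}{\ev'}$ --- which by part~2 of the induction hypothesis at the smaller type $\syntype_1$ can reduce to $\gtfalse$ or $\eerror$ --- into $\eerror$. Once part~1 is in place, part~2 of Lemma~\ref{lemma:soundness_extra}, and then the extended Lemmas~\ref{lemma:soundness} and~\ref{lemma:completeness} and Theorem~\ref{thm:sound_and_complete_ext}, follow exactly as in the core development.
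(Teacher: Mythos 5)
Your proposal matches the paper's proof essentially step for step: the paper also inducts on a type-size measure together with a lemma that term-for-variable substitution does not increase that measure (your ``type skeleton'' observation), extends Lemmas~\ref{lemma:completeness_extra} and~\ref{lemma:soundness_extra} to the new type formers, dispatches $\tpoly$ via the singleton value, handles refinements via $\mzero$-as-divergence, and reuses the context-factoring/parametric-redex machinery for the dependent-arrow simulation before assembling the theorem from the two directions. The only divergence is cosmetic (you fold the two induction-hypothesis applications in the refinement simulation case into a single composite context, where the paper applies them separately), so this is the same proof.
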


The proof of this theorem can be found in \inshort{Appendix~A.2 of the supplementary material}\infull{Appendix~\ref{app_proofs_ext}}.

\section{Additional Extensions}
\label{sec_additional_extension}

In this section we describe additional extensions we have implemented.  Several of them will be very challenging to prove sound, so we have left the task of soundness for the extensions in this section to future work.  We first define variant and intersection types. Next, we define record types and show how to treat types themselves as expressions. With types as expressions, we also show how we can model recursive types.  Lastly, we show how subtyping on records can be incorporated.  For simplicity, we add each feature independently and do not address interactions between these extensions (the implementation, of course, must deal with such feature interactions).

\subsection{Variant and Intersection Types}

We now describe how variants and restricted intersection types can be added.  We originally thought that general unions and intersections would be the simplest and most elegant approach, but they proved surprisingly difficult to model.  Before getting into our solution, we briefly discuss why they are so difficult.

The primary problem arises with positive (covariant) unions, and negative (contravariant) intersections.  Consider a simple positive union type $\mkche{\syntype_1 \cup \syntype_2}{\expr}$.  This checker will need to see if $\expr$ passes \emph{either} $\syntype_1$'s \emph{or} $\syntype_2$'s checker.  Since we have no parallelism, we must arbitrarily start with one or the other checker.  We cannot simply sequence the checkers because the first checker could return \lstinline|ERROR|, and since we lack exception handling, the union typing has already been rejected before it could complete.  Rather than add exception handling, we elected to restrict unions to a tagged form only, i.e. variants.

If we supported general intersections occuring negatively, for example $\mkche{\mkfun{\syntype_1 \cap \syntype_2}{\tint}}{\expr}$, we would need to invoke $\fnstyle{Generate}(\syntype_1 \cap \syntype_2)$ for such a checker given how function checkers are defined.  Consider how this generator could be defined: the only general approach is to (1) arbitrarily generate an element of one of the two types and (2) check if it is in the other type and \lstinline|mzero| if not.  But again, as with positive unions, the issue is that a typing failure could be an \lstinline|ERROR| state, and there is no way in our current language to catch an \lstinline|ERROR| and turn it into an \lstinline|mzero|.  To address this issue, we will restrict intersections to the narrow case of the intersection of functions where the domains are each distinct variants. Note that we could have also simply disallowed intersections instead of allowing this narrow form, but since this form is useful for object-oriented programming with a variant dispatcher view of an object \cite{first-class-messages}, we elected to include it.

We provide the syntax for the variant and intersection type extensions in Figure~\ref{fig:grammar_variant_intersect}. We require that each clause $V_i$ be distinct in any variant type and in any intersection of functions.

Below are the semantic typing definitions for variant and intersection types.

\begin{wrapfigure}{r}{.62\textwidth}
  \begin{grammar}
    \grule[constructors]{V}{
      \textit{(identifiers)}
    }
    \grule[types]{\syntype}{
      \ldots
      \gor (\mkvariant{\syntype_1}{\syntype_n})
      \gline
      \gor \mkintersect{(\mkfun{(V_1\ \texttt{of}\ \syntype_{1})}{\syntype_{1}'})}{(\mkfun{(V_n\ \texttt{of}\ \syntype_{n})}{\syntype_{n}'})}
    }
  \end{grammar}
  \caption{Extended Grammar with Variant and Intersection Types}
  \label{fig:grammar_variant_intersect}
\end{wrapfigure}

\begin{definition}[Semantic Typing for Variant and Intersections] We extend Definition~\ref{def:lr_core} with the following clauses:
  \ \par
  \label{def:lr_variant_and_intersection}
  \begin{enumerate}
    \setcounter{enumi}{3}
    \item $\ \models \hastype{\expr}{\mkvariant{\syntype_1}{\syntype_n}}$ iff $\expr \nsmallsteps[] \eerror$, and $\forall \eval.$ if $\expr \smallsteps[] \eval$, then $\exists 1\leq i \leq n.\ \eval = V_i (\eval')$ and $\models \hastype{\eval'}{\syntype_i}$.
    \item $\ \models \hastype{\expr}{\mkintersect{\syntype_1}{\syntype_n}}$ iff $\expr \nsmallsteps[] \eerror$, and $\forall \eval.$ if $\expr \smallsteps[] \eval$, then $\forall 1\leq i \leq n. \models \hastype{\eval}{\syntype_i}$.
  \end{enumerate}
\end{definition}

The semantic typing definitions are standard: an expression is in a variant type if it matches one of the constructors and if its value has the corresponding type; an expression is in the intersection of some types if it is in all of these types. 

We will now extend the checker definition for variant and intersection types.

\begin{definition}[Type Checker for Variant and Intersection Types] We extend Definition~\ref{def:checker_core} with the following clauses:
  \label{def:checker_union}
  \begin{enumerate}
    \setcounter{enumi}{3}
    \item \mkche{$\mkvariant{\syntype_1}{\syntype_n}$}{\expr} = 
    \vskip -.04in\begin{lstlisting}
match $\expr$ with | $V_1(\eval_1)$ -> $\mkche{\syntype_1}{\eval_1}$ | $\cdots$ | $V_n(\eval_n)$ -> $\mkche{\syntype_n}{\eval_n}$
    \end{lstlisting}
    \item $\mkche{\mkintersect{(\mkfun{(V_1\ \texttt{of}\ \syntype_{1})}{\syntype_{1}'})}{(\mkfun{(V_n\ \texttt{of}\ \syntype_{n})}{\syntype_{n}'})}}{\expr} = $ \\\lstinline| let i = |$\ \picki$ \lstinline| in if i = $ $ 1 then |$\mkche{\mkfun{(V_1\ \texttt{of}\ \syntype_{1})}{\syntype_{1}'}}{\expr}$
    \\\lstinline| else ... else if i = $\ n$ then |$\mkche{\mkfun{(V_n\ \texttt{of}\ \syntype_{n})}{\syntype_{n}'}}{\expr}$
    \\\lstinline| else mzero|
  \end{enumerate}
\end{definition}

The above definition illustrates why we choose to include variants rather than unions: having distinct constructors means we can always safely determine which checker needs to be run on the given expression and thus circumvent the problem that the type checking fails due to a prematurely raised error from an incorrectly chosen checker function. 

Next, let us look at the updated generator definitions.

\begin{definition}[Type Generator with Variant and Intersection Types] We extend Definition~\ref{def:generator_core} with the following clauses:
  \ \par 
  \label{def:generator_variant_intersect}
  \begin{enumerate}
    \setcounter{enumi}{3}
    \item \mkgen{$\mkvariant{\syntype_1}{\syntype_n}$} = 
    
    \lstinline|if pick_b then |$V_1(\mkgen{\syntype_1})$ \ \lstinline| else if pick_b then ... else |$V_n(\mkgen{\syntype_n})$    
    \item \mkgen{$\mkintersect{(\mkfun{(V_1\ \texttt{of}\ \syntype_{1})}{\syntype_{1}'})}{(\mkfun{(V_n\ \texttt{of}\ \syntype_{n})}{\syntype_{n}'})}$} = 
    \begin{lstlisting}
fun $\ev$ -> match $\ev$ with
| $V_1(\eval_1)$ -> if $\pickb$ then 
    if $\mkche{\syntype_1}{\eval_1}$ then $\mkgen{\syntype_1'}$ else ERROR
  else
    $\mkgen{\syntype_1'}$
...
| $V_n(\eval_n)$ -> if $\pickb$ then 
    if $\mkche{\syntype_n}{\eval_1}$ then $\mkgen{\syntype_n'}$ else ERROR
  else
    $\mkgen{\syntype_n'}$
| _ -> ERROR
    \end{lstlisting}
    \end{enumerate}
\end{definition}

The generator for variant types is intuitive: it nondeterministically generates one of the potential constructors of the given variant type. The generator for intersections of variant argument functions can use which particular variant $x$ shows up to dispatch on which kind of function to generate.

\subsection{Record Types and Types as Expressions}

\begin{wrapfigure}{r}{.5\textwidth}
  \begin{grammar}
  \grule[values]{\eval}{
    \ldots
    \gor \ttob l = \eval; \ \ldots\ttcb
  }
  \grule[expressions]{\expr}{
    \ldots
    \gor \ttob l = \expr; \ \ldots\ttcb
    \gor \expr.l
    \gor \syntype
  }
  \grule[patterns]{p}{
    \ldots
    \gor \ttob l; \ \ldots \ttcb
  }
  \grule[types]{\syntype}{
    \tint
    \gor \tbool
    \gor \tfun
    \gor \ttob \hastype{l}{\syntype}; \ \ldots \ttcb
  }
\end{grammar}
\caption{Language Grammar with Type as Expressions}
\label{fig:type_as_expr_grammar} 
\end{wrapfigure}

For more advanced types, it is necessary to model functions from types to types.  Two examples are recursive types (which are fixed points of functions from type to type) and parametric types.  Our approach has been to embed typehood judgements in the language, and continuing in this spirit we will also embed type functions in the language itself.  This will be achieved by embedding the generators and checkers themselves: a type will now be modeled as a tuple of generator and checker expressions. A type function then maps such tuples to tuples.

Concretely, we will introduce records to the core language here to express tuples as well as other data structures.  For simplicity, we will only modify and extend the core grammar of Definition~\ref{fig:core_grammar} and leave out the previous extensions of this section. Record types and the operational semantics for record creation and projection are standard. 

Starting from this subsection, there will be no semantic type basis defined; we leave investigation of the subject to future work. 

Now we define the translation from syntactic type $\syntype$ to an expression, denoted as $\llbracket\syntype\rrbracket$.

\begin{definition}[Embedding Types as Expressions]
  \label{def:sem_of_types}
  \ \par
  \begin{enumerate}
    \item $\llbracket \tint \rrbracket$ = \lstinline`{ gen =$\ $fun _ ->  pick_i; check = fun e ->  e $\tttilde$ int }`
    \item $\llbracket \tbool \rrbracket$ = \lstinline`{ gen =$\ $fun _ ->  pick_b; check = fun e ->  e $\tttilde$ bool }`
    \item $\llbracket \mkfun{\syntype_1}{\syntype_2} \rrbracket$ = 
    \begin{lstlisting}[gobble=4]
    { gen = fun _ -> fun arg -> if pick_b then 
        if ($\llbracket\syntype_1\rrbracket$.check arg) then ($\llbracket\syntype_2\rrbracket$.gen 0) else ERROR else ($\llbracket\syntype_2\rrbracket$.gen 0);
      check = fun e ->  
        if e ~ fun then let arg = ($\llbracket\syntype_1\rrbracket$.gen 0) in ($\llbracket\syntype_2\rrbracket$.check) (e arg) 
          else false }
    \end{lstlisting}
    \item $\llbracket\ttob\hastype{l_1}{\syntype_1}; \ldots; \hastype{l_n}{\syntype_n}\ttcb\rrbracket$ = 
    \begin{lstlisting}[gobble=4]
      { gen = fun _ -> {$l_1$ = ($\llbracket\syntype_1\rrbracket$.gen 0); $\ldots$; $l_n$ = ($\llbracket\syntype_n\rrbracket$.gen 0)};
        check = fun e ->
          if e ~ {$l_1$; $\ldots$; $l_n$} then
            if $\llbracket\syntype_1\rrbracket$.check e.$l_1$ then $\ldots$
              if $\llbracket\syntype_n\rrbracket$.check e.$l_n$ then true else false
              $\ldots$
          else false }
      \end{lstlisting}
  \end{enumerate}
\end{definition}

The clauses for $\tint$, $\tbool$, and \lstinline|$\syntype_1$ -> $\syntype_2$| are faithful to their corresponding checkers and generators, Definitions~\ref{def:checker_core} and \ref{def:generator_core}. For record types, the generator  produces values in the type by invoking each label's corresponding generator. The checker first uses pattern matching to ensure that the expression indeed has the right record layout and then proceeds to check whether each label contains values of the correct types. 


With this embedding, we redefine $\tc{\expr}{\syntype}$ to be as follows.
\begin{definition}
  \label{def_typecheck_new}
  $\tc{\expr}{\syntype}$ iff $\llbracket \syntype \rrbracket$\lstinline|.check $\expr$| $\nsmallsteps[]$ \lstinline|ERROR| and $\forall \eval.$ if $\llbracket \syntype \rrbracket$\lstinline|.check $\expr$| $\smallsteps[] \eval$, then $\eval = \vtrue$. 
\end{definition}

We will now go over the addition of recursive types and will provide examples of type functions and parametrized types when we discuss the implementation in Section \ref{sec_impl}. 

  \begin{wrapfigure}{r}{.4\textwidth}%
    \begin{grammar}
    \grule[type variables]{\beta}{
      \textit{(identifiers)}
    }
      \grule[types]{\syntype}{
        \ldots
        \gor \beta
        \gor \mkmiu{\beta}{\syntype}
    }
  \end{grammar}
  \caption{Extended Grammar for Recursive Types}
  \label{fig:recursive_grammar}
\end{wrapfigure}

\subsection{Recursive Types}

The syntax for recursive types is defined in Figure~\ref{fig:recursive_grammar}.

Recursive types are well-known to present challenges in semantic typing, in particular for the case of contravariant recursion leading to non-monotonically-increasing types, and step-indexed logical relations were developed for this purpose \cite{StepIndexedLogicalRelations}.  Using types-as-expressions, a recursive type checker can be defined simply by using the Y-combinator to take a fixed-point of a type function; the open question of the soundness of this approach may require something like step-indexed logical relations to resolve.

\begin{definition}[Embedding Recursive Types] 
    \label{def:recursive_type_embedding}
    We extend 
    Definition~\ref{def:sem_of_types} with the following clauses:
  \begin{enumerate}
  \setcounter{enumi}{4}
    \item $\llbracket \beta \rrbracket$ = $\beta$
    \item $\llbracket \mkmiu{\beta}{\syntype} \rrbracket$ = 
    \begin{minipage}{.75\textwidth}
      \begin{lstlisting}[gobble=3]
    Y (fun self -> fun _ -> 
      { gen = fun _ -> (fun $\beta$ -> $\llbracket\syntype\rrbracket$.gen 0) (self 0);
        check = fun e -> (fun $\beta$ -> $\llbracket\syntype\rrbracket$.check e) (self 0) })
    \end{lstlisting}
  \end{minipage}
  \end{enumerate}
\end{definition}

\subsection{Record Subtyping}
\label{subsec:rec_and_subtyping}

The subtyping syntax is an extension of Figure~\ref{fig:type_as_expr_grammar} and is given in Figure~\ref{fig:record_grammar}. Like OCaml module types, our system treats subtyping as abstraction, where non-listed fields are fundamentally unobservable.

To model subtyping, record values now contain an extra piece of information: the \emph{declared labels} shown as a superscript \ttob$l_1; \ldots; l_m$\ttcb. This additional set of labels represent which labels can be safely accessed. By default, an untyped record value will contain all actual labels in the declared labels set. More details on the updated operational semantics can be found in \inshort{Figure~18 in Appendix~C of the supplementary material}\infull{Figure~\ref{fig:records_op_sem} in Appendix~\ref{app_record}}.

\begin{wrapfigure}{r}{.55\textwidth}
  \begin{grammar}
  \grule[redexes]{r}{
      \ldots
      \gor \eval.l
      \gor \retag \ttop \eval, \ttob l; \ldots \ttcb\ttcp
  }
  \grule[reduction contexts]{\reductioncon}{
      \ldots
      \gor R.l 
      \gor \retag \ttop \reductioncon, \ttob l; \ldots \ttcb\ttcp
      \gline
      \gor \ttob l = \eval; \ldots; l = \eval; l = \reductioncon; l = \expr; \ldots; l = \expr\ttcb
  }
  \grule[expressions]{\expr}{
    \ldots
    \gor \retag \ttop\expr, \ttob l; \ldots \ttcb\ttcp
  }
  \grule[values]{\eval}{
    \ldots
    \gor \ttob l = \eval; \ldots \ttcb^{\ttob l; \ldots \ttcb}
  }
\end{grammar}
\caption{Extended Grammar for Record Types and Subtyping}
\label{fig:record_grammar}
\end{wrapfigure}

Additionally, we introduce a new operation \retag\ into the language. This operation is used in the type-checking instrumentation only, meaning users will not be able to use it in their source code. \retag\ allows a record value to take on a new declared labels set, but only if the new set is a subset of the actual labels in the record. This operation is essential for ensuring soundness with respect to the semantic notion of subtyping-as-subsetting. The operational semantics rule for this new operation is shown in Figure~\ref{fig:retag_op_sem}. Additional rules pertaining to error cases can be found in \inshort{Figure~18 in Appendix~C in the supplementary material}\infull{Figure~\ref{fig:records_op_sem} in Appendix~\ref{app_record}}.

The declared labels set will always be a subset of all the actual labels in a record value. This invariance is guaranteed by the operational semantics: the only means by which a record value can change its declared labels is \retag, and the rule for \retag \ will reject any relabeling where the new set is not a subset of actual labels in the record. The reader can assume that any record value appearing in the operational semantics rules have this invariant holding.

The definition for \textsc{matches} also needs an update to accommodate subtyping. 

\begin{definition}We extend Definition~\ref{def:pattern_match} with the following clause:
  \begin{description}
    \item \textsc{matches}$(\eval, \ttob l_1; \ldots; l_m \ttcb) =   
    \begin{cases}
      \gttrue & \text{if}\ \eval = \ttob l_1 = \eval_1; \ldots; l_n = \eval_n \ttcb^{\ttob l_1; \ldots; l_k \ttcb} \text{ and } m \leq k\\ 
      \gtfalse & \text{if otherwise}
    \end{cases}$
  \end{description}
\end{definition}

Intuitively, this revised \textsc{matches} definition allows record values to match on record patterns that are ``less specific'' than their declared labels set. For example, \textsc{matches}$(\ttob l_1 = 1; l_2 = 2\ttcb^{\ttob l_1; l_2 \ttcb}, \ttob l_1 \ttcb)$ is $\gttrue$ because the pattern set $\ttob l_1 \ttcb$ strictly contains fewer labels than the declared labels set $\ttob l_1; l_2 \ttcb$.


\begin{figure}
  \begin{mathpar}
    
      \bbrule{Retag}{
        \eval = \ttob l_1 = \eval_1; \ldots; l_n = \eval_n \ttcb^{\ttob l'_1; \ldots; l'_k \ttcb} \\
        m \leq n
      }{
        \retag \ttop\eval, \ttob l_1; \ldots; l_m \ttcb\ttcp \smallstep[]  \ttob l_1 = \eval_1; \ldots; l_n = \eval_n \ttcb^{\ttob l_1; \ldots; l_m \ttcb}
      }
  
  \end{mathpar}
\caption{Operational Semantics Rule for Retag}
  \label{fig:retag_op_sem} 
\end{figure}
\section{Implementation}
\label{sec_impl}
In this section we describe the current status of our type checker implementation.  There are two components to the type checker: the front-end translator, which produces instrumented code as described in Sections \ref{sec_formalization} through \ref{sec_additional_extension}, and the back end, which takes an instrumented program and searches for inputs that produce \lstinline|ERROR|. 


It is worth noting that our type checking framework is not dependent on the choice of underlying back end. The translator produces programs for which all that is needed is a back end to discover integer input streams that lead these programs to runtime \lstinline|ERROR|.  Thus far, we have experimented with symbolic and concolic back-ends; potential additions for future work include property-based testing and abstract interpretation. In practice, a realistic implementation may need to use all of these approaches in unison because they each have trade-offs. 

\subsection{The Bluejay Language}
The implemented Bluejay language includes all the features defined in Sections \ref{sec_formalization} through \ref{sec_additional_extension} as well a built-in list type (polymorphic lists can be defined as a parameterized recursive type, but we include a built-in list type for efficiency). The language also includes syntactic sugar to make it easier for users to write programs, such as replacing $\expr \ \tttilde \ p$ with \lstinline`match $\expr$ with $p$ -> $\cdots$`. Bluejay syntax is similar to OCaml on the features they have in common. The full grammar can be found in \inshort{Figure~17}\infull{Figure~\ref{fig:bluejay_grammar}} of \inshort{Appendix~B in the supplementary material.}\infull{Appendix~\ref{app_bjy_syntax}.} Bluejay is untyped by default, but it allows users to selectively provide type annotations on expressions that they wish to statically type check.
    
\subsubsection*{Primitive Operations.}To align our implementation with the theory, we use type instrumentations to guard against misuse of primitive operations: each operation will first check to make sure the types of their operands are as expected, and if the check fails, the expression will evaluate to \lstinline|ERROR|. For example, the expression \lstinline|not e| will be transformed into \lstinline`match e with | bool ->$\ $not e | any ->$\ $ERROR` after the instrumentation. As with user-declared-type checking, we can use the back end to find whether any such \lstinline|ERROR| is reachable to detect primitive type errors.

\subsection{The Type Checking Process}

In this section, we give an overview of the process of type checking. As outlined above, the two steps are running the instrumentation front-end, followed by running the \lstinline|ERROR|-finding back end on the resulting instrumented code.

In the instrumentation step, we translate all declared types in a program into expressions along the lines of  Definition~\ref{def:sem_of_types}. We will then transform the type declarations into invocations on the checker expressions for the declared types.

Take the following very simple program as an example:

\begin{lstlisting}
  let id (x : bool) : bool = 1 in id
\end{lstlisting}

This essentially translates to checking the statement $\tc{\mbox{\tt \small id}}{\mbox{\tt \small bool -> bool}}$. By Definition~\ref{def_typecheck_new}, the type checking code will be:

\begin{lstlisting}
  let id x = 1 in let check_id = $\llbracket$bool -> bool$\rrbracket$.check id in if check_id then id else ERROR
\end{lstlisting}


\begin{figure}
  \begin{minipage}{0.49\textwidth}
    \begin{lstlisting}
      let id x = 1 in 
      let check_id = 
        let arg =  
         { check = ... ; 
           gen = fun _ -> input >= 0 }
        .gen 0  
        in
        { check = fun expr ->
            match expr with
            | bool -> true
            | any -> false
        ; gen = ...
        }.check (id arg)
      in if check_id then id else ERROR
    \end{lstlisting}
    \caption{Fully Transformed Simple Example}
    \label{fig:tc_full_code}
  \end{minipage}
  \begin{minipage}{0.49\textwidth}
    \begin{lstlisting}
      let length x =
        let rec loop l acc = 
            match l with
            | [] -> acc
            | hd :: tl -> loop tl (acc + 1)
        in loop x 0
      in 
      let rec prepend (type a) (x : list a) : 
        ((y : list a) -> 
         { list a | 
           (fun r -> (length r) == 
              (length x) + (length y)) }) 
      = fun y -> let rec loop l acc = 
            match l with
            | [] -> acc
            | hd :: tl -> 
                loop tl (hd :: (hd :: acc))
        in loop x y 
      in prepend
      \end{lstlisting}
      \caption{List Prepend Example}
      \label{fig:list_prepend}  
  \end{minipage}
  \end{figure}

If we expand $\llbracket$\lstinline|bool ->$\ $bool|$\rrbracket$ according to Definition~\ref{def:sem_of_types}, we obtain the fully transformed program, shown in Figure \ref{fig:tc_full_code}.
After the transformation, we run the back-end analysis engine to find if there is a viable path that can reach \lstinline|ERROR| from the top of the program. The back-end engines we experimented with will be covered below.


The back end automatically conducts searches for any input sequence leading to an \lstinline|ERROR|. There are three potential outcomes: (1) inconclusive, because the search times out; (2) conclusive and no errors found, meaning the analysis has exhausted all possible execution paths and can safely conclude that there are no type errors in the program; (3) conclusive and an error is found, where a valid execution trace is discovered. In the latter case, our type checker will report the type error to the user with information about the error location (i.e. which type declaration failed to type check), the expected type, and the actual type.  The output from the previous example is:

\bgroup\scriptsize
\begin{verbatim}
** Bluejay Type Errors **
- Found at clause : let id (x : bool) : bool = 1 in id
--------------------
* Value    : id
* Expected : (bool -> bool)
* Actual   : (bool -> int)
\end{verbatim}
\egroup

\subsection{Use Checking with Wrappers}
Recall that the semantic typing definition does not specify the function's behavior when given an argument that is not in the domain type; it clearly is asserting \emph{if} a value is in the function domain \emph{then} application produces a value in the codomain, but the definition is completly silent when used at values \emph{not} in its domain.  In other words, semantic typing only verifies if a typed function is \emph{defined} properly but is silent on whether it is \emph{used} properly.  For example, an identity function typed on integers but used on a boolean, \lstinline!let id x : int  = $ $ x in id true!, does not constitute an error with respect to our formalization of the meaning of types of the previous sections. Whether such programs should constitute type errors depends on how seriously the programmer wishes to take type interfaces; a strict view, and the one realized in all standard type systems, is that the interface must not be violated, and the above program should be a type error.

Fortunately, it is not hard to add an additional layer of checking to verify that semantically-typed functions must only be passed values that are in their domain types.  Such use checking has been thoroughly studied because it is a key component of contract checking \cite{Findler-Contracts,ParametricContracts,nguyen_tobin-hochstadt_vanhorn_2017}: contracts don't verify that a function typing is correct across its whole domain, contracts only verify correctness for specific uses of the function. We generally follow contracts in how uses are verified, so we did not formalize it; we will give an overview here of our implementation of use checking.  Note that the implementation has a command-line flag to let the programmer choose whether or not they want uses to be verified -- some programmers may wish to take a looser view of what type interfaces mean and forego use checking.

\subsubsection*{Core Language Wrappers.}

We will model use checking by taking each typed function (or value) definition and export only a wrapped version, which will check all of the arguments passed.  Here is the definition of the wrappers added to typed values of the core language.
\begin{definition}[Function Use Wrappers]
  \label{def:wrap_core}
  \ \par
  \begin{enumerate}
      \item \mkwrap{\tint}{\expr} = \expr
      \item \mkwrap{\tbool}{\expr} = \expr
      \item \mkwrap{$\mkfun{\syntype_1}{\syntype_2}$}{\expr} = \lstinline|fun x ->$\ $if| $\pickb$ \lstinline|then|
      
      \quad \lstinline|if| $\mkche{\syntype_1}{\texttt{x}}$ \lstinline|then| $\mkwrap{\syntype_2}{\expr \ \mkwrap{\syntype_1}{\texttt{x}}}$ \lstinline|else ERROR|\\
      \lstinline|else| $\mkwrap{\syntype_2}{\expr \ \mkwrap{\syntype_1}{\texttt{x}}}$ 
   \end{enumerate}
\end{definition}

Wrapping only checks function arguments and so is a no-op for integers and booleans because they are not functions.  Note that like the function type generator, the argument check in function wrappers is also performed non-deterministically to prevent triggering divergence in arguments that might prevent us from finding  errors.  The function case is recursive because the argument and return types may themselves be functions that need to have their uses verified; this recursive wrapping follows the methodology of dynamic contracts on higher-order functions \cite{Findler-Contracts}.
To show how this \fnstyle{Wrapper} function is used, we will give a small example of a source program and how it is wrapped.

\begin{lstlisting}
  let f : (int -> int) = fun x -> x + 1 in f true
\end{lstlisting}

The function itself will be typechecked using the principles of the previous section. Here we only focus on how uses are checked.  To check the \lstinline!f true! application we will export \lstinline!f! as a wrapped function so the application will check the argument type:

\begin{lstlisting}
  let _f : (int -> int) = fun x -> x + 1 in (* internal version of f *)
  let f = $\fnstyle{Wrapper}$(int -> int,_f) in (* wrapped version exported for use *)
  f true  (* argument will be checked against int via $\fnstyle{Checker}$(int,true) in $\fnstyle{Wrapper}$, so typing fails *)
\end{lstlisting}

All user-defined typed values will be so wrapped. We will now briefly describe how Definition \ref{def:wrap_core} is extended for the additional features in the implementation.

\subsubsection*{Wrapping of Dependent and Refinement Types.}
For this and other extensions, we will show what clauses are added to the definition of \fnstyle{Wrapper} of Definition \ref{def:wrap_core}.
\begin{definition}[Wrapper for Refinement and Dependent Function Types]
  \label{def:wrap_ref_and_dep}
  \ \par 
  \begin{enumerate}
    \setcounter{enumi}{3}
    \item $\mkwrap{\mktset{\syntype}{\expr_p}}{\expr} = \mkwrap{\syntype}{\expr}$
    \item $\mkwrap{\mkdfun{\ev}{\syntype_1}{\syntype_2}}{\expr} = $ \lstinline|fun $\ev'$ ->$\ $if| $\pickb$ \lstinline|then|
      
    \quad \lstinline|if| $\mkche{\syntype_1}{\ev'}$ \lstinline|then| $\mkwrap{\substitute{\syntype_2}{\ev'}{\ev}}{\expr \ \mkwrap{\syntype_1}{\ev'}}$ \lstinline|else ERROR|\\
    \lstinline|else| $\mkwrap{\substitute{\syntype_2}{\ev'}{\ev}}{\expr \ \mkwrap{\syntype_1}{\ev'}}$ 
  \end{enumerate}
\end{definition}

\subsubsection*{Wrapping for Variant and Intersection Types.}
Variant wrapping is straightforward; the subtle case is for intersections.  Recall that intersections appearing immediately in function domains are restricted to be intersections of functions taking distinct variant arguments.  This restriction in turn allows for a natural definition of wrapping: simply case on the variant passed in and constrain it to its underlying type.
\begin{definition}[Wrapper for Variant and Intersection Types]
  \label{def:wrap_variant_intersect}
  \ \par 
  \begin{enumerate}
    \setcounter{enumi}{5}
    \item \mkwrap{($\mkvariant{\syntype_1}{\syntype_n}$)}{\expr} = 
    \begin{lstlisting}
match e with | $V_1(v_1)$ -> $V_1(\mkwrap{\syntype_1}{\eval_1})$...| $V_n(v_n)$ -> $V_n(\mkwrap{\syntype_n}{\eval_n})$
    \end{lstlisting} 
    \item \mkwrap{$\mkintersect{(\mkfun{(V_1\ \texttt{of}\ \syntype_{1})}{\syntype_{1}'})}{(\mkfun{(V_n\ \texttt{of}\ \syntype_{n})}{\syntype_{n}'})}$}{\expr} = 
\begin{lstlisting}
fun $\ev$ -> match $\ev$ with
  | $V_1(\eval_1)$ -> if pick_b then 
      if $\mkche{\syntype_1}{\eval_1}$ then $\mkwrap{\syntype_1'}{\expr \ V_1(\mkwrap{\syntype_1}{\eval_1})}$ else ERROR
    else $\mkwrap{\syntype_1'}{\expr \ V_1(\mkwrap{\syntype_1}{\eval_1})}$
  | ...
\end{lstlisting}
  \end{enumerate}
\end{definition}

\subsubsection*{Wrapping Types as Expressions.}
To extend wrapping to types-as-expressions, we simply add a third field \lstinline!wrap = ...! to the encoding of types as expressions.
\begin{definition}[Types as Expressions With Wrappers]
  Modify Definitions \ref{def:sem_of_types} and \ref{def:recursive_type_embedding} by adding the following \lstinline|wrap| clauses to those gen/check records:
    \label{def:sem_of_types_wrapped}
    \ \par
    \begin{enumerate}
      \item $\llbracket \tint \rrbracket$ = \lstinline`{ gen =$\ldots$; check = $\ldots$; wrap = fun e -> e }`
      \item $\llbracket \tbool \rrbracket$ = \lstinline`{ gen =$\ldots$; check = $\ldots$; wrap = fun e -> e }`
      \item $\llbracket \mkfun{\syntype_1}{\syntype_2} \rrbracket$ = 
      \begin{minipage}{.75\linewidth}\begin{lstlisting}[gobble=4]
      { gen = $\ldots$; check = $\ldots$;
        wrap = fun e -> fun x -> if pick_b then 
          if ($\llbracket\syntype_1\rrbracket$.check x) then $\llbracket\syntype_2\rrbracket$.wrap (e ($\llbracket\syntype_1\rrbracket$.wrap x)) else ERROR
        else $\llbracket\syntype_2\rrbracket$.wrap (e ($\llbracket\syntype_1\rrbracket$.wrap x)) }
      \end{lstlisting}\end{minipage}
      \item $\llbracket\ttob\hastype{l_1}{\syntype_1}; \ldots; \hastype{l_n}{\syntype_n}\ttcb\rrbracket$ = 
      \begin{minipage}{.75\linewidth}\begin{lstlisting}[gobble=4]
      { gen = $\ldots$; check = $\ldots$; 
        wrap = fun e -> { $l_1$ = $\llbracket \syntype_1 \rrbracket$.wrap (e.$l_1$); ...; $l_n$ = $\llbracket \syntype_n \rrbracket$.wrap e.$l_n$ } }
      \end{lstlisting}\end{minipage}
      \item $\llbracket \mkmiu{\beta}{\syntype} \rrbracket$ = 
      \begin{minipage}{.75\linewidth}\begin{lstlisting}[gobble=4]
      Y (fun self -> fun _ -> 
        { gen = $\ldots$; check = $\ldots$; wrap = fun e -> (fun $\beta$ -> $\llbracket\syntype\rrbracket$.wrap e) (self 0) })
      \end{lstlisting}\end{minipage}
    \end{enumerate}
  \end{definition}

  \subsubsection*{Wrapping Parametric Polymorphism.}
  
  
Support of wrapping for polymorphic functions also requires an extension of the syntax beyond what we used in the theory: uses of polymorphic functions must instantiate the type.  Since we have no type inference, our implementation syntax requires users to supply the type instantiations, and additionally all polymorphic type variables must be declared up-front. Polymorphic functions have the type \lstinline|forall 'a ... 'b. $\syntype_1$ -> $\syntype_2$|, and we also support the sugar \lstinline|let f (type a b ...) $ $ x = $ $e1 in e2|.

Our polymorphism use checking is similar to how dynamic contracts extended with parametric polymorphism \cite{ParametricContracts} performs use checks.  We now define the types-as-expressions form of polymorphic types; we also include the \lstinline`gen`/\lstinline`check` clauses since we need to apply type parameters when invoking functions in the implementation syntax.

  \begin{definition}[Parametric Polymorphism with Wrappers]
    \label{def:wrap_poly}
    \ \par 
    \begin{enumerate}
      \setcounter{enumi}{5}
      \item $\llbracket$\lstinline|forall| $\alpha_1 \cdots\alpha_n$ . $\mkfun{\syntype_1}{\syntype_2} \rrbracket$ = 
      \begin{lstlisting}
  { gen = fun _ -> fun $\alpha_1 \ldots \alpha_n$ -> fun arg -> if $\pickb$ then 
        if ($\llbracket\syntype_1\rrbracket$.check arg) then ($\llbracket\syntype_2\rrbracket$.gen 0) else ERROR else ($\llbracket\syntype_2\rrbracket$.gen 0);
    check = fun e ->  
      if e ~ fun then let arg = ($\llbracket\syntype_1\rrbracket$.gen 0) in ($\llbracket\syntype_2\rrbracket$.check) (e V($\alpha_1$) ... V($\alpha_n$) arg) 
        else false;
    wrap = fun e ->
      fun $\alpha_1$ ... $\alpha_n$ -> fun x' -> if $\pickb$ then 
        if ($\llbracket\syntype_1\rrbracket$.check x') then let x = $\llbracket\syntype_1\rrbracket$.wrap x' in $\llbracket\syntype_2\rrbracket$.wrap (e V($\alpha_1$) ... V($\alpha_n$) x) 
        else ERROR 
      else 
        let x = $\llbracket\syntype_1\rrbracket$.wrap x' in $\llbracket\syntype_2\rrbracket$.wrap (e V($\alpha_1$) ... V($\alpha_n$) x) }
      \end{lstlisting}
    \item $\llbracket \alpha\rrbracket$ = \lstinline|{$$ gen = $$fun _ ->$$ V($\alpha$); check = $$fun e -> e $\simeq \alpha$; wrap = fun e -> e }|
    \end{enumerate}
  \end{definition}
  
\subsubsection*{Wrapping Record Subtyping.}

We need to update the wrapper definition in Definition~\ref{def:sem_of_types} to add a re-tagging which maintains the invariant that after a type is put on a record value, all subsequent access on this record will only be valid for labels declared in the type. 

\begin{definition}
  We modify Definition~\ref{def:sem_of_types_wrapped} by replacing \texttt{wrap} for record types as follows:
  \ \par
  \begin{enumerate}
  \setcounter{enumi}{7}
    \item $\llbracket\ttob\hastype{l_1}{\syntype_1}; \ldots; \hastype{l_n}{\syntype_n}\ttcb\rrbracket$ = 
    \begin{minipage}{.75\linewidth}
    \begin{lstlisting}[gobble=4]
    { ...;$\ $wrap =$\ $fun e -> 
      let r' = { $l_1$ = $\llbracket \syntype_1 \rrbracket$.wrap (e.$l_1$); ...; $l_n$ = $\llbracket \syntype_n \rrbracket$.wrap e.$l_n$ } in 
      retag$\ttop$r', $\{l_1; \cdots; l_n\}\ttcp$ }
    \end{lstlisting}\end{minipage}
   \end{enumerate}
\end{definition}

Consider the following function, which demonstrates how the new wrapper addresses subtyping:

\begin{lstlisting}
  let f (r : {a : int}) : int = match r with
  | {a; b} -> r.b
  | {a} -> r.a
  in f {a = 5; b = true}
\end{lstlisting}

The definition of \lstinline|f| will not be refuted by our checker because all inputs created by the generator of type \lstinline|{a : int}| will go under the second match case. The use \lstinline|f {a = $ $ 5; b =$ $ true}| will also typecheck because the re-tagging in the call to the \fnstyle{Wrapped} version of \lstinline!f! will hide the \lstinline|b| field, and the second match branch will be taken.  This case is a bit subtle for programmers used to dynamically casing on record/object structure: they need to be aware that a record type restriction is also a runtime restriction to avoid seeing behind the type interface.

\subsection{A Selection of Examples}
In this section, we will showcase some examples that demonstrate the implementation's capabilities.

\subsubsection*{Dependent + Refinement + Polymorphic Types: {\tt prepend}.} 

Consider the implementation of a list prepend function, given in Figure \ref{fig:list_prepend}. This implementation is buggy, because it adds the same element into the second list twice, thus violating the predicate on the return value (i.e. the resulting list's length must be the sum of the lengths of the two original lists). Our type checker provides the following error message:


\bgroup\scriptsize
\begin{verbatim}
** Bluejay Type Errors **
- Found at clause : let rec prepend ... in prepend
--------------------
* Value    : prepend
* Expected : ((x : [a]) -> ((y : [a]) -> 
          {[a] | fun r -> length r 0 == length x 0 + length y 0}))
* Actual   : ((x : [a]) -> ((y : [a]) -> hd :: (hd :: acc)))
\end{verbatim}
\egroup

\subsubsection*{Type Function: mk\_student.} Consider the following program, whose return types are two different record types depending on the integer value of the argument, \texttt{age}. 

\begin{lstlisting}
let mk_rec age = 
    if age >= 18 then { age : int; employed : bool } else { age : int } 
in
let mk_student (n : int) : bool -> (mk_rec n) = fun employed ->
    if n > 18 then {age = n; employed = employed} else {age = n}
in mk_student
\end{lstlisting}

According to the \texttt{mk\_rec} type function, records should only contain the extra field \texttt{employed} if the input \texttt{age} is greater than or equal to 18. However, in \texttt{mk\_student}, the function mistakenly sets the threshold to strictly greater than 18, meaning that the return value will be missing the \texttt{employed} field if \texttt{age}'s value is exactly 18. Our type checker is able to spot this error as well, but the error message itself is not as understandable as the previous example.

\bgroup\scriptsize
\begin{verbatim}
** Bluejay Type Errors **
- Found at clause : 
  let mk_student ... in mk_student
--------------------
* Value    : mk_student
* Expected : ((n : int) -> (bool -> mk_rec n))
* Actual   : TypeError: Type unknown
\end{verbatim}
\egroup

\subsubsection*{Recursive + Record Types.} Consider the following program which combines recursive and record types to encode a binary tree. (\lstinline`is_bst` is a user-defined function checking whether a given tree meets the binary search tree criteria; its code is omitted for brevity.)

\begin{lstlisting}
let is_bst = ... in
let tree_type =  Mu tt. (Node { left : tt; right : tt; item : int } || Leaf { leaf : bool }) 
in let (bad_tree : { tree_type | is_bst }) = 
  Node  { left = 
          Node { left = Leaf { leaf = true };
                right = Leaf { leaf = true };
                item = 6 };
            right = Leaf { leaf = true }; item = 2 }
in bad_tree
\end{lstlisting}

According to the type declaration, this tree should be a binary search tree. However, although this value conforms to the form of a binary tree, it violates the invariance of a binary search tree: its root node's left child is bigger than the root's value. It thus fails to type check.  Our type checker reports the following error:

\bgroup\scriptsize
\begin{verbatim}
** Bluejay Type Errors **
- Found at clause : let (bad_tree : { tree_type | is_bst }) = ... in bad_tree
--------------------
* Value    : bad_tree
* Expected : {tree_type | is_bst}
* Actual   : Node {item = 2;
                  left = Node {item = 6; left = Leaf {leaf = true}; right = Leaf {leaf = true}};
                  right = Leaf {leaf = true}}
\end{verbatim}
\egroup

\subsubsection*{Record + Function Subtyping.} Consider the following program which requires record and function subtyping to successfully type check:

\begin{lstlisting}
let (r : { a : int; b : int }) = {a = 1; b = 2} in
let transform_record (i : { a : int }) : { a : int; c : bool } = 
    {a = r.a; c = r.a > 0}
in
let (new_record : { c : bool }) = transform_record r in new_record
\end{lstlisting}

There are two points here that require subtyping: (1) the application, \lstinline`transform_record r`, where the argument's type is a subtype of the input type, and (2) type checking the application result, \lstinline`new_record`, where the value itself is a subtype of the declared type. Our checker correctly outputs ``\lstinline!No errors found!'' for this program.

\subsection{Back End Implementations}
\label{sec_backend}

We experimented with two different back-ends to search for \lstinline|ERROR|s in instrumented programs.  

First we built a demand-driven symbolic evaluator based on \cite{DDSE} by forking that code base.  Demand-driven symbolic evaluation evaluates programs in reverse, and we hypothesized that working back from known \lstinline|ERROR|s in the translated programs would give a more goal-directed, efficient search.  In practice, this symbolic evaluator ran far too slowly; there is considerable run-time overhead, which could not be easily optimized away.

We then implemented a concolic symbolic evaluator back end to get faster results.  This implementation was on average several orders of magnitude faster than the demand symbolic evaluator and was also strictly faster across-the-board, so here we report runtime results on the concolic evaluator only. The translation system has yet to be optimized for speed and is currently inefficient, so we separately report translation time and \texttt{ERROR} search time below. Concolic symbolic evaluators \cite{DART} initially proceed like fuzzers and property-based testers, running programs on random inputs.  These runs are instrumented to record which conditional branches are taken, and an SMT solver is then used to infer inputs which will exercise an as-yet-untaken branch.  We could find no pre-existing concolic evaluator for functional programs that we could use, so we implemented our own concolic evaluator from scratch.

\begin{wraptable}{r}{.40\textwidth}
  \caption{Bluejay features with the number of tests using the features (``uses'') and number of tests in which the feature is a reason for the type error (``errors'').}
  \begin{center}\small
    \begin{tabular}{r|cc}
      Feature & uses & errors\\
      \hline
      Polymorphic types (P)       &  29  &  20 \\
      Variants (V)                &  19  &  6 \\
      Intersection types (I)      &  6   &  4 \\
      Recursive functions (R)     &  59  &  21 \\
      Mu types (M)                &  13  &  10 \\
      Higher order functions (H)  &  50  &  9 \\
      Subtyping (S)               &  6   &  1 \\
      Type casing (T)             &  4   &  2 \\
      OOP-style (O)               &  15  &  7 \\
      Refinement types (F)        &  44  &  34 \\
      Dependent types (D)         &  16  &  6 \\
      Parametric types (A)        &  11  &  4 \\
      Records (C)                 &  42  &  19 \\
      Wrap required (W)           &  11  &  11 \\
      Assertions (N)              &  9   &  8 \\
      Operator misuse (U)         &  10  &  10 \\
      Return type (Y)             &  40  &  34 \\
      Match (X)                   &  45  &  4 \\
    \end{tabular}
  \end{center}
  \label{table_bjy_features}
  \end{wraptable}

The concolic evaluator concretely interprets the program, substituting random values for all input clauses not yet seen in any previous interpretations. For each conditional branch taken, the negation of the branch is pushed to two queues: a depth-first search horizon and a breadth-first search horizon. At the conclusion of the interpretation, if an \lstinline|ERROR| was not found, a target branch is randomly popped from either the depth-first or breadth-first horizon, and the Z3 SMT solver checks satisfiability of the target branch. If the branch is unsatisfiable, another target is randomly popped. Once a satisfiable target is found, the program is interpreted again, where inputs are decided by the solver for known clauses or are again random for newly-seen clauses, and the process repeats.

  We use several heuristics to encourage an efficient search. Each interpretation is terminated after it reaches a fixed max number of steps. If this step count is reached, the next target branch is dequeued from the breadth-first horizon to guide the evaluator away from branches that might be more likely to hit the max step count again. The tree of executed and potential program paths is pruned at a max depth, and the evaluator quits after it exhausts all execution paths up to that depth or finds an \lstinline|ERROR|. The bookkeeping is expensive for a large tree, so we incrementally increase the depth of the tree if no \lstinline|ERROR|s are found at shallower depths. We draw from the horizons in this way without other heuristics because the program is instrumented with frequent \lstinline|ERROR| clauses, and most conditional branches have a short distance to an \lstinline|ERROR|; prioritizing branches based on their proximity to uncovered code is therefore not expected to be fruitful in these instrumented programs, and most code will remain unreachable in well-typed programs. A na\"ive implementation of such a heuristic yielded no obvious improvement, but we intend to revisit the idea in future work. The default parameters in the current implementation are as follows: program paths are explored up to sixty conditional branches, and the max depth of the path tree is incremented to sixty in six equal steps. Interpretations are cut off at 50\,000 steps, and the concolic evaluator quits after 90 seconds if no \texttt{ERROR}s are found.

  \begin{table}[h]
    \caption{The BlueJay benchmarks. Run, translation, and total times are in ms. LOC is lines of code.  Errors were succesfully found in the benchmarks above the lines; the search was unsuccessful below the lines.  ?? indicates timeout and $^*$ indicates the branching depth limit was exhausted. Letters are used for readability to indicate which features from Table \ref{table_bjy_features} are in the test. Black font indicates the feature is used, \red{red font} indicates the feature is key to the type error, and -- indicates the feature is not present in the test. }
    \begin{center}\footnotesize
      \begin{tabular}{@{}r|c@{\hspace{3pt}}c@{\hspace{3pt}}c|@{\hspace{3pt}}c@{\hspace{3pt}}|c@{\hspace{3pt}}c@{\hspace{3pt}}c@{\hspace{3pt}}c@{\hspace{3pt}}c@{\hspace{3pt}}c@{\hspace{3pt}}c@{\hspace{3pt}}c@{\hspace{3pt}}c@{\hspace{3pt}}c@{\hspace{3pt}}c@{\hspace{3pt}}c@{\hspace{3pt}}c@{\hspace{3pt}}c@{\hspace{3pt}}c@{\hspace{3pt}}c@{\hspace{3pt}}c@{\hspace{3pt}}c@{\hspace{3pt}}}
      Test Name & Run & Transl & Total & LOC & \rot{\underline{P}olymorphic types} & \rot{\underline{V}ariants} & \rot{\underline{I}ntersection types} & \rot{\underline{R}ecursive functions} & \rot{\underline{M}u types} & \rot{\underline{H}igher order functions} & \rot{\underline{S}ubtyping} & \rot{\underline{T}ype casing} & \rot{\underline{O}OP-style} & \rot{Re\underline{f}inement types} & \rot{\underline{D}ependent types} & \rot{P\underline{a}rametric types} & \rot{Re\underline{c}ords} & \rot{\underline{W}rap required} & \rot{Assertio\underline{n}s} & \rot{Operator mis\underline{u}se} & \rot{Return t\underline{y}pe} & \rot{Match (X)}\\
        \hline
        \texttt{ad\_hoc\_polymorphism}      &  254   &  13     &  267    &  34   &  \red{P}  &  --       &  --       &  R        &  --       &  --       &  --       &  \red{T}  &  --       &  F        &  --       &  A        &  --       &  --       &  --  &  --       &  --       &  X \\
        \texttt{adapter\_record}            &  4     &  23     &  27     &  30   &  --       &  --       &  --       &  --       &  --       &  H        &  --       &  --       &  O        &  --       &  --       &  --       &  \red{C}  &  --       &  --  &  --       &  Y        &  -- \\
        \texttt{avl\_tree\_instance}        &  2     &  106    &  107    &  176  &  --       &  V        &  --       &  \red{R}  &  \red{M}  &  --       &  --       &  --       &  --       &  \red{F}  &  --       &  --       &  C        &  --       &  --  &  --       &  --       &  X \\
        \texttt{continuation\_bind1}        &  1     &  63     &  64     &  12   &  \red{P}  &  --       &  --       &  --       &  --       &  H        &  --       &  --       &  --       &  --       &  --       &  --       &  C        &  --       &  --  &  --       &  --       &  X \\
        \texttt{continuation\_bind2}        &  2     &  52     &  53     &  32   &  \red{P}  &  --       &  --       &  --       &  --       &  H        &  --       &  --       &  --       &  --       &  --       &  \red{A}  &  C        &  --       &  --  &  --       &  \red{Y}  &  -- \\
        \texttt{continuation\_bind\_usage}  &  3     &  51     &  53     &  32   &  \red{P}  &  --       &  --       &  --       &  --       &  \red{H}  &  --       &  --       &  --       &  --       &  --       &  A        &  \red{C}  &  --       &  --  &  --       &  --       &  -- \\
        \texttt{decorator\_timer}           &  2     &  21     &  22     &  24   &  --       &  --       &  --       &  --       &  --       &  H        &  --       &  --       &  O        &  --       &  --       &  --       &  C        &  --       &  --  &  \red{U}  &  --       &  -- \\
        \texttt{duck\_typing\_colors1}      &  2     &  34     &  35     &  45   &  --       &  --       &  --       &  --       &  --       &  H        &  --       &  --       &  \red{O}  &  F        &  D        &  --       &  \red{C}  &  --       &  --  &  --       &  --       &  -- \\
        \texttt{duck\_typing\_colors2}      &  1     &  34     &  34     &  45   &  --       &  --       &  --       &  --       &  --       &  H        &  --       &  --       &  \red{O}  &  F        &  D        &  --       &  \red{C}  &  --       &  --  &  --       &  --       &  -- \\
        \texttt{flyweight\_color1}          &  15    &  28     &  43     &  58   &  --       &  --       &  --       &  --       &  --       &  H        &  --       &  --       &  \red{O}  &  F        &  D        &  --       &  \red{C}  &  --       &  --  &  --       &  --       &  -- \\
        \texttt{flyweight\_color2}          &  394   &  31     &  424    &  57   &  --       &  --       &  --       &  --       &  --       &  H        &  --       &  --       &  O        &  F        &  D        &  --       &  C        &  --       &  --  &  \red{U}  &  --       &  -- \\
        \texttt{flyweight\_color3}          &  16    &  33     &  49     &  61   &  --       &  --       &  --       &  --       &  --       &  \red{H}  &  --       &  --       &  O        &  F        &  D        &  --       &  C        &  --       &  --  &  --       &  --       &  \red{X} \\
        \texttt{insertion\_sort1}           &  514   &  34     &  547    &  44   &  --       &  --       &  --       &  \red{R}  &  --       &  --       &  --       &  --       &  --       &  \red{F}  &  --       &  --       &  --       &  --       &  --  &  --       &  Y        &  X \\
        \texttt{insertion\_sort2}           &  295   &  35     &  329    &  44   &  --       &  --       &  --       &  \red{R}  &  --       &  --       &  --       &  --       &  --       &  \red{F}  &  --       &  --       &  --       &  --       &  --  &  --       &  Y        &  X \\
        \texttt{list\_take\_n}              &  226   &  262    &  487    &  19   &  P        &  --       &  --       &  R        &  --       &  --       &  --       &  --       &  --       &  F        &  D        &  --       &  --       &  --       &  --  &  --       &  \red{Y}  &  X \\
        \texttt{poly\_casting\_applied}     &  204   &  4      &  207    &  12   &  \red{P}  &  --       &  --       &  --       &  --       &  --       &  --       &  --       &  --       &  --       &  --       &  --       &  --       &  --       &  --  &  \red{U}  &  --       &  -- \\
        \texttt{poly\_compose}              &  3     &  14     &  17     &  16   &  \red{P}  &  --       &  --       &  --       &  --       &  \red{H}  &  --       &  --       &  --       &  --       &  --       &  --       &  --       &  \red{W}  &  --  &  --       &  --       &  -- \\
        \texttt{poly\_fold2\_map2}          &  15    &  545    &  559    &  23   &  \red{P}  &  --       &  --       &  \red{R}  &  --       &  H        &  --       &  --       &  --       &  --       &  --       &  --       &  --       &  --       &  --  &  --       &  --       &  X \\
        \texttt{poly\_sgn}                  &  14    &  174    &  188    &  22   &  \red{P}  &  V        &  --       &  \red{R}  &  --       &  H        &  --       &  --       &  --       &  --       &  --       &  --       &  --       &  --       &  --  &  \red{U}  &  --       &  X \\
        \texttt{poly\_type\_casing}         &  1     &  24     &  24     &  13   &  \red{P}  &  --       &  I        &  --       &  --       &  H        &  S        &  \red{T}  &  --       &  --       &  --       &  --       &  --       &  --       &  --  &  --       &  --       &  X \\
        \texttt{rec\_dep\_polymorphism}     &  87    &  25     &  112    &  12   &  \red{P}  &  --       &  --       &  \red{R}  &  --       &  --       &  --       &  --       &  --       &  --       &  \red{D}  &  \red{A}  &  --       &  --       &  --  &  --       &  --       &  -- \\
        \texttt{rec\_polymorphism}          &  58    &  4      &  62     &  6    &  \red{P}  &  --       &  --       &  R        &  --       &  --       &  --       &  --       &  --       &  --       &  --       &  --       &  --       &  --       &  --  &  --       &  \red{Y}  &  -- \\
        \texttt{record\_intersection}       &  2     &  436    &  438    &  22   &  --       &  --       &  \red{I}  &  --       &  --       &  H        &  --       &  T        &  --       &  --       &  --       &  --       &  C        &  --       &  --  &  \red{U}  &  --       &  X \\
        \texttt{set\_module}                &  295   &  27618  &  27912  &  37   &  --       &  --       &  --       &  R        &  --       &  H        &  --       &  --       &  --       &  --       &  --       &  A        &  \red{C}  &  --       &  --  &  --       &  \red{Y}  &  -- \\
        \texttt{sub\_fun\_intersection}     &  15    &  95     &  109    &  19   &  --       &  --       &  \red{I}  &  --       &  --       &  H        &  S        &  T        &  --       &  F        &  --       &  --       &  --       &  --       &  --  &  --       &  \red{Y}  &  X \\
        \texttt{sub\_higher\_order\_fun}    &  16    &  1891   &  1906   &  11   &  --       &  --       &  --       &  --       &  --       &  \red{H}  &  S        &  --       &  --       &  --       &  --       &  --       &  \red{C}  &  --       &  --  &  --       &  --       &  X \\
        \texttt{sub\_implied\_type}         &  313   &  16     &  329    &  12   &  --       &  --       &  --       &  R        &  --       &  \red{H}  &  \red{S}  &  --       &  --       &  \red{F}  &  --       &  --       &  --       &  \red{W}  &  --  &  --       &  --       &  -- \\
        \texttt{variant\_obj}               &  5     &  147    &  151    &  17   &  --       &  V        &  \red{I}  &  --       &  --       &  \red{H}  &  --       &  --       &  \red{O}  &  --       &  --       &  --       &  C        &  --       &  --  &  --       &  \red{Y}  &  X \\
        \texttt{visitor\_accept}            &  2     &  33     &  35     &  29   &  --       &  --       &  --       &  --       &  M        &  H        &  --       &  --       &  \red{O}  &  --       &  --       &  --       &  \red{C}  &  --       &  --  &  --       &  --       &  -- \\
        \texttt{visitor\_tree1}             &  1207  &  65     &  1271   &  63   &  --       &  V        &  --       &  R        &  \red{M}  &  H        &  --       &  --       &  \red{O}  &  --       &  --       &  A        &  C        &  \red{W}  &  --  &  --       &  --       &  X \\
        \texttt{visitor\_tree2}             &  7007  &  54     &  7061   &  63   &  --       &  V        &  --       &  R        &  \red{M}  &  H        &  --       &  --       &  O        &  --       &  --       &  A        &  C        &  --       &  --  &  \red{U}  &  --       &  X \\
        \texttt{visitor\_tree3}             &  192   &  62     &  254    &  63   &  --       &  V        &  --       &  R        &  \red{M}  &  H        &  --       &  --       &  O        &  --       &  --       &  A        &  \red{C}  &  --       &  --  &  \red{U}  &  --       &  X \\
        \texttt{visitor\_variant}           &  262   &  56     &  318    &  25   &  --       &  \red{V}  &  --       &  --       &  M        &  H        &  --       &  --       &  O        &  --       &  --       &  --       &  C        &  --       &  --  &  --       &  --       &  \red{X} \\
  
        \hline\hline
        \texttt{avl\_tree}                 &  ??     &  138   &  ??     &  164  &  --       &  V   &  --  &  \red{R}  &  \red{M}  &  --  &  --  &  --  &  --  &  \red{F}  &  --  &  --  &  C        &  --  &  --  &  --  &  --       &  X \\
        \texttt{long\_recursion}$^*$       &  5905   &  10    &  5914   &  11   &  --       &  --  &  --  &  R        &  --       &  --  &  --  &  --  &  --  &  F        &  --  &  --  &  --       &  --  &  --  &  --  &  \red{Y}  &  -- \\
        \texttt{ngrams}$^*$                &  56505  &  2612  &  59116  &  96   &  \red{P}  &  --  &  --  &  R        &  --       &  H   &  --  &  --  &  --  &  F        &  D   &  A   &  \red{C}  &  --  &  N   &  --  &  --       &  X \\
        \texttt{recursively\_refined}$^*$  &  16343  &  58    &  16401  &  40   &  --       &  V   &  --  &  \red{R}  &  --       &  --  &  --  &  --  &  --  &  \red{F}  &  D   &  --  &  C        &  --  &  --  &  --  &  --       &  X \\
        \texttt{self\_returning}$^*$       &  1008   &  17    &  1025   &  7    &  --       &  --  &  --  &  --       &  M        &  H   &  --  &  --  &  O   &  --       &  --  &  --  &  C        &  --  &  --  &  --  &  \red{Y}  &  -- \\
        \texttt{sequential\_defs}$^*$      &  754    &  4     &  758    &  7    &  --       &  --  &  --  &  R        &  --       &  --  &  --  &  --  &  --  &  --       &  --  &  --  &  --       &  --  &  --  &  --  &  \red{Y}  &  -- \\
  
      \end{tabular}
    \end{center}
    \label{table_bjy_benchmarks}
  \end{table}

  \subsection{Performance Evaluation}
  \label{sec_eval}
  
We benchmarked our implementation on a set of nontrivial ill-typed test programs that aim to cover the key features of Bluejay. To show the coverage of the tests, in Table \ref{table_bjy_features} we list key features along with the total number of ill-typed tests that use the feature some way.  Our tests include both unit tests for particular features as well as benchmarks which solve particular programming problems.  
The benchmarks and their performance are listed in Table \ref{table_bjy_benchmarks}.  For each benchmark we include the time to translate, the time for the concolic evaluator to find an error, the features the benchmark uses (see Table \ref{table_bjy_features} for the mapping of capital letters in the table to particular features), and those features which are critically involved in the particular type error.  Appendix \infull{\ref{app_tests}}\inshort{D.6 of the supplement} extends Table \ref{table_bjy_benchmarks} to cover all the unit tests we used.
%
For the benchmarks below the lines in Table \ref{table_bjy_benchmarks}, the system exhausted the maximum tree depth of 60 in the indicated time, or timed out after 300 seconds.  We also ran a well-typed version of every benchmark to check correctness of the translation and the concolic evaluator, and in all cases, the evaluator found no error before timing out or exhausting the search. 

\begin{wraptable}{r}{.33\textwidth}
      \caption{Bluejay vs static contract running times (times are in ms).}
      \begin{center}\footnotesize
         \scalebox{.99}{
            \begin{tabular}{r|c|c}
          Benchmark  &  Bluejay & Contracts\\
          \hline
            \texttt{all}            & 85     & 22       \\  
            \texttt{append}         & 372    & 7        \\  
            \texttt{boolflip-e}     & 3      & 22       \\  
            \texttt{braun-tree}     & 219    & 6427     \\  
            \texttt{flatten}        & 465    & 22       \\  
            \texttt{fold-fun-list}  & 143    & 395      \\ 
            \texttt{foldl}          & 93     & 20       \\  
            \texttt{foldl1}         & 13     & 20       \\  
            \texttt{foldr}          & 99     & 21       \\  
            \texttt{foldr1}         & 13     & 19       \\  
            \texttt{hors}           & 18     & 50       \\  
            \texttt{hrec}           & 1      & 144      \\  
            \texttt{intro1}         & 72     & 145      \\  
            \texttt{intro3}         & 13     & 24       \\  
            \texttt{last}           & 13     & 19       \\  
            \texttt{lastpair}       & 83     & 16       \\  
            \texttt{max}            & 28     & 148      \\  
            \texttt{mem}            & 100    & 328      \\  
            \texttt{member}         & 13     & 18       \\  
            \texttt{mult-all-e}     & 71     & 249      \\  
            \texttt{mult-cps-e}     & 1      & 35       \\  
            \texttt{mult-e}         & 1      & 21       \\  
            \texttt{mult}           & 35     & 160      \\  
            \texttt{nth0}           & 56     & 316      \\  
            \texttt{r-lock}         & 745    & 57       \\  
            \texttt{reverse}        & 1      & 197      \\  
            \texttt{sum-acm-e}      & 1      & 842      \\  
            \texttt{sum-all-e}      & 66     & 203      \\  
            \texttt{sum-e}          & 1      & 20       \\  
            \texttt{tree-depth}     & 1      & 12       \\  
        \end{tabular}
       }
      \end{center} 
      
      \label{table_scheme_benchmarks}
  \end{wraptable}
  We now highlight some of the patterns in these benchmarks.
  \paragraph{OOP-Style Programming}
  We designed tests that implement common structural and behaviorial design patterns of objects.  Objects were implemented as records of functions or as function intersections with self-passing.
  %
  \paragraph{Polymorphism}
  Common functional programming patterns such as monads, polymorphism, recursion, and higher order functions were tested extensively. Our system efficiently finds type errors in both the creation (e.g. \texttt{continuation\_bind1}) and usage (e.g. \texttt{continuation\_bind\_usage}) of such programming patterns.
  %
  %
  %
\paragraph{OCaml-Style Modules}
Bluejay supports OCaml-style modules without hidden types by using dependently-typed records containing types-as-values. An example module is the \texttt{set\_module} program, where we perform type checking on a simple set module implementation with an incorrect \lstinline|remove| function. Our analysis was able to catch the error quickly, but inefficiencies in the translation increase the total running time significantly.
\paragraph{Deep Errors}
The failing cases such as \texttt{avl\_tree} and \texttt{long\_recursion} contain errors that are very deep in the program path tree. Since the instrumented program is currently just run top-to-bottom, these deep errors might not be reached due to time-out on earlier declaration checks. One advantage of the refutation approach is there is no need to start at the top in checking types; earlier types could be ignored in a run, and only later functions are checked. So in the event of a timeout on an earlier declaration, the checker could re-run starting right below that definition; we plan to implement this improvement, and we expect that most or all of these errors should then be found.
  
  
\subsection{Performance Comparison}
\label{sec_eval_compare}
  
Along with our own benchmarks, we also ran benchmarks developed for related systems to verify our performance is reasonable. There were no ideal existing benchmarks as our system contains a unique mix of types, and we also need to find benchmarks of programs \emph{failing} to type check as successful programs will time-out in our tool. The most relevant benchmarks we found were static contract verification benchmarks \cite{RelCompleteCounterexamples}.  These benchmarks include both type correct and type erroring versions of numerous examples involving refinement types. We hand-ported these examples to our Bluejay syntax. The static contract semantics differs from Bluejay in several respects, perhaps the largest being that contracts rely heavily on the \texttt{any/c} contract, which we do not support.  In porting the examples to Bluejay, we changed \texttt{any/c} to a parametrically polymorphic type when possible and to some other type when not.  These contract examples also included no parametric polymorphism, records, variants, intersections, or subtyping because their language does not support those features, and that is one reason why we developed our own set of benchmarks in the previous section. On the other hand, our language currently lacks string and floating-point types, and benchmarks using those types were not ported.
    
The directory \texttt{benchmark-verification/fail-ce/} of the static contract paper artifact \cite{RelCompleteCounterexamplesRepo} contains the benchmarks which had type errors.  We successfully ported many of those benchmarks to Bluejay and ran them, and we also compared those running times with the paper artifact implementation which we downloaded and built; the results appear in Table \ref{table_scheme_benchmarks}.  The benchmarks listed in the table are all such benchmarks in this directory, which we concluded were reasonable to port.\footnote{For completeness, in this footnote we enumerate all the benchmarks in \texttt{benchmark-verification/fail-ce/} that were skipped and why. Several were not portable because they included string or floating-point types; these included \texttt{ack}, \texttt{argmin}, \texttt{fold-div}, \texttt{get-path}, \texttt{id-dependent}, \texttt{inc-or-greet}, \texttt{r-file}, \texttt{recip-contract}, and all programs in \texttt{octy/}.  \texttt{fhnhn} appeared specific to their contract syntax.  There were a few larger examples that contained no interesting type refinements, and we also skipped those: \texttt{snake}, \texttt{tetris}, and \texttt{zombie}.  There were also many numbered examples \texttt{142...} harvested from their web interface that we did not attempt to port.} Bluejay runtimes do not include the time to translate and instrument the program in order to benchmark only the search for the \texttt{ERROR}. Full translation times and features used in these programs are found in Appendix \infull{\ref{app_tests}}\inshort{D.6 of the supplementary material}.
In each case, our type checker terminated and found an error in the time indicated. The static contract artifact also contained successful examples which contained no type errors in the \texttt{../safe/} directory; all of these type-safe versions were also ported to Bluejay, and none triggered any type errors in our system: the system either reported that all paths were exhausted, reported that all paths were exhausted up to the allowed max depth, or timed out.

When we examine the running times, we see neither system was consistently faster. Bluejay was faster by mean (94ms to 332ms), and they were comparable by median, with the static contract system marginally faster (29.5ms to 31.5ms). 



\section{Related Work}

Our approach shares the same general philosophy as Incorrectness Logic \cite{10.1145/3371078,10.1145/3527325}, which emphasizes underapproximating possible program states to eliminate false positives, as opposed to the standard approach which overapproximates program states to enforce soundness.  Incorrectness Logic applies to first-order, imperative languages, but there has been a recent extension of this philosophy to functional language type systems for provable ill-typing \cite{10.1145/3632909}.  All of these works are primarily in the proof-theory space, whereas with semantic types we take a purely model-theoretic approach to incorrectness.

We share the ``falsification'' view of success typing \cite{10.1145/1140335.1140356}: i.e. we focus on finding type errors rather than proving their absence. However, instead of inferring types as with success typing, we only perform checks on explicit type declarations provided by users.

Modeling types via an inductive definition over type structure in terms of (untyped) program semantics originates with \cite{TaitRealizability}, where it was used to prove properties of a type system.  Use of unary logical relations as the sole semantic basis of program meaning originates with \cite{MILNER1978348}, and this approach is called a \emph{semantic typing} approach to type meaning \cite{DreyerBlog}.  Other more recent works that have a semantic typing basis include \cite{StepIndexedLogicalRelations,10.5555/1037736,StepIndexedKripke,10.1145/3676954}.

Current works on semantic subtyping, such as the CDuce Project \cite{SemanticSubtyping}, are built upon a largely syntactic foundation. In particular, the subtyping relations themselves are deduced semantically, but the types which the relations operate on are still syntactically determined. We are using a semantic notion for all typing in the system, rather than for subtyping only. Because we are modeling all typing and need to be semi-decidable, we face additional challenges in making the type generators for types in the presence of subtyping; CDuce is not modeling typing semantically and so does not run into this complexity. 

We embed higher-order type checking in the programs themselves, and this approach is in the spirit of runtime contract checking \cite{Findler-Contracts}. Unlike contracts we aim for static, modular verification and aim to be sound and complete with respect to a semantic types basis.

Closer to our approach are works which take a program with runtime contracts and either verify some of those contracts hold statically using a conservative abstract interpretation \cite{SoftContractVerif,Meunier:2006,nguyen_tobin-hochstadt_vanhorn_2017} or refute the contracts via symbolic model checking \cite{HaskellStaticContracts,RelCompleteCounterexamples}.  In particular, \cite{RelCompleteCounterexamples} also constructs counterexamples exhibiting contract failures via symbolic evaluation. A primary difference is that we build on known type foundations, showing our system is sound and complete with respect to a semantic typing basis.
Technically, we also take a different approach in how unknown functions of a given declared type are modeled.  Their approach is to view the function as a mapping that is initially empty and to incrementally extend it, in effect expanding a canonical syntactic definition of the function as symbolic execution proceeds. Our approach is ``semantic'' as opposed to syntactic, and we provide a fixed definition for the unknown function using built-in nondeterminism and let the SMT solver in effect infer the mapping.  
Their implementation includes numerous extensions, including a partially-successful extension to mutable state.  On the other hand, our implementation has a broader range of types including general parameterized types, parametric polymorphism, variants, intersections, and records with subtyping.

We also share some features with \cite{LazyCounterfactual}.  Their paper's stated purpose is to find errors in the \emph{specifications} themselves, but the symbolic evaluation and verification tools developed in the paper are of independent interest. They take a similar approach to how arbitrary function arguments are generated, using a symbolic value $s$ which is similar to our $\picki/\pickb$.  They restrict these values to be of first-order type only, however.

Our approach has overlapping goals with gradual typing \cite{gradual-typing-original}, but has a different methodology and trade-offs.  Our primary aim is purely static type checking but with a complete semantic typing basis.  The checking/generating code is not designed to be present at runtime of the actual program, unlike the case with gradual types, which may need dynamic runtime contracts at the interface.  Additionally, in gradual typing, it relies on a type system for typing programs and so will be incomplete; with this incompleteness there is more necessity to escape to purely dynamically-typed code.  In our approach, there is less of a need to combine with purely dynamic code as it should be more feasible to type check all the code (modulo time-out) due to the completeness of our type checker.  

Our approach also draws inspiration from property-based testing. Most notably, our construction of generators follows roughly the same philosophy outlined in the seminal QuickCheck paper by Claessen and Hughes \cite{10.1145/351240.351266}, where generators of bigger, more complex structures can be derived from smaller, simpler generators for basic types. However, our current approach to function value generation is fundamentally different from QuickCheck's method: the output of our function generator does not depend on the argument value, which means that we currently cannot generate pure function values.  

\section{Conclusions}

In this paper, we presented a theory which focuses on establishing type incorrectness for a functional language. Our approach shares the approach of Incorrectness Logic in that we only report type errors, and we prove that our type checker is sound and complete with respect to a semantic typing notion. 
We demonstrated the potential of this approach by showing how various complex types can be naturally incorporated, including refinement, dependent, polymorphic, and recursive types, as well as subtyping, and by providing a preliminary implementation which quickly finds many relatively complex type errors.
\subsubsection*{Future Work.}
The overall aim of basing a practical type checker on semantic typing still has several hurdles to overcome, in particular on the performance side: since the search space is exponentially expanding it may not practially find errors on some realistic programs.  The translator code is now overly verbose, but it should be possible to simplify away unnecessary branching constructs, which will in turn shrink the size of the control flow tree that must be searched and greatly improve performance.  Addition of an abstract interpretation back end will allow our type checker to show well-typed programs that lack complex refinements to be type-correct, avoiding the need for the symbolic evaluator to run to time-out.  The type checker is also capable in principle of type checking functions individually if all variables used have declared types; adapting the tool to do this will factor the workload and should lead to better performance.

Additional forms of typing need be incorporated, such as bounded polymorphism and existential types. Supporting languages with side effects is important but will be challenging; as was previously pointed out \cite{RelCompleteCounterexamples}, it is challenging to simulate an arbitrary function that can perform side effects.  It would also be desirable to support pure functional programming; currently, the user language includes the impure \lstinline!pick_i! operator as it is also needed in the generators and checkers to generate all possible functions.  On the foundations side, soundness and completeness need to be proven for the extensions in Section~\ref{sec_additional_extension}.

\begin{acks} 
    Thanks to the anonymous OOPSLA reviewers for extensive comments that have helped to significantly improve the paper.  Thanks to Zach Palmer for many discussions.  Thanks to Sean Murray for his initial implementation of the concolic evaluator.
\end{acks}
\section*{Data-Availability Statement}
The implementation of the type checker described in Section \ref{sec_impl} is archived and is publicly available on Zenodo \cite{Artifact24Zenodo} and Software Heritage \cite{Artifact24SoftwareHeritage}, and it is in an evolving open-source GitHub repository \cite{Artifact24GitHub}.

\bibliographystyle{ACM-Reference-Format}
\bibliography{main}
\infull{
\appendix
\newpage
\section{Proofs}
\label{app_proofs}

This Appendix contains the proofs of all the non-trivial lemmas in the paper.

\begin{definition}[Size of Types]The size of types defined in Definitions~\ref{def:lr_core}, \ref{def:lr_refinement_dependent}, and \ref{def:lr_poly} are calculated as follows:
  \ \par
  \begin{enumerate}
    \item $\textsc{size($\tint$)} = 1$.
    \item $\textsc{size($\tbool$)} = 1$.
    \item $\textsc{size($\tpoly$)} = 1$.
    \item $\textsc{size($\mkfun{\syntype_1}{\syntype_2}$)} = \textsc{size($\syntype_1$)} + \textsc{size($\syntype_2$)}$ + 1.
    \item $\textsc{size($\mktset{\syntype}{\expr_p}$)} = \textsc{size($\syntype$)} + 1$.
    \item $\textsc{size($\mkfun{\syntype_1}{\syntype_2}$)} = \textsc{size($\syntype_1$)} + \textsc{size($\syntype_2$)}$ + 1.
  \end{enumerate}
\end{definition}

\subsection{Proofs for Core Language}

This subsection contains proofs for the core Language introduced in Figure~\ref{fig:core_grammar}.

\completenessextra* 

\begin{proof}

  We prove this lemma by induction on the size of $\syntype$.

  \textbf{Base case}: $\syntype = \tint$

  Consider \textbf{clause (1)}. We need show that $\models \hastype{\mkgen{\tint}}{\tint}$.
  
  Since \mkgen{\tint} = \lstinline|pick_i|, by definition of \lstinline|$\picki$|, we know that $\forall \eval.$ if \lstinline|$\picki$| $\smallsteps[] \eval$, then $\eval \in \mathbb{Z}$. Thus, we have shown that $\models \hastype{\mkgen{\tint}}{\tint}$.

  Consider \textbf{clause (2)}. We need to prove that for an arbitrary $\expr$, if \mkche{\tint}{\expr} $\smallsteps[]$ \lstinline|ERROR|, or if \mkche{\syntype}{\expr}$\smallsteps[] \gtfalse$, then $\not\models \hastype{\expr}{\tint}$. By definition, \mkche{\tint}{\expr} = \matches{\expr}{\tint}. Proceed by case analysis on the evaluation result of the pattern match.

  \begin{enumerate}
    \item $\expr \smallsteps[]$ \lstinline|ERROR|. In this case, $\not\models\hastype{\expr}{\tint}$ is trivially true. This is also the only way that \mkche{\tint}{\expr} can evaluate to \lstinline|ERROR|, since pattern match will not return an \lstinline|ERROR| value.
    \item $\expr \smallsteps[] \eval_0$. In this case, by the operational semantics, we know that $\matches{\eval_0}{\tint} \smallsteps[] \gtfalse$, and this implies that $\eval_0 \notin \mathbb{Z}$. Therefore, we have $\not\models\hastype{\expr}{\tint}$ by definition. 
  \end{enumerate}

  The proof for the case where $\syntype = \tbool$ is very similar, so we will omit it here for brevity.

  \textbf{Inductive step:} \syntype\ =\ $\mkfun{\syntype_1}{\syntype_2}$

  Consider \textbf{clause (1)}. We need to show that $\models \hastype{\mkgen{\mkfun{\syntype_1}{\syntype_2}}}{\mkfun{\syntype_1}{\syntype_2}}$. To prove this, we need $\forall\eval.$ if $\models \hastype{\eval}{\syntype_1}$, then $\models \hastype{(\mkgen{\mkfun{\syntype_1}{\syntype_2}} \ \eval)}{\syntype_2}$.
  
  By definition, $\mkgen{\mkfun{\syntype_1}{\syntype_2}}$ is nondeterministic. There are two cases to consider here:
  
  \begin{enumerate}
    \item \sloppy Argument is checked: In this case, we have $\mkgen{\mkfun{\syntype_1}{\syntype_2}} \ \eval \smallsteps[]$ \lstinline|if|$\ \mkche{\syntype_1}{\eval}$ \lstinline|then| $\mkgen{\syntype_2}$ \lstinline|else ERROR|. By induction hypothesis on \textbf{clause (2)}, $\forall\eval.$ if $\models \hastype{\eval}{\syntype_1}$, then $\mkche{\syntype_1}{\eval}\nsmallsteps[] \gtfalse$ and $\mkche{\syntype_1}{\eval}\nsmallsteps[] \eerror$. This implies that either $\mkche{\syntype_1}{\eval}$ diverges or $\mkche{\syntype_1}{\eval} \smallsteps[] \gttrue$. If $\mkche{\syntype_1}{\eval}$ diverges, $\mkgen{\mkfun{\syntype_1}{\syntype_2}} \ \eval$ will diverge, too, making the statement $\models \hastype{\mkgen{\mkfun{\syntype_1}{\syntype_2}} \ \eval}{\syntype_2}$ trivially true. If $\mkche{\syntype_1}{\eval} \smallsteps[] \gttrue$, we only have to consider $\mkgen{\syntype_2}$. By induction hypothesis on \textbf{clause (1)}, we know that $\models \hastype{\mkgen{\syntype_2}}{\syntype_2}$. 
    \item Argument is not checked: In this case, we have $\mkgen{\mkfun{\syntype_1}{\syntype_2}} \ \eval \smallsteps[] \mkgen{\syntype_2}$. By induction hypothesis, we know that $\models \hastype{\mkgen{\syntype_2}}{\syntype_2}$.
  \end{enumerate}
  Therefore, we have shown that $\forall\eval.$ if $\hastype{\eval}{\syntype_1}$, then $\models \hastype{\mkgen{\mkfun{\syntype_1}{\syntype_2}} \ \eval}{\syntype_2}$. We can now safely conclude that $\models \hastype{\mkgen{\mkfun{\syntype_1}{\syntype_2}}}{\mkfun{\syntype_1}{\syntype_2}}$.

  Consider \textbf{clause (2)}. we need to prove that for any $\expr$, if $\mkche{\mkfun{\syntype_1}{\syntype_2}}{\expr} \smallsteps[] \eerror$ or if $\mkche{\mkfun{\syntype_1}{\syntype_2}}{\expr}\smallsteps[] \gtfalse$, then $\not\models \hastype{\expr}{\mkfun{\syntype_1}{\syntype_2}}$. If $\expr \smallsteps[] \eerror$, we have $\not\models\hastype{\expr}{\mkfun{\syntype_1}{\syntype_2}}$ trivially. We will only consider the case where $\expr \not\smallsteps[] \eerror$ in the discussion below.
  
  Let us first consider the case, $\mkche{\mkfun{\syntype_1}{\syntype_2}}{\expr} \smallsteps[] \eerror$. Expanding the checker definition, we get $\mkche{\mkfun{\syntype_1}{\syntype_2}}{\expr} = $\lstinline|if $\expr\ \tttilde$ fun then let arg =|\ $\mkgen{\syntype_1}$\ \lstinline|in| $\mkche{\syntype_2}{\expr \ \mkgen{\syntype_1}}$\ \lstinline| else false|. 
  
  Since $\mkgen{\syntype_1}$ does not evaluate to \lstinline|ERROR|, we know it must come from either $\matches{\expr}{\gtfun}$ or $\mkche{\syntype_2}{\expr \ \mkgen{\syntype_1}}$. 
  
  If $\matches{\expr}{\gtfun} \smallsteps[] \eerror$, it implies that $\expr \smallsteps[] \texttt{V}\ttop\tpoly\ttcp$ for some polymorphic variable $\tpoly$, from which we can conclude that $\not\models\hastype{\expr}{\mkfun{\syntype_1}{\syntype_2}}$. 
  
  If $\mkche{\syntype_2}{\expr \ \mkgen{\syntype_1}} \smallsteps[] \eerror$, by induction hypothesis on \textbf{clause (2)}, we know that $\not\models \hastype{(\expr \ \mkgen{\syntype_1})}{\syntype_2}$, and by induction hypothesis, we have \textbf{clause (1)} $\models \hastype{\mkgen{\syntype_1}}{\syntype_1}$. Thus we have found a witness $\models \hastype{\mkgen{\syntype_1}}{\syntype_1}$ such that $\not\models \hastype{(\expr \ \mkgen{\syntype_1})}{\syntype_2}$, proving that $\not\models \hastype{\expr}{\mkfun{\syntype_1}{\syntype_2}}$.
  
  Now, consider the case, $\mkche{\mkfun{\syntype_1}{\syntype_2}}{\expr}\smallsteps[] \gtfalse$. By the \textsc{checker} definition, we have the following possible cases:
  \begin{enumerate}
    \item $\matches{\expr}{\gtfun} \smallsteps[] \gtfalse$. In this case, by the operational semantics, we know that $\expr \smallsteps[] \eval_0$ and $\eval_0$ is not a function. Therefore, we have $\not\models\hastype{\expr}{\mkfun{\syntype_1}{\syntype_2}}$ trivially.
    \item $\matches{\expr}{\gtfun} \smallsteps[] \gttrue$. In this case, we know that $\expr \smallsteps[] \eval_0$ and $\eval_0$ is a function. However, this suggests that $\mkche{\syntype_2}{\expr \ \mkgen{\syntype_1}} \smallsteps[] \gtfalse$. By induction hypothesis on \textbf{clause (2)}, we know that $\not\models \hastype{(\expr \ \mkgen{\syntype_1})}{\syntype_2}$, and by induction hypothesis on \textbf{clause (1)}, we have $\models \hastype{\mkgen{\syntype_1}}{\syntype_1}$. Thus we have found a witness $\models \hastype{\mkgen{\syntype_1}}{\syntype_1}$ such that $\not\models \hastype{(\expr \ \mkgen{\syntype_1})}{\syntype_2}$, proving that $\not\models \hastype{\expr}{\mkfun{\syntype_1}{\syntype_2}}$.
  \end{enumerate}

\end{proof}

\subsubsection*{Reduction Contexts}
We now define notation needed in the soundness proof for expressing uniform compuation over hole(s).

\begin{wrapfigure}{r}{.6\textwidth}
    \begin{grammar}
      \grule[value context]{\evalcon}{
        \circ
        \gor \mathbb{Z}
        \gor \mathbb {B}
        \gor
        \mkfunv{\ev}{\exprcon}
      }
      \grule[expression context]{\exprcon}{
          \expr
          \gor \evalcon
          \gor \exprcon\ \exprcon
          \gor \exprcon\ \binop\ \exprcon
          \gline 
          \gor \ife{\exprcon}{\exprcon}{\exprcon}
          \gor \matches{\exprcon}{p}
      }
      \grule[redex context]{\redexcon}{
        \evalcon \binop \evalcon
        \gor \evalcon \ \evalcon
        \gor \ife{\vtrue}{\exprcon}{\exprcon}
        \gline
        \gor \ife{\vfalse}{\exprcon}{\exprcon}
        \gline 
        \gor \matches{\evalcon}{p}
        \gor \picki \gor \pickb
      }
      \grule[RC context]{\rccon}{
        \bar{\bullet}
        \gor \reductioncon \ \exprcon
        \gor \exprcon \ \reductioncon
        \gor \ife{\reductioncon}{\exprcon}{\exprcon}
        \gline
        \gor \reductioncon \binop \exprcon
        \gor \evalcon \binop \reductioncon
        \gor \matches{\reductioncon}{p}
    }
    \end{grammar}
    \caption{Definitions for Contexts}
    \label{fig:contexts}
  \end{wrapfigure}
  
  The next definition introduces context holes (denoted as $\circ$, or "white holes") into values, expressions, redexes, and reduction contexts.
  
  We use standard notation for context substitution, $\evalcon[\eval]$ or $\exprcon[\eval]$ which replaces $\circ$ with $\eval$ in $\evalcon$ or $\exprcon$.
  Another notable construct is the black hole context, $\bar{\bullet}$. Intuitively, it is $\bullet$ which additionally may contain white holes in the substituted expression.
  
  Sometimes we need to talk about reduction contexts that also contain normal holes $\circ$, which we will denote as $\redcon{\circ}{\bar{\bullet}}$, where the first set of brackets will indicate what value will be substituted uniformly into all of the white holes, and the second set will indicate what expression will be substituted into $\bar{\bullet}$. If $\bar{\bullet}$ itself contains white holes, they will be filled by the white-hole-substituting value as well. For example, if we have $\rccon = \bar{\bullet} + \circ + 2$, then $\redcon{0}{\circ + 1} = (0 + 1) + 0 + 2$.
  
  We will also be using Lemma~3.4 from \cite{FromOperationalToDenotational}:
  
  \begin{restatable}[Unique context factorization]{lemma}{uniquefactor}
    \label{lemma:unique_factor}
    If $\exprcon[\expr] \smallstep[] \expr'$, there exists unique $\rccon$ such that $\exprcon[\circ] = \redcon{\circ}{\exprcon'[\circ]}$, where $\exprcon'[\circ] = \redexcon[\circ]$ for some redex context $\redexcon$, or $\exprcon'[\circ] = \circ$.
  \end{restatable}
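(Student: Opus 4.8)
The plan is to recognize this as the standard \emph{unique decomposition} property of a contextual operational semantics, lifted from plain terms to parametric contexts, and to prove it by structural induction on $\exprcon$. The workhorse is the ordinary fact that every closed expression which is neither a value nor \eerror\ splits uniquely as $\reductioncon[r]$ for a reduction context $\reductioncon$ and a redex $r$; I would first record (or re-prove, by a short induction on $\exprcon[\expr]$) this, and then transport the split back through the white holes of $\exprcon$.

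For the base cases of the induction on $\exprcon$: if $\exprcon = \circ$ then $\exprcon[\expr] = \expr$ takes the step, so I take $\rccon = \bar{\bullet}$ with $\exprcon'[\circ] = \circ$; if $\exprcon$ is a value context other than $\circ$, then $\exprcon[\expr]$ is a value, no step is possible, and the claim holds vacuously; and if $\exprcon$ is a hole-free expression, I apply unique decomposition, reading the resulting reduction context as a white-hole-free $\rccon$ and the resulting redex as a white-hole-free redex context $\redexcon$ (every redex is a redex context containing no $\circ$).

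For the inductive step I case on the compound form of $\exprcon$; the application case $\exprcon = \exprcon_1\ \exprcon_2$ is representative, the operator, conditional, and pattern-match cases being handled the same way via the matching clauses of the reduction-context, $\rccon$, and $\redexcon$ grammars. Unique decomposition of $\exprcon_1[\expr]\ \exprcon_2[\expr]$ puts the fired redex either (i) strictly inside $\exprcon_1[\expr]$, (ii) strictly inside $\exprcon_2[\expr]$ with $\exprcon_1[\expr]$ a value, or (iii) at the application node itself, with both $\exprcon_1[\expr]$ and $\exprcon_2[\expr]$ values. In case (iii), an expression context whose filling is a value is necessarily a value context, so $\exprcon_i$ is some $\evalcon_i$, and I set $\rccon = \bar{\bullet}$ and $\exprcon'[\circ] = \evalcon_1[\circ]\ \evalcon_2[\circ] = \redexcon[\circ]$ with $\redexcon = \evalcon_1\ \evalcon_2$. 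In case (i), $\exprcon_1[\expr]$ is not a value and hence steps, so the induction hypothesis on $\exprcon_1$ supplies $\rccon_1$ and $\exprcon_1'$, and I take $\rccon = \rccon_1\ \exprcon_2[\circ]$ and $\exprcon' = \exprcon_1'$; case (ii) is symmetric, using additionally that $\exprcon_1$ must be a value context.

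The step I expect to be the main obstacle is uniqueness, where two subtleties arise. First, the bare equation $\exprcon[\circ] = \redcon{\circ}{\exprcon'[\circ]}$ does not pin down $\rccon$ on its own --- for $\exprcon = \circ\ \circ$ one could place $\bar{\bullet}$ at the application node or at either hole --- so uniqueness must be read as uniqueness of the factorization that realizes the redex actually fired, i.e. the one in which the content of $\bar{\bullet}$ (after filling with $\expr$) is the redex that is reduced; since no value is a redex, this forbids putting $\bar{\bullet}$ at a white hole unless $\expr$ genuinely occupies a redex position there, and uniqueness then inherits directly from uniqueness of the underlying reduction-context/redex split. Second, the two disjuncts in the conclusion are disjoint: no production of the redex-context grammar can equal $\circ$, so ``$\exprcon'$ is a redex context'' and ``$\exprcon' = \circ$'' never overlap. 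With these remarks the induction closes; since this is precisely Lemma~3.4 of \cite{FromOperationalToDenotational} instantiated to the present language, in the paper I would cite it directly and confine the bookkeeping above to a remark.
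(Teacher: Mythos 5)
Your proof is correct, but it does considerably more work than the paper, which proves this lemma by citation alone: the entire proof in the appendix reads ``This is Corollary~3.5 in the aforecited paper,'' deferring to \cite{FromOperationalToDenotational}. Your closing remark that you would ultimately cite that result is therefore exactly what the paper does; everything before it is a self-contained reconstruction that the paper omits. The reconstruction itself is sound: the induction on $\exprcon$ with the three-way case split at each compound node (redex strictly left, strictly right with the left a value, or at the node itself) is the standard argument, and your observation that ``an expression context whose filling is a value is necessarily a value context'' is the key fact that makes case (iii) close. Your discussion of uniqueness is a genuine improvement on the statement as printed: as you note, for $\exprcon = \circ\ \circ$ both $\rccon = \bar{\bullet}$ with $\exprcon' = \circ\ \circ$ and $\rccon = \bullet\ \circ$ with $\exprcon' = \circ$ satisfy the bare syntactic equation, so uniqueness only holds relative to the factorization that realizes the redex actually contracted; the paper never surfaces this, and it matters for how the lemma is invoked in the soundness proof. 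One small caveat: your inductive construction $\rccon = \rccon_1\ \exprcon_2[\circ]$ requires reading the $\rccon$ grammar recursively (frames built over $\rccon$, not over a plain $\reductioncon$), which the printed productions in Figure~11 do not literally allow; however, the paper's own worked example $\rccon = \bar{\bullet} + \circ + 2$ already requires that recursive reading, so this is a defect of the printed grammar rather than of your argument, and is worth flagging rather than silently absorbing.
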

  
  \begin{proof}
    This is Corollary 3.5 in the aforecited paper.
  \end{proof}
  
  Furthermore, we will define what it means for a redex context to be \textit{parametric} with respect to white holes, $\circ$. Intuitively, it refers to a redex context which, when filled, its contractum will be independent of what values we use to fill in the $\circ$. Its formal definition is as follows:
  
  \begin{definition}
    \label{def:parametric_redcon}
    Redex context $\redexcon$ is \emph{parametric} iff $\redexcon=$ (\lstinline|fun x -> |$\ \exprcon) \ \exprcon'$, $\redexcon=$ \lstinline|if true then $\exprcon$ else $\exprcon'$|, or $\redexcon=$ \lstinline|if false then $\exprcon$ else $\exprcon'$|.
  \end{definition}
  
  We establish the following lemma about parametric redex contexts.
  
  \begin{restatable}{lemma}{parametricredcon}
    \label{lemma:parametric_redcon}
    $\forall \exprcon.$ if $\exprcon[\circ] = \redcon{\circ}{\redexcon[\circ]}$ and $\redexcon$ is parametric, then $\forall \eval, \eval'.$ if $\eval$ is closed and $\exprcon[\eval] \smallstep[] \exprcon'[\eval]$, then $\exprcon[\eval'] \smallstep[] \exprcon'[\eval']$.
  \end{restatable}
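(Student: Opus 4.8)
The plan is to prove the lemma by a direct case analysis on the three shapes allowed for a parametric $\redexcon$ (Definition~\ref{def:parametric_redcon}), in each case exhibiting the one-step reduct of $\exprcon$ as an explicit context whose white holes mirror those of $\exprcon$. The single conceptual point is that for these three redex forms --- and, unlike $\evalcon\binop\evalcon$, $\evalcon\ \evalcon$, or $\matches{\evalcon}{p}$, \emph{only} for these --- both the applicable reduction rule and the \emph{shape} of the contractum are determined by the outer structure of $\redexcon$ alone, never by the values sitting in the white holes $\circ$.

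First I would record a ``uniform reduction'' observation. Write $\exprcon[\circ] = \redcon{\circ}{\redexcon[\circ]} = \rccon[\circ]\lbbr\redexcon[\circ]\rbbr$ with $\redexcon$ parametric, noting that in the application case the argument position is a value context, so $\redexcon[v]$ is a genuine redex for every value $v$. Define a context $\exprcon_c$ by cases: if $\redexcon = (\mkfunv{\ev}{\exprcon_1})\ \evalcon_2$ then $\exprcon_c \defeq \substitute{\exprcon_1}{\evalcon_2}{\ev}$ (reading the substitution as an operation on contexts, with the usual renaming convention on $\ev$ so that filling a hole with a closed value commutes with it); if $\redexcon = \ife{\vtrue}{\exprcon_1}{\exprcon_2}$ then $\exprcon_c \defeq \exprcon_1$; and if $\redexcon = \ife{\vfalse}{\exprcon_1}{\exprcon_2}$ then $\exprcon_c \defeq \exprcon_2$. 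Then for \emph{every} value $\eval''$ one has, by rule \textsc{Appl} (resp.\ \textsc{If-True}, \textsc{If-False}) followed by rule \textsc{Red}, that $\exprcon[\eval''] = \rccon[\eval'']\lbbr\redexcon[\eval'']\rbbr \smallstep[] \rccon[\eval'']\lbbr\exprcon_c[\eval'']\rbbr$. Crucially the context $\rccon[\circ]\lbbr\exprcon_c[\circ]\rbbr$ does not depend on $\eval''$; this is exactly parametricity, and it fails for the other redex-context forms (e.g.\ $\circ\binop\circ$ steps to an integer on a numeric filling but to $\eerror$ on a boolean one).

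Next I would tie this to the hypothesis $\exprcon[\eval]\smallstep[]\exprcon'[\eval]$ with $\eval$ closed. Since $\redexcon$ is parametric, $\redexcon[\eval]$ is a genuine redex, and since $\rccon[\eval]$ is a reduction context in the sense of Figure~\ref{fig:core_op_sem} (its black hole is in evaluation position and the surrounding subexpressions become ordinary expressions once the white holes are filled), the subterm $\redexcon[\eval]$ is the unique redex of $\exprcon[\eval]$ lying in an evaluation position; and as $\redexcon$ is a $\beta$- or \texttt{if}-redex, its contraction is deterministic (the rules \textsc{Nondet-int}/\textsc{Nondet-bool} are never in play). Hence the given step must be the one produced above, and appealing to the uniqueness part of Lemma~\ref{lemma:unique_factor} for the factorization of $\exprcon$, we may identify $\exprcon'$ with the hole-tracked reduct context $\rccon[\circ]\lbbr\exprcon_c[\circ]\rbbr$. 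Instantiating the uniform-reduction observation at $\eval'$ then gives $\exprcon[\eval'] = \rccon[\eval']\lbbr\redexcon[\eval']\rbbr \smallstep[] \rccon[\eval']\lbbr\exprcon_c[\eval']\rbbr = \exprcon'[\eval']$, as required.

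I expect the main obstacle to be bookkeeping around white holes rather than anything deep: precisely, (i) making sense of $\substitute{\exprcon_1}{\evalcon_2}{\ev}$ as a context and checking that filling its holes with a (closed) value commutes with the substitution, which needs the standard renaming convention on the binder $\ev$; and (ii) pinning down that the $\exprcon'$ supplied by the hypothesis really is the hole-for-hole reduct of $\exprcon$ and not some coincidentally equal context --- this is where ``$\eval$ is closed'' together with the uniqueness in Lemma~\ref{lemma:unique_factor} does its work. Neither point involves real computation; the reduction-rule case analysis itself is immediate.
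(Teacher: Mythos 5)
Your proposal is correct and follows essentially the same route as the paper's proof: a case analysis on the three parametric redex-context shapes, exhibiting the reduct as a hole-tracked context ($\exprcon_1$, $\exprcon_2$, or $\substitute{(\exprcon_1[\circ])}{\exprcon_2[\circ]}{\ev}$) that is independent of the value filling the white holes, with closedness of $\eval$ ensuring substitution commutes with hole-filling. Your extra care in pinning down $\exprcon'$ via the uniqueness part of Lemma~\ref{lemma:unique_factor} is a minor elaboration of a step the paper treats implicitly, not a different argument.
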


\begin{proof}
  Proceed by case analysis on $\redexcon$.
  \begin{enumerate}
    \item \sloppy \lstinline|if true then $\exprcon_1$ else $\exprcon_2$|: In this case, $\redexcon[\eval] = \ife{\vtrue}{\exprcon_1[\eval]}{\exprcon_2[\eval]}$. By operational semantics, $\redexcon[\eval] \smallstep[] \exprcon_1[\eval]$. Similarly, we have $\redexcon[\eval'] \smallstep[] \exprcon_1[\eval']$. Since $\exprcon[\eval] = \redcon{\eval}{\redexcon[\eval]}$ and $\exprcon[\eval'] = \redcon{\eval'}{\redexcon[\eval']}$, we can conclude $\exprcon[\eval] \smallstep[] \redcon{\eval}{\exprcon_1[\eval]}$ and $\exprcon[\eval'] \smallstep[] \redcon{\eval'}{\exprcon_1[\eval']}$, where $\redcon{\circ}{\exprcon_1[\circ]} = \exprcon'$.
    \item \lstinline|if false then $\exprcon_1$ else $\exprcon_2$|: Similar to the \lstinline|true| case.
    \item (\lstinline|fun x -> |$\ \exprcon_1) \ \exprcon_2$: In this case, $\redexcon[\eval] = (\mkfunv{\ev}{\exprcon_1[\eval]}) \ \exprcon_2[\eval]$. By operational semantics, $\redexcon[\eval] \smallstep[] \substitute{(\exprcon_1[\eval])}{\exprcon_2[\eval]}{\ev}$. Similarly, we have $\redexcon[\eval'] \smallstep[] \substitute{(\exprcon_1[\eval'])}{\exprcon_2[\eval']}{\ev}$. Since $\eval$ is closed, we know that there won't be any substitution happening to $\eval$ itself in $\exprcon_1$, which means we can factor a new context, $\exprcon' = \substitute{(\exprcon_1[\circ])}{\exprcon_2[\circ]}{\ev}$. Since $\exprcon[\eval] = \redcon{\eval}{\redexcon[\eval]}$ and $\exprcon[\eval'] = \redcon{\eval'}{\redexcon[\eval']}$, we can conclude $\exprcon[\eval] \smallstep[] \redcon{\eval}{\exprcon'[\eval]}$ and $\exprcon[\eval'] \smallstep[] \redcon{\eval'}{\exprcon'[\eval']}$.
  \end{enumerate}
\end{proof}

\subsubsection*{Soundness}
With this new notation we now prove a generalized Lemma from which soundness will be an easy corollary.
  
\soundnessextra*

\begin{proof}
  We will prove this lemma by induction on the size of $\syntype$.

  \textbf{Base case: }$\syntype = \tint$

  Consider \textbf{clause (1)}. Because $\mkgen{\tint} = \picki$, by operational semantics, it follows that $\picki \smallstep[] \eval$. Therefore, we have $\exprcon[\mkgen{\tint}] \smallsteps[] \exprcon[\eval]$, thus we can conclude $\exprcon[\mkgen{\tint}] \smallsteps[] \eerror$.

  Consider \textbf{clause (2)}. Given $\not \models \hastype{\expr}{\tint}$ and $\expr \smallsteps[] \eval$, we can conclude that $\eval \not\in \mathbb{Z}$. By definition, $\mkche{\tint}{\expr}$ = $\matches{\expr}{\tint}$. Since $\expr \smallsteps[] \eval$ and $\eval \not\in \mathbb{Z}$, we can conclude $\matches{\expr}{\tint} \smallsteps[] \gtfalse$, establishing this case.

  The proof for the case where $\syntype = \tbool$ is very similar, so we omit it here for brevity.

  \textbf{Inductive step: }$\syntype = \mkfun{\syntype_1}{\syntype_2}$

    Consider \textbf{clause (1)}. 

    Given that $\models \hastype{\eval}{\syntype}$, and $\exprcon[\eval] \smallsteps[] \eerror$, we need to show that $\exprcon[\mkgen{\syntype}] \smallsteps[] \eerror$. We will prove this by induction on the length of the $\exprcon[\eval] \smallsteps[] \eerror$ computation.

    We omit the trivial base case where $\exprcon = \eerror$, in which $\exprcon[\eval]$ will step to \lstinline|ERROR| in zero steps. 

    \textbf{Base case: }$\exprcon[\eval] \smallstep[] \eerror$

    By Lemma~\ref{lemma:unique_factor}, we know that there exists unique $\rccon$ and $\exprcon'$ such that $\exprcon[\circ] = \redcon{\circ}{\exprcon'[\circ]}$, where $\exprcon'[\circ] = \redexcon[\circ]$ or $\exprcon'[\circ] = \circ$. However, here we do not have to consider the case where $\exprcon'[\circ] = \circ$; since we're filling white holes with a value, a value by itself cannot be a redex. 

    There are two cases here we need two consider: 
    \begin{enumerate}
      \item $\redexcon$ is parametric: $\exprcon[\eval]$ will have to take more than one step to evaluate to an \lstinline|ERROR| expression, thus it cannot be in the base case.
      \item $\redexcon$ is not parametric: We proceed by case analysis on the redex context, $\redexcon$.

      \begin{enumerate}
        \item $\redexcon = \evalcon_1 + \evalcon_2$: According to the operational semantics, addition will evaluate to \lstinline|ERROR| if and only if one of the operands is a non-integer. If $\evalcon_1 = \circ$ or $\evalcon_2 = \circ$, since $\eval$ and $\mkgen{\syntype}$ both are function values, we can safely conclude that $\redexcon[\eval]$ and $\redexcon[\mkgen{\syntype}]$ will both evaluate to \lstinline|ERROR|. The other scenario is where at least one of the operands takes the form of \lstinline|fun x -> $\ \exprcon_f$| or $b$, where $b$ is a boolean value; in both cases, the expression will evaluate to \lstinline|ERROR| regardless of which value gets substituted into $\circ$.
        \item $\redexcon = \evalcon_1 \ \evalcon_2$: According to the operational semantics, function application will evaluate to \lstinline|ERROR| if and only if $\evalcon_1$ is an integer or boolean value. This implies $\evalcon_1$ doesn't contain any $\circ$, since the relevant values being substituted here are functions. Therefore, this can be reduced to the parametric case above.
        \item $\redexcon = \ife{\evalcon}{\exprcon_2}{\exprcon_3}$: Since the expression steps to an \lstinline|ERROR|, we know that $\evalcon[\eval]$ is not a boolean value. This implies that $\evalcon = \circ$, or $\evalcon = n$ where $n \in \mathbb{Z}$, or $\evalcon =\ $\lstinline|fun x -> $\ \exprcon_f$|. In all three cases, when we substitute $\mkgen{\syntype}$ into $\circ$, $\evalcon[\mkgen{\syntype}]$ still will not be a boolean value. Therefore, we can conclude that $\exprcon[\mkgen{\syntype}] \smallstep[] \eerror$.
        \item $\redexcon = \matches{\evalcon}{p}$: This is an impossible case. Since $\matches{\eval}{p}$ always returns a value, we know that the overall expression, $\exprcon[\eval]$, will have to take more than one step to evaluate to an \lstinline|ERROR| expression in this case. 
      \end{enumerate}

    \end{enumerate}

    \textbf{Inductive step: }$\exprcon[\eval] \smallsteps[] \eerror$

    Let us examine the first step in the given computation, which is effectively $\exprcon[\eval] \smallstep[] \expr_1$ for some intermediate evaluation result, $\expr_1$.
    
    As with the base case, we know that there exists unique $\rccon$ and $\redexcon$ such that $\exprcon[\circ] = \redcon{\circ}{\redexcon[\circ]}$. There are two main scenarios to consider:

    \begin{enumerate}
      \item $\redexcon$ is parametric: By Lemma~\ref{lemma:parametric_redcon}, we know that $\exprcon[\eval] \smallstep[] \exprcon'[\eval]$ implies $\exprcon[\mkgen{\syntype}] \smallstep[] \exprcon'[\mkgen{\syntype}]$. The result then follows by the inner induction hypothesis. 
      
      \item $\redexcon$ is not parametric: We proceed by case analysis on the redex context, $\redexcon$.
      \begin{enumerate}

        \item $\redexcon = \evalcon_1 + \evalcon_2$: According to the operational semantics, there are two cases to consider here. If both operands are integers, it implies that $\evalcon_1$ and $\evalcon_2$ don't contain any holes, which means $\redexcon[\eval] = \redexcon[\mkgen{\syntype}]$, and the rest follows by the induction hypothesis. The case where either one of the operands is not an integer is handled by the base case, since this will result in an immediate \lstinline|ERROR| expression after a single step of evaluation. 
        
        \item $\redexcon = \evalcon_1 \ \evalcon_2$: We've already covered the case of applying to a non-function in the base case. Now, we only need to consider the cases where $\evalcon_1 = \circ$, since $\evalcon_1 =\ $\lstinline|fun x -> $\ \exprcon_f$| for some $\exprcon_f$ is one of the cases where the redex is parametric. 
        
        Let $\eval_g = \mkgen{\mkfun{\syntype_1}{\syntype_2}}$. Consider the application, $\redexcon[\eval_g] = \evalcon_1[\eval_g] \ \evalcon_2[\eval_g] = \eval_g \ \evalcon_2[\eval_g]$. 
        Now consider the type of $\evalcon_2[\eval_g]$. There are two cases to consider here:
      
        \begin{enumerate}
          \item Case $\not\models\hastype{\evalcon_2[\eval_g]}{\syntype_1}$, since we have
          \begin{lstlisting}
$\eval_g$ = fun x -> if $\pickb$ then 
  if $\mkche{\syntype_1}{\texttt{x}}$ then $\mkgen{\syntype_2}$ else $\eerror$ 
else $\mkgen{\syntype_2}$
          \end{lstlisting}
        
          We know that 
          \begin{lstlisting}
$\eval_g$ $\evalcon_2[\eval_g]$ = if $\pickb$ then 
if $\mkche{\syntype_1}{\evalcon_2[\eval_g]}$ then $\mkgen{\syntype_2}$ else $\eerror$ 
else $\mkgen{\syntype_2}$
          \end{lstlisting}
        
          Since $\not\models\hastype{\evalcon_2[\eval_g]}{\syntype_1}$, and $\syntype_1$ is a smaller type, by induction hypothesis \textbf{clause (2)}, we know that $\mkche{\syntype_1}{\evalcon_2[\eval_g]} \smallsteps[] \gtfalse$ or $\mkche{\syntype_1}{\evalcon_2[\eval_g]} \smallsteps[] \eerror$. Either way, it will result in $\eval_g \ \evalcon_2[\eval_g] \smallsteps[] \eerror$ where $\pickb \smallstep[] \gttrue$, thus $\exprcon[\eval_g] \smallsteps[] \eerror$. 
              
        \item Case $\models\hastype{\evalcon_2[\eval_g]}{\syntype_1}$: Since $\eval$ is a function, we know that $\eval = \mkfunv{\ev}{\exprcon_v}$ for some context, $\exprcon_v$. Consider w.l.o.g. evaluations where $\pickb \smallstep[] \gtfalse$, i.e. the cases where the argument is not checked, we have $\redexcon[\eval] = \eval \ \evalcon_2[\eval] \smallsteps[] \substitute{\exprcon_v}{\evalcon_2[\eval]}{\ev}$. Putting it back into the overall context, we have $\redcon{\eval}{\substitute{\exprcon_v}{\evalcon_2[\eval]}{\ev}} \smallsteps[] \eerror$. Since this is smaller than the original computation, $\redcon{\eval}{\eval \ \evalcon_2[\eval]} \smallsteps[] \eerror$, by inner induction hypothesis, we have $\redcon{\eval_g}{\substitute{\exprcon_v}{\evalcon_2[\eval_g]}{\ev}} \smallsteps[] \eerror$. Now, consider the application, $\eval \ \evalcon_2[\eval_g]$ where $\pickb \smallstep[] \gtfalse$. By operational semantics, we have $\eval \ \evalcon_2[\eval_g] \smallsteps[] \substitute{\exprcon_v}{\evalcon_2[\eval_g]}{\ev}$. Since we have $\models\hastype{\evalcon_2[\eval_g]}{\syntype_1}$, and that $\models\hastype{\eval}{\mkfun{\syntype_1}{\syntype_2}}$, we can conclude that $\models\hastype{\substitute{\exprcon_v}{\evalcon_2[\eval_g]}{\ev}}{\syntype_2}$. Since $\syntype_2$ is a smaller type and that $\redcon{\eval_g}{\substitute{\exprcon_v}{\evalcon_2[\eval_g]}{\ev}} \smallsteps[] \eerror$, by induction hypothesis on $\textbf{clause (1)}$, we can conclude that $\redcon{\eval_g}{\mkgen{\syntype_2}} \smallsteps[] \eerror$. By operational semantics, we know that $\redexcon[\eval_g] = \eval_g \ \evalcon_2[\eval_g] \smallsteps[] \mkgen{\syntype_2}$ when $\pickb \smallstep[] \gtfalse$, therefore we can conclude that $\exprcon[\eval_g] = \redcon{\eval_g}{\redexcon[\eval_g]} \smallsteps[] \eerror$.
        
      \end{enumerate}

        \item $\redexcon =\ $\lstinline|if $\evalcon_1$ then $\exprcon_2$ else $\exprcon_3$|: Since the expression doesn't immediately step to an \lstinline|ERROR|, we know that $\evalcon[\eval]$ is a boolean value. This implies that $\evalcon_1 = \vtrue$ or $\evalcon_1 = \vfalse$, reducing it to the case where $\redexcon$ is parametric.
        
        \item $\redexcon = \matches{\evalcon}{p}$: There are three possible cases to consider here: $\evalcon[\eval]$ is an integer, or $\evalcon[\eval]$ is a boolean, or $\evalcon[\eval]$ is a function value. In the first two cases, we can conclude that $\evalcon$ doesn't contain any holes, which means $\evalcon[\eval] = \evalcon[\mkgen{\syntype}]$. Therefore, we only need to consider the scenario where $\evalcon[\eval]$ is a function value, which means either $\evalcon = \circ$ or $\evalcon = \mkfunv{\ev}{\exprcon_f[\circ]}$. In both cases, it doesn't matter if we're substituting $\eval$ or $\mkgen{\syntype}$ into the hole, because the substitution result will always be functions, giving us the same pattern-matching results for both $\redexcon[\eval]$ and $\redexcon[\mkgen{\syntype}]$. 
      \end{enumerate}
    \end{enumerate}
  
  \sloppy
  Consider \textbf{clause (2)}. We need to prove that if $\expr \smallsteps[] \eval_f$ and that $\not \models \hastype{\eval_f}{\mkfun{\syntype_1}{\syntype_2}}$, then $\mkche{\mkfun{\syntype_1}{\syntype_2}}{\eval_f} \smallsteps[] \eerror$ or $\mkche{\mkfun{\syntype_1}{\syntype_2}}{\eval_f} \smallsteps[] \gtfalse$. Expanding the definition, we have $\mkche{\mkfun{\syntype_1}{\syntype_2}}{\eval_f} = $ \lstinline|if $\expr\ \tttilde$ fun then let arg = |\mkgen{$\syntype_1$}\ \lstinline| in |\mkche{$\syntype_2$}{\lstinline|$\expr\ $ arg|} \lstinline| else false|. If $\eval_f$ is not a function value, we will have $\matches{\expr}{\gtfun} \smallsteps[] \gtfalse$, which means $\mkche{\mkfun{\syntype_1}{\syntype_2}}{\eval_f} \smallstep[] \gtfalse$. Now, let us consider the case where $\eval_f$ is a function value. Since $\not \models \hastype{\eval_f}{\mkfun{\syntype_1}{\syntype_2}}$, we know that there must exist some $\eval_0$ such that $\models \hastype{\eval_0}{\syntype_1}$ and $\not \models \hastype{\eval_f \ \eval_0}{\syntype_2}$. Since $\syntype_2$ is a smaller type, by induction hypothesis on \textbf{clause (2)}, we can conclude that $\neg\tc{\eval_f \ \eval_0}{\syntype_2}$, which means that $\mkche{\syntype_2}{\eval_f \ \eval_0} \smallsteps[] \gtfalse$ or $\mkche{\syntype_2}{\eval_f \ \eval_0} \smallsteps[] \eerror$. Let $\exprcon = $ \lstinline|if |$\mkche{\syntype_2}{\eval_f \ \circ}$ \lstinline| then 1 else ERROR|. We know that $\exprcon[\eval_0] \smallsteps[] \eerror$. Since $\syntype_1$ is a smaller type, by induction hypothesis on \textbf{clause (1)}, we can conclude that $\exprcon[\mkgen{\syntype_1}] \smallsteps[] \eerror$. This implies that $\mkche{\syntype_2}{\eval_f \ \mkgen{\syntype_2}} \smallsteps[] \gtfalse$ or $\mkche{\syntype_2}{\eval_f \ \mkgen{\syntype_2}} \smallsteps[] \eerror$. Since $\mkche{\mkfun{\syntype_1}{\syntype_2}}{\eval_f} \smallsteps[] \mkche{\syntype_2}{\eval_f \ \mkgen{\syntype_2}}$ by operational semantics, we can conclude that $\mkche{\mkfun{\syntype_1}{\syntype_2}}{\eval_f} \smallsteps[] \gtfalse$ or $\mkche{\mkfun{\syntype_1}{\syntype_2}}{\eval_f} \smallsteps[] \eerror$, i.e. $\neg\tc{\eval_f}{\mkfun{\syntype_1}{\syntype_2}}$.
  

\end{proof}
 
\soundness*

\begin{proof}
  This is equivalent to showing: if $\not\models \hastype{\expr}{\syntype}$, then \mkche{\syntype}{\expr} $\smallsteps[] \eerror$ or \mkche{\syntype}{\expr} $\smallsteps[] \gtfalse$. Since we know that $\not \models \hastype{\expr}{\syntype}$, we can safely assume that $\expr \not\Uparrow$. 

  Consider the case where $\expr \smallsteps[] \eerror$. By the operational semantics, we know that if $\expr \smallsteps[] \eerror$, then \mkche{\syntype}{\expr} $\smallsteps[] \eerror$. Therefore, it suffices to show that $\forall \expr.$ if $\expr \smallsteps[] \eval$ and $\not \models \hastype{\eval}{\syntype}$, then $\neg\tc{\expr}{\syntype}$, which is case (2) in Lemma~\ref{lemma:soundness_extra}.

\end{proof}

\subsection{Proof for Extensions}
\label{app_proofs_ext}

This section contains proofs for the core language extended with features introduced in Definitions~\ref{def:lr_refinement_dependent} and \ref{def:lr_poly}. The full extended grammar is defined in Figure~\ref{fig:ext_grammar}, and the operational semantics for this language is defined in Figure~\ref{fig:extended_op_sem}. The definitions for bound, free, and closed are standard, and definitions of $\smallstep[]$, $\smallsteps[]$, and $\Uparrow$ are the same as Definition~\ref{def:eval_relation}. The full extended semantic typing as well as checker/generator/wrapper definitions are provided in Definitions~\ref{def:lr_ext}, \ref{def:checker_ext}, \ref{def:generator_ext}, and \ref{def:wrap_ext}.

\begin{figure}[hbt!]
  \begin{grammar}
  \grule[values]{\eval}{
    \mathbb{Z}
    \gor \mathbb {B}
    \gor
    \mkfunv{\ev}{\expr}
    \gor \texttt{\small V(}\alpha\texttt{\small)}
  }
  \grule[expressions]{\expr}{
      \eval
      \gor \ev
      \gor \expr\ \expr
      \gor \expr\ \binop\ \expr
      \gline 
      \gor \ife{\expr}{\expr}{\expr}
      \gor \matches{\expr}{p}
      \gline
      \gor \picki 
      \gor \pickb
      \gor \expr \simeq \tpoly
      \gor \mzero
      \gor \eerror
  }
  \grule[variables]{\ev}{
              \textit{(identifiers)}
  }
  \grule[patterns]{p}{
    \tint
    \gor \tbool
    \gor \pfun
  }
  \grule[binops]{\binop}{
    \gtplus
    \gor \gtminus
    \gor \gtless
    \gor \gteq
    \gor \gtand
    \gor \gtor
    \gor \gtxor
  }
  \grule[type variables]{\tpoly}{
              $\lstinline`'a`\ $
              \gor $\lstinline`'b`\ $
              \gor \ldots
  }
  \grule[types]{\syntype}{
    \tint
    \gor \tbool
    \gor \tfun
    \gline
    \gor \mktset{\syntype}{\expr}
    \gor \mkdfun{\ev}{\syntype}{\syntype}
    \gor \tpoly
  }
  \grule[redexes]{r}{
    \eval \binop \eval
    \gor \eval \ \eval
    \gor \ife{\vtrue}{\expr}{\expr}
    \gline
    \gor \ife{\vfalse}{\expr}{\expr}
    \gline
    \gor \matches{\eval}{p}
    \gor \picki \gor \pickb
  }
  \grule[reduction contexts]{\reductioncon}{
    \bullet
    \gor \reductioncon \ \expr
    \gor \expr \ \reductioncon
    \gor \ife{\reductioncon}{\expr}{\expr}
    \gline
    \gor \reductioncon \binop \expr
    \gor \eval \binop \reductioncon
    \gor \matches{\reductioncon}{p}
    \gor \reductioncon \simeq \tpoly
}
\end{grammar}
\caption{Extended Language Grammar}
\label{fig:ext_grammar}
\end{figure}

\begin{figure}[ht]
  \begin{mathpar}

      \bbrule{Red}{
        r \smallstep[] \expr
      }{
        R\lbbr r \rbbr \smallstep[] R'\lbbr\expr\rbbr
      }

      \bbrule{Err}{
        \expr \smallstep[] \eerror
      }{
        R\lbbr\expr\rbbr \smallstep[] \eerror
      }

      \bbrule{Add}{
        n_1, n_2 \in \mathbb{Z}
      }{
        n_1 + n_2 \smallstep[] \text{integer sum of } n_1 \text{ and } n_2
      }

      \bbrule{Add-Err}{
        \eval_1 \text{ or } \eval_2 \text{ is not an integer}
      }{
        \eval_1 + \eval_2 \smallstep[] \eerror
      }

      \bbrule{Appl}{
      }{
        (\mkfunv{\ev}{\expr_f}) \ \eval \smallstep[] \substitute{\expr_f}{\eval}{\ev}
      }

      \bbrule{Appl-Err}{
        \eval \text{ is not a function value}
      }{
        \eval \ \eval' \smallstep[] \eerror
      }

      \bbrule{If-True}{
        }{
          \ife{\vtrue}{\expr}{\expr'} \smallstep[] \expr
      }

      \bbrule{If-False}{
        }{
          \ife{\vfalse}{\expr}{\expr'} \smallstep[] \expr'
      }

      \bbrule{If-Err}{
        \eval \text{ is not a boolean value}
      }{
        \ife{\eval}{\expr}{\expr'} \smallstep[] \eerror
      }

      \bbrule{Nondet-int}{
        n \in \mathbb{Z}
      }{
        \picki \smallstep[] n
      }

      \bbrule{Nondet-bool}{
        b \in \mathbb{B}
      }{
        \pickb \smallstep[] b
      }

      \bbrule{Pattern}{
        \beta = \textsc{matches}(\eval, p)
      }{
        \matches{\eval}{p} \smallstep[] \beta
      }

      \bbrule{Opaque Pattern}{
      }{
        \matches{\texttt{V}\ttop\alpha\ttcp}{p} \smallstep[] \eerror
      }

      \bbrule{Mzero}{
          \expr \smallstep[] \mzero
      }{
        R\lbbr \expr \rbbr \smallstep[] \mzero
      }

      \bbrule{Poly-Check-True}{
        \eval = \texttt{V}\ttop\alpha\ttcp
      }{
        \eval \simeq \alpha \smallstep[] \vtrue
      }

      \bbrule{Poly-Check-False}{
        \eval \neq \texttt{V}\ttop\alpha\ttcp
      }{
        \eval \simeq \alpha \smallstep[] \vfalse
      }

  \end{mathpar}
  \caption{Operational Semantics for Extended Language}
  \label{fig:extended_op_sem}
\end{figure}

\begin{definition}[Semantic Types for Extended Language]
  \label{def:lr_ext}
  \ \par
  \begin{enumerate}
      \item $\ \models \hastype{\expr}{\tint}$ iff $\expr \nsmallsteps[] \eerror$ and $\forall \eval.$ if $\expr \smallsteps[] \eval$, then $\eval \in \mathbb{Z}$.
      \item $\ \models \hastype{\expr}{\tbool}$ iff $\expr \nsmallsteps[] \eerror$ and $\forall \eval.$ if $\expr \smallsteps[] \eval$, then $\eval \in \mathbb{B}$.
      \item $\ \models \hastype{\expr}{\mkfun{\syntype_1}{\syntype_2}}$ iff $\expr \nsmallsteps[] \eerror$ and $\forall \eval_f.$ if $\expr \smallsteps[] \eval_f$, then $\forall{\eval}.$ if $\models \hastype{\eval}{\syntype_1}$, then $\models \hastype{\eval_f\ \eval}{\syntype_2}$.
      \item $\ \models \hastype{\expr}{\mktset{\syntype}{\expr_p}}$ iff $\expr \nsmallsteps[] \eerror$, $\models \hastype{\expr_p}{\mkfun{\syntype}{\tbool}}$, and $\forall \eval.$ if $\expr \smallsteps[] \eval$, then $\ \models \hastype{\eval}{\syntype}$ and $\forall \eval_p.$ if $\expr_p \ \eval \smallsteps[] \eval_p$, then $\eval_p = \gttrue$.
      \item $\ \models \hastype{\expr}{\mkdfun{\ev}{\syntype_1}{\syntype_2}}$ iff $\expr \nsmallsteps[] \eerror$, and $\forall{\eval_f}.$ if $\expr \smallsteps[] \eval_f$, then $\forall \eval.$ if $\models \hastype{\eval}{\syntype_1}$, then $\models \hastype{\eval_f\ \eval}{\substitute{\syntype_2}{\eval}{\ev}}$.
      \item $\ \models \hastype{\expr}{\tpoly}$ iff $\expr \nsmallsteps[] \eerror$ and $\forall \eval.$ if $\expr \smallsteps[] \eval$, then $\eval \simeq \alpha \smallsteps[] \vtrue$.
   \end{enumerate}
\end{definition}

\begin{definition}[Checker for Extended Language]
  \label{def:checker_ext}
  \ \par
  \begin{enumerate}
      \item \mkche{\tint}{\expr} = $\expr$ $\tttilde$ int
      \item \mkche{\tbool}{\expr} = $\expr$ $\tttilde$ bool
      \item $\mkche{\mkfun{\syntype_1}{\syntype_2}}{\expr}$ = \\ 
      \lstinline|if $\expr\ \tttilde$ fun then let arg = |\mkgen{$\syntype_1$}\ \lstinline| in |\mkche{$\syntype_2$}{\lstinline|$\expr\ $ arg|} \lstinline| else false|
      \item \mkche{$\mktset{\syntype}{\expr_p}$}{\expr} = \mkche{\syntype}{\expr}\ \ \lstinline| and | ($\expr_p$ \ \expr)
      \item \mkche{$\mkdfun{\ev}{\syntype_1}{\syntype_2}$}{\expr} = \\\lstinline|if $\expr\ \tttilde$ fun then let arg =|\ \mkgen{$\syntype_1$}\ \lstinline|in| \mkche{\substitute{$\syntype_2$}{\lstinline|arg|}{\ev}}{\lstinline|($\expr\ $ arg)|}\ \lstinline| else false|
      \item \mkche{$\tpoly$}{$\expr$} = $\expr \simeq \tpoly$
   \end{enumerate}
\end{definition}

\begin{definition}[Generator for Extended Language]
  \label{def:generator_ext}
  \ \par 
  \begin{enumerate}
    \item \mkgen{tint} = \lstinline|pick_i|
    \item \mkgen{\tbool} = \lstinline|pick_b|
    \item \mkgen{$\mkfun{\syntype_1}{\syntype_2}$} = \lstinline|fun x -> $\ $if| \pickb \ \lstinline|then|
    
    \quad \lstinline|if| $\mkche{\syntype_1}{\texttt{x}}$ \lstinline|then| $\mkgen{\syntype_2}$ \lstinline|else| \eerror 
    
    \lstinline|else| $\mkgen{\syntype_2}$
    \item \mkgen{$\mktset{\syntype}{\expr_p}$} = \lstinline|let gend = | \mkgen{\syntype} \lstinline| in if ($\expr_p$ gend) then gend else mzero|
    
    \item \mkgen{$\mkdfun{\ev}{\syntype_1}{\syntype_2}$} = \lstinline|fun $\ev'$ ->$\ $if| \pickb \ \lstinline|then|
    
    \quad \lstinline|if| $\mkche{\syntype_1}{\ev'}$ \lstinline|then| \mkgen{\substitute{$\syntype_2$}{$\ev'$}{$\ev$}} \lstinline|else| \eerror 
    
    \lstinline|else| \mkgen{\substitute{$\syntype_2$}{$\ev'$}{$\ev$}}

    \item \mkgen{$\tpoly$} = \lstinline`V($\alpha$)`
  \end{enumerate}
\end{definition}

\begin{definition}[Wrapper for Extended Language]
  \label{def:wrap_ext}
  \ \par
  \begin{enumerate}
      \item \mkwrap{\tint}{\expr} = \expr
      \item \mkwrap{\tbool}{\expr} = \expr
      \item \mkwrap{$\mkfun{\syntype_1}{\syntype_2}$}{\expr} = \lstinline|fun x ->$\ $if| $\pickb$ \lstinline|then|
      
      \quad \lstinline|if| $\mkche{\syntype_1}{\texttt{x}}$ \lstinline|then| $\mkwrap{\syntype_2}{\expr \ \mkwrap{\syntype_1}{\texttt{x}}}$ \lstinline|else ERROR|\\
      \lstinline|else| $\mkwrap{\syntype_2}{\expr \ \mkwrap{\syntype_1}{\texttt{x}}}$ 

      \item $\mkwrap{\mktset{\syntype}{\expr_p}}{\expr} = \mkwrap{\syntype}{\expr}$
      \item $\mkwrap{\mkdfun{\ev}{\syntype_1}{\syntype_2}}{\expr} = $ \lstinline|fun $\ev'$ ->$\ $if| $\pickb$ \lstinline|then|
        
      \quad \lstinline|if| \mkche{$\syntype_1$}{$\ev'$} \lstinline|then| \mkwrap{\substitute{$\syntype_2$}{$\ev'$}{$\ev$}}{$\expr$ \ \mkwrap{$\syntype_1$}{$\ev'$}} \lstinline|else ERROR|\\
      \lstinline|else| \mkwrap{\substitute{$\syntype_2$}{$\ev'$}{$\ev$}}{$\expr$ \ \mkwrap{$\syntype_1$}{$\ev'$}} 
      \item $\mkwrap{\tpoly}{\expr} = \expr$
   \end{enumerate}
\end{definition}

\begin{lemma}[Size of Type Subsitution]
  \label{lemma:size_of_type_subs}
  $\forall\syntype, \expr, \ev. \textsc{size(\substitute{\syntype}{\eval}{\ev})} \leq \textsc{size($\syntype$)}$.
\end{lemma}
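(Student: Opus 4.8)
The plan is to prove this by structural induction on the syntactic type $\syntype$, exploiting the fact that $\textsc{size}$ counts only the type constructors occurring in $\syntype$ and is completely insensitive to the program expressions embedded inside refinement predicates. The substitution $\substitute{\syntype}{\eval}{\ev}$ replaces a \emph{program} variable $\ev$ by a value, and in the grammar of Figure~\ref{fig:ext_grammar} a program variable can occur inside a type only within such embedded expressions (the predicate $\expr_p$ of a refinement type $\mktset{\syntype}{\expr_p}$, and hence inside the codomain $\syntype_2$ of a dependent function type $\mkdfun{\ev'}{\syntype_1}{\syntype_2}$ only through further nested refinements). Consequently the substitution can never introduce a new type constructor, so in fact $\textsc{size}$ is unchanged; proving the stated $\leq$ is more than enough and keeps the argument uniform.

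First I would dispatch the base cases $\syntype \in \{\tint, \tbool, \tpoly\}$: here $\substitute{\syntype}{\eval}{\ev} = \syntype$ (for $\tint$ and $\tbool$ there is nothing to substitute; $\tpoly$ is a type variable, syntactically distinct from the program variable $\ev$), so $\textsc{size}(\substitute{\syntype}{\eval}{\ev}) = \textsc{size}(\syntype) = 1$. Then the inductive cases. For $\syntype = \mkfun{\syntype_1}{\syntype_2}$, substitution distributes over the two components and the induction hypothesis gives $\textsc{size}(\substitute{\syntype_i}{\eval}{\ev}) \leq \textsc{size}(\syntype_i)$ for $i \in \{1,2\}$, whence $\textsc{size}(\substitute{\syntype_1}{\eval}{\ev} \texttt{ -> } \substitute{\syntype_2}{\eval}{\ev}) = \textsc{size}(\substitute{\syntype_1}{\eval}{\ev}) + \textsc{size}(\substitute{\syntype_2}{\eval}{\ev}) + 1 \leq \textsc{size}(\syntype_1) + \textsc{size}(\syntype_2) + 1 = \textsc{size}(\syntype)$. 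For $\syntype = \mktset{\syntype'}{\expr_p}$, substitution touches $\syntype'$ and the embedded $\expr_p$, but since $\textsc{size}(\mktset{\syntype'}{\expr_p}) = \textsc{size}(\syntype') + 1$ ignores $\expr_p$, the induction hypothesis on $\syntype'$ yields the bound immediately. For $\syntype = \mkdfun{\ev'}{\syntype_1}{\syntype_2}$ I would split on whether $\ev' = \ev$: if so, the binder shadows $\ev$ and substitution recurses only into $\syntype_1$, leaving $\syntype_2$ untouched; otherwise it recurses into both, with the usual capture-avoidance on $\ev'$, which does not affect $\textsc{size}$. In either subcase the induction hypothesis bounds each recursive result and $\textsc{size}$ of the dependent arrow is again the sum of the component sizes plus one, so the inequality follows.

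There is no substantial obstacle here: the only points requiring care are (i) making explicit that a program variable never appears as a top-level type, so substitution cannot fabricate type structure, and (ii) handling the binder in the dependent-function case so that the recursion is well-defined. The lemma is a routine structural induction whose sole downstream purpose is to keep the ``smaller type'' induction in the soundness proof for dependent function types well-founded, since it gives $\textsc{size}(\substitute{\syntype_2}{\eval}{\ev}) \leq \textsc{size}(\syntype_2) < \textsc{size}(\mkdfun{\ev}{\syntype_1}{\syntype_2})$.
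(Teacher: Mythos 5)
Your proof is correct and takes essentially the same approach as the paper's: an induction over the structure (equivalently, the size) of $\syntype$, with the base types unchanged by substitution, the arrow and dependent-arrow cases handled by summing the inductive bounds, and the refinement case immediate because $\textsc{size}$ ignores the embedded predicate. Your extra care with binder shadowing in the dependent-function case is a detail the paper elides (``very similar, so we omit it''), and your observation that the size is in fact preserved exactly is a harmless strengthening.
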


\begin{proof}
  We will prove this lemma by induction on the size of $\syntype$.

  \textbf{Base case:} $\syntype = \tint$.

  In this case, since $\ev$ is not free in $\tint$, the substitution result will still have size $1$. 

  The proof for the case where $\syntype = \tbool$ and $\syntype = \tpoly$ is very similar, so we omit it here for brevity.

  \textbf{Inductive step:} 
  
  \begin{enumerate}
    \item $\syntype = \mkfun{\syntype_1}{\syntype_2}$.

    We know that $\textsc{size($\substitute{(\mkfun{\syntype_1}{\syntype_2})}{\eval}{\ev}$)} = \textsc{size($\substitute{\syntype_1}{\eval}{\ev}$)} + \textsc{size($\substitute{\syntype_2}{\eval}{\ev}$)} + 1$. By induction hypothesis, $\textsc{size($\substitute{\syntype_1}{\eval}{\ev}$)} \leq \textsc{size($\syntype_1$)}$, and $\textsc{size($\substitute{\syntype_2}{\eval}{\ev}$)} \leq \textsc{size($\syntype_2$)}$. Therefore, we can conclude that $\textsc{size($\substitute{(\mkfun{\syntype_1}{\syntype_2})}{\eval}{\ev}$)} \leq \textsc{size($\syntype_1$)} + \textsc{size($\syntype_2$)} + 1 = \textsc{size($\mkfun{\syntype_1}{\syntype_2}$)}$.
  
    The proof for the case where $\syntype = \mkdfun{\ev}{\syntype_1}{\syntype_2}$ is very similar, so we omit it here for brevity.

    \item $\syntype = \mktset{\syntype_0}{\expr_p}$.

    We know that $\textsc{size($\substitute{\mktset{\syntype_0}{\expr_p}}{\eval}{\ev}$)} = \textsc{size($\substitute{\syntype_0}{\eval}{\ev}$)} + 1$. By induction hypothesis, $\textsc{size($\substitute{\syntype_0}{\eval}{\ev}$)} \leq \textsc{size($\syntype_0$)}$. Therefore, we can conclude that $\textsc{size($\substitute{\mktset{\syntype_0}{\expr_p}}{\eval}{\ev}$)} \leq \textsc{size($\syntype_0$)} + 1 = \textsc{size($\mktset{\syntype_0}{\expr_p}$)}$.

  \end{enumerate}
  
\end{proof}

\begin{lemma}\label{lemma:completeness_ext_extra}For all types $\syntype$ defined in Definitions~\ref{def:lr_core}, \ref{def:lr_refinement_dependent}, and \ref{def:lr_poly}, 
  \begin{enumerate}
    \item $\models$ \mkgen{\syntype} $\colon$\syntype, and 
    \item $\forall \expr.$ if \mkche{\syntype}{\expr} $\smallsteps[] \eerror$ or if \mkche{\syntype}{\expr}$\smallsteps[] \gtfalse$, then $\not \models$ \expr $\colon$\syntype.
\end{enumerate}
\end{lemma}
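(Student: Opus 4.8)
The plan is to mirror the proof of the core-language version, Lemma~\ref{lemma:completeness_extra}, but to run the induction on $\textsc{size}(\syntype)$ (as given in the appendix's size definition) rather than on the raw structure of the type, and to add three new cases: type variables $\tpoly$, refinement types $\mktset{\syntype_0}{\expr_p}$, and dependent function types $\mkdfun{\ev}{\syntype_1}{\syntype_2}$. The cases for $\tint$, $\tbool$, and $\mkfun{\syntype_1}{\syntype_2}$ carry over essentially verbatim; in particular, the core function-type argument already handles $\matches{\expr}{\gtfun}$ reducing to $\eerror$ when $\expr \smallsteps[] \texttt{V}(\alpha)$, which is exactly the only new primitive behavior introduced by the \ruleref{Opaque Pattern} rule, so nothing there needs reworking.

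First I would dispatch the base case $\syntype = \tpoly$. For clause~(1), $\mkgen{\tpoly} = \texttt{V}(\alpha)$ is a value, so it never reduces to $\eerror$, and the only value it reaches is itself, for which $\texttt{V}(\alpha) \simeq \alpha \smallstep[] \vtrue$ by \ruleref{Poly-Check-True}; hence $\models \hastype{\mkgen{\tpoly}}{\tpoly}$. For clause~(2), $\mkche{\tpoly}{\expr} = \expr \simeq \tpoly$; by \ruleref{Poly-Check-True}/\ruleref{Poly-Check-False} the operator $\cdot \simeq \tpoly$ never yields $\eerror$ on a value, so $\mkche{\tpoly}{\expr} \smallsteps[] \eerror$ forces $\expr \smallsteps[] \eerror$ (whence $\not\models \hastype{\expr}{\tpoly}$ trivially), and $\mkche{\tpoly}{\expr} \smallsteps[] \gtfalse$ forces $\expr \smallsteps[] \eval$ with $\eval \neq \texttt{V}(\alpha)$, so $\eval \simeq \alpha \smallsteps[] \vfalse$ and again $\not\models \hastype{\expr}{\tpoly}$.

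Next the dependent function case, $\syntype = \mkdfun{\ev}{\syntype_1}{\syntype_2}$. The only genuinely new ingredient relative to ordinary function types is that the codomain becomes $\substitute{\syntype_2}{\eval'}{\ev}$, so to keep the induction well-founded I would invoke Lemma~\ref{lemma:size_of_type_subs} to conclude $\textsc{size}(\substitute{\syntype_2}{\eval}{\ev}) \le \textsc{size}(\syntype_2) < \textsc{size}(\mkdfun{\ev}{\syntype_1}{\syntype_2})$ and then replay the $\mkfun{\syntype_1}{\syntype_2}$ argument: for clause~(1), on the non-checking $\pickb$-branch $\mkgen{\mkdfun{\ev}{\syntype_1}{\syntype_2}}\ \eval$ reduces to $\mkgen{\substitute{\syntype_2}{\eval}{\ev}}$, which has type $\substitute{\syntype_2}{\eval}{\ev}$ by the inductive hypothesis (clause~1 on the smaller type), while on the checking branch the inductive hypothesis on $\syntype_1$ (clause~2) ensures $\mkche{\syntype_1}{\eval}$ can neither error nor reach a value $\neq\gttrue$ when $\models\hastype{\eval}{\syntype_1}$, so we either diverge or again land in $\mkgen{\substitute{\syntype_2}{\eval}{\ev}}$; for clause~(2), if the checker reaches $\eerror$ or $\gtfalse$ after choosing a generated argument $\eval_a$, then the inductive hypothesis gives $\models\hastype{\eval_a}{\syntype_1}$ (clause~1 on $\syntype_1$) while $\not\models\hastype{\expr\ \eval_a}{\substitute{\syntype_2}{\eval_a}{\ev}}$ (clause~2 on the smaller type), exhibiting the witness that refutes $\models\hastype{\expr}{\mkdfun{\ev}{\syntype_1}{\syntype_2}}$.

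Finally the refinement case, $\syntype = \mktset{\syntype_0}{\expr_p}$, which I expect to be the main obstacle. For clause~(1), $\mkgen{\mktset{\syntype_0}{\expr_p}}$ first produces some $v_0$ with $\models\hastype{v_0}{\syntype_0}$ (inductive hypothesis on $\syntype_0$), then either returns $v_0$ when $\expr_p\ v_0 \smallsteps[] \gttrue$ or reduces to $\mzero$; since $\mzero$ must be treated uniformly as divergence it is neither an $\eerror$ nor a value, so every value the generator reaches is such a $v_0$, and -- using the paper's restriction that refinement predicates are total and deterministic, which discharges the obligation $\models\hastype{\expr_p}{\mkfun{\syntype_0}{\tbool}}$ buried in the semantic definition and makes $\gttrue$ the unique outcome of $\expr_p\ v_0$ -- the generated value indeed has type $\mktset{\syntype_0}{\expr_p}$. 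For clause~(2), I would case on how $\mkche{\mktset{\syntype_0}{\expr_p}}{\expr} = \mkche{\syntype_0}{\expr}\ \texttt{and}\ (\expr_p\ \expr)$ reaches $\eerror$ or $\gtfalse$ (recalling the formal $\texttt{and}$ evaluates its left operand fully before the right, with no short-circuiting): if $\mkche{\syntype_0}{\expr}$ errors or is $\gtfalse$, the inductive hypothesis on $\syntype_0$ yields $\not\models\hastype{\expr}{\syntype_0}$, hence $\not\models\hastype{\expr}{\mktset{\syntype_0}{\expr_p}}$ since membership in a refinement type entails membership in its base type; otherwise $\mkche{\syntype_0}{\expr}\smallsteps[]\gttrue$ and $\expr_p\ \expr \smallsteps[] \gtfalse$, so $\expr \smallsteps[] \eval$ with $\expr_p\ \eval \smallsteps[] \gtfalse \neq \gttrue$, contradicting the last conjunct of the refinement typing definition. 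The bookkeeping around $\mzero$ and around the totality/determinism of $\expr_p$ is the only delicate part; the $\tpoly$ case is immediate and the dependent case is routine once Lemma~\ref{lemma:size_of_type_subs} is in hand.
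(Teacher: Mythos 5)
Your proposal matches the paper's proof essentially step for step: induction on $\textsc{size}(\syntype)$, Lemma~\ref{lemma:size_of_type_subs} to justify the inductive call on the substituted codomain of a dependent function type, $\mzero$ treated as divergence that inhabits every type in the refinement generator, and totality of the predicate to rule out nontermination or spurious errors in the predicate check. The only sub-case you skip is $\mkche{\syntype_0}{\expr} \smallsteps[] \gttrue$ with $\expr_p\ \expr \smallsteps[] \eerror$ in clause~(2) of the refinement case; the paper closes it by observing that $\models \hastype{\expr_p}{\mkfun{\syntype_0}{\tbool}}$ forces $\not\models \hastype{\expr}{\syntype_0}$ whenever that application errors, which is the same reasoning you already deploy elsewhere.
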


\begin{proof}
  We prove this lemma by induction on the size of $\syntype$.

  \textbf{Base case (1)}: $\syntype = \tpoly$

  Consider \textbf{clause (1)}. We need to show that $\models \hastype{\mkgen{\tpoly}}{\tpoly}$.
  
  Since \mkgen{\tpoly} = \texttt{V}\ttop\tpoly\ttcp, by definition, we have $\models \hastype{\mkgen{\tpoly}}{\tpoly}$.

  Consider \textbf{clause (2)}. we have to prove that for an arbitrary $\expr$, if \mkche{\tpoly}{\expr} $\smallsteps[]$ \lstinline|ERROR|, or if \mkche{\tpoly}{\expr}$\smallsteps[] \gtfalse$, then $\not\models \hastype{\expr}{\tpoly}$. By definition, \mkche{\tpoly}{\expr} = $\expr \simeq \tpoly$. Proceed by case analysis on the evaluation result of the checker.

  \begin{enumerate}
    \item $\expr \smallsteps[]$ \lstinline|ERROR|. In this case, $\not\models\hastype{\expr}{\tpoly}$ is trivially true. This is also the only way that \mkche{\tpoly}{\expr} can evaluate to \lstinline|ERROR|, since $\expr \simeq \tpoly$ will not return an \lstinline|ERROR| if $\expr \smallsteps[] \eval$ for some value $\eval$.
    \item $\expr \smallsteps[] \eval_0$. In this case, by the operational semantics, we know that $\eval_0 \simeq \tpoly \smallsteps[] \gtfalse$, and this implies that $\eval_0 \neq \texttt{V}\ttop\tpoly\ttcp$. Therefore, we have $\not\models\hastype{\expr}{\tpoly}$ by definition. 
  \end{enumerate}
  
  The proof for the cases where $\syntype = \tint$ or $\tbool$ is similar, so we will omit it here for brevity.

\textbf{Inductive step (1):} \syntype\ =\ $\mktset{\syntype_0}{\expr_p}$

Consider \textbf{clause (1)}.

\sloppy By definition, $\mkgen{\mktset{\syntype_0}{\expr_p}} =$ \lstinline|let gend = $\ \mkgen{\syntype_0}$ in if ($\expr_p$ gend) then gend| \lstinline|else mzero|. By induction hypothesis on \textbf{clause (1)}, we know that $\models\hastype{\mkgen{\syntype_0}}{\syntype_0}$. Now, consider the evaluation result of the application, $\expr_p\ \mkgen{\syntype_0}$.

Since we require predicates to be total, we do not have to consider the case where $\expr_p\ \mkgen{\syntype_0}$ does not terminate. We also can't have $\expr_p\ \mkgen{\syntype_0}$ $\smallsteps[] \eerror$, because we required the predicate to have the type $\mkfun{\syntype_0}{\tbool}$, which means the predicate cannot return $\eerror$ on any arguments with type $\syntype_0$. 

If $\expr_p\ \mkgen{\syntype_0}$ $\smallsteps[] \gttrue$, we know that the generated value satisfies the predicate. If $\expr_p\ \mkgen{\syntype_0}$ $\smallsteps[] \gtfalse$, we know that the generated value does not satisfy the predicate, and the conditional expression evaluates to \lstinline|mzero|. Consequently, we know that $\mkgen{\mktset{\syntype_0}{\expr_p}}$ either evaluates to a value $\eval$ where $\models\hastype{\eval}{\syntype_0}$ and $\expr_p \ \ \eval \smallsteps[] \gttrue$ or to \lstinline|mzero| (a non-value in our language, which by definition can take on arbitrary type), which implies $\models \hastype{\mkgen{\mktset{\syntype_0}{\expr_p}}}{\mktset{\syntype_0}{\expr_p}}$. 

Consider \textbf{clause (2)}.

We need prove that for any $\expr$, if $\mkche{\mktset{\syntype_0}{\expr_p}}{\expr} \smallsteps[] \eerror$ or if $\mkche{\mktset{\syntype_0}{\expr_p}}{\expr}\smallsteps[] \gtfalse$, then $\not\models \hastype{\expr}{\mktset{\syntype_0}{\expr_p}}$. We only consider the case where $\expr \not\smallsteps[] \eerror$ in the following discussion, since $\expr \smallsteps[] \eerror$ will make $\not\models\hastype{\expr}{\mktset{\syntype_0}{\expr_p}}$ true trivially. 

If $\mkche{\mktset{\syntype_0}{\expr_p}}{\expr} \smallsteps[] \eerror$, it implies that either $\mkche{\syntype_0}{\expr} \smallsteps[] \eerror$ or $\expr_p \ \expr \smallsteps[] \eerror$. In the first case, by induction hypothesis on \textbf{clause (2)}, we have $\not\models\hastype{\expr}{\syntype_0}$, and by definition, it also implies that $\not\models\hastype{\expr}{\mktset{\syntype_0}{\expr_p}}$. In the second case, since we know that $\models\hastype{\expr_p}{\mkfun{\syntype_0}{\tbool}}$, it must be the case that $\not\models\hastype{\expr}{\syntype_0}$, implying that $\not\models\hastype{\expr}{\mktset{\syntype_0}{\expr_p}}$.

If $\mkche{\mktset{\syntype_0}{\expr_p}}{\expr}\smallsteps[] \gtfalse$, it implies that either $\mkche{\syntype_0}{\expr} \smallsteps[] \gtfalse$ or $\expr_p \ \expr \smallsteps[] \gtfalse$. In the first case, by induction hypothesis on \textbf{clause (2)}, we have $\not\models\hastype{\expr}{\syntype_0}$, and by definition, it also implies that $\not\models\hastype{\expr}{\mktset{\syntype_0}{\expr_p}}$. In the second case, since the predicate check failed, we have $\not\models\hastype{\expr}{\mktset{\syntype_0}{\expr_p}}$ by definition.

\textbf{Inductive step (2):} \syntype\ =\ $\mkdfun{\ev}{\syntype_1}{\syntype_2}$

Consider \textbf{clause (1)}. We need to show that $\models \hastype{\mkgen{\mkdfun{\ev}{\syntype_1}{\syntype_2}}}{\mkdfun{\ev}{\syntype_1}{\syntype_2}}$. To prove this, we need $\forall\eval.$ if $\models \hastype{\eval}{\syntype_1}$, then $\models \hastype{(\mkgen{\mkdfun{\ev}{\syntype_1}{\syntype_2}} \ \eval)}{\substitute{\syntype_2}{\eval}{\ev}}$.
  
By definition, $\mkgen{\mkdfun{\ev}{\syntype_1}{\syntype_2}}$ is nondeterministic. There are two cases to consider here:

\begin{enumerate}
  \item \sloppy Argument is checked: In this case, we have $\mkgen{\mkdfun{\ev}{\syntype_1}{\syntype_2}} \ \eval \smallsteps[]$ \lstinline|if|$\ \mkche{\syntype_1}{\eval}$ \lstinline|then| $\mkgen{\substitute{\syntype_2}{\eval}{\ev}}$ \lstinline|else ERROR|. By induction hypothesis on \textbf{clause (2)}, $\forall\eval.$ if $\models \hastype{\eval}{\syntype_1}$, then $\mkche{\syntype_1}{\eval}\nsmallsteps[] \gtfalse$ and $\mkche{\syntype_1}{\eval}\nsmallsteps[] \eerror$. This implies that either $\mkche{\syntype_1}{\eval}$ diverges or $\mkche{\syntype_1}{\eval} \smallsteps[] \gttrue$. If $\mkche{\syntype_1}{\eval}$ diverges, $\mkgen{\mkdfun{\ev}{\syntype_1}{\syntype_2}} \ \eval$ will diverge, too, making the statement $\models \hastype{\mkgen{\mkdfun{\ev}{\syntype_1}{\syntype_2}} \ \eval}{\substitute{\syntype_2}{\eval}{\ev}}$ trivially true. If $\mkche{\syntype_1}{\eval} \smallsteps[] \gttrue$, we only have to consider $\mkgen{\substitute{\syntype_2}{\eval}{\ev}}$. By Lemma~\ref{lemma:size_of_type_subs}, we know that $\substitute{\syntype_2}{\eval}{\ev}$ is a smaller type than $\mkdfun{\ev}{\syntype_1}{\syntype_2}$, and by induction hypothesis on \textbf{clause (1)}, we know that $\models \hastype{\mkgen{\substitute{\syntype_2}{\eval}{\ev}}}{\substitute{\syntype_2}{\eval}{\ev}}$. 
  \item Argument is not checked: In this case, we have $\mkgen{\mkdfun{\ev}{\syntype_1}{\syntype_2}} \ \eval \smallsteps[] \mkgen{\substitute{\syntype_2}{\eval}{\ev}}$. By Lemma~\ref{lemma:size_of_type_subs}, we know that $\substitute{\syntype_2}{\eval}{\ev}$ is a smaller type than $\mkdfun{\ev}{\syntype_1}{\syntype_2}$, and by induction hypothesis, we know that $\models \hastype{\mkgen{\substitute{\syntype_2}{\eval}{\ev}}}{\substitute{\syntype_2}{\eval}{\ev}}$.
\end{enumerate}
Therefore, we have shown that $\forall\eval.$ if $\hastype{\eval}{\syntype_1}$, then $\models \hastype{\mkgen{\mkdfun{\ev}{\syntype_1}{\syntype_2}} \ \eval}{\substitute{\syntype_2}{\eval}{\ev}}$. We can now safely conclude that $\models \hastype{\mkgen{\mkdfun{\ev}{\syntype_1}{\syntype_2}}}{\mkdfun{\ev}{\syntype_1}{\syntype_2}}$.

Consider \textbf{clause (2)}. we need to prove that for any $\expr$, if $\mkche{\mkdfun{\ev}{\syntype_1}{\syntype_2}}{\expr} \smallsteps[] \eerror$ or if $\mkche{\mkdfun{\ev}{\syntype_1}{\syntype_2}}{\expr}\smallsteps[] \gtfalse$, then $\not\models \hastype{\expr}{\mkdfun{\ev}{\syntype_1}{\syntype_2}}$. If $\expr \smallsteps[] \eerror$, we have $\not\models\hastype{\expr}{\mkdfun{\ev}{\syntype_1}{\syntype_2}}$ trivially. We will only consider the case where $\expr \not\smallsteps[] \eerror$ in the discussion below.

Let us first consider the case, $\mkche{\mkdfun{\ev}{\syntype_1}{\syntype_2}}{\expr} \smallsteps[] \eerror$. Expanding the checker definition, we get $\mkche{\mkdfun{\ev}{\syntype_1}{\syntype_2}}{\expr} = $\lstinline|if $\expr\ \tttilde$ fun then let arg =|\ $\mkgen{\syntype_1}$\ \lstinline|in| $\mkche{\substitute{\syntype_2}{\eval}{\ev}}{\expr \ \mkgen{\syntype_1}}$\ \lstinline| else false|. 

Since $\mkgen{\syntype_1}$ does not evaluate to \lstinline|ERROR|, we know it must come from either $\matches{\expr}{\gtfun}$ or $\mkche{\substitute{\syntype_2}{\eval}{\ev}}{\expr \ \mkgen{\syntype_1}}$. 

If $\matches{\expr}{\gtfun} \smallsteps[] \eerror$, it implies that $\expr \smallsteps[] \texttt{V}\ttop\tpoly\ttcp$ for some polymorphic variable $\tpoly$, from which we can conclude that $\not\models\hastype{\expr}{\mkdfun{\ev}{\syntype_1}{\syntype_2}}$. 

If $\mkche{\substitute{\syntype_2}{\eval}{\ev}}{\expr \ \mkgen{\syntype_1}} \smallsteps[] \eerror$, by induction hypothesis on \textbf{clause (2)}, we know that $\not\models \hastype{(\expr \ \mkgen{\syntype_1})}{\substitute{\syntype_2}{\eval}{\ev}}$, and by induction hypothesis, we have \textbf{clause (1)} $\models \hastype{\mkgen{\syntype_1}}{\syntype_1}$. Thus we have found a witness $\models \hastype{\mkgen{\syntype_1}}{\syntype_1}$ such that $\not\models \hastype{(\expr \ \mkgen{\syntype_1})}{\substitute{\syntype_2}{\eval}{\ev}}$, proving that $\not\models \hastype{\expr}{\mkdfun{\ev}{\syntype_1}{\syntype_2}}$.

Now, consider the case, $\mkche{\mkdfun{\ev}{\syntype_1}{\syntype_2}}{\expr}\smallsteps[] \gtfalse$. By the \textsc{checker} definition, we have the following possible cases:
\begin{enumerate}
  \item $\matches{\expr}{\gtfun} \smallsteps[] \gtfalse$. In this case, by the operational semantics, we know that $\expr \smallsteps[] \eval_0$ and $\eval_0$ is not a function. Therefore, we have $\not\models\hastype{\expr}{\mkdfun{\ev}{\syntype_1}{\syntype_2}}$ trivially.
  \item $\matches{\expr}{\gtfun} \smallsteps[] \gttrue$. In this case, we know that $\expr \smallsteps[] \eval_0$ and $\eval_0$ is a function. However, this suggests that $\mkche{\substitute{\syntype_2}{\eval}{\ev}}{\expr \ \mkgen{\syntype_1}} \smallsteps[] \gtfalse$. By induction hypothesis on \textbf{clause (2)}, we know that $\not\models \hastype{(\expr \ \mkgen{\syntype_1})}{\substitute{\syntype_2}{\eval}{\ev}}$, and by induction hypothesis on \textbf{clause (1)}, we have $\models \hastype{\mkgen{\syntype_1}}{\syntype_1}$. Thus we have found a witness $\models \hastype{\mkgen{\syntype_1}}{\syntype_1}$ such that $\not\models \hastype{(\expr \ \mkgen{\syntype_1})}{\substitute{\syntype_2}{\eval}{\ev}}$, proving that $\not\models \hastype{\expr}{\mkdfun{\ev}{\syntype_1}{\syntype_2}}$.
\end{enumerate}

The proof for $\syntype = \mkfun{\syntype_1}{\syntype_2}$ is similar, so we omit it here for brevity.

\end{proof}

\begin{restatable}[Completeness of Extended System]{lemma}{completenessExt}
  \label{lemma:completeness_ext}For all types $\syntype$ defined in Definitions~\ref{def:lr_core}, \ref{def:lr_refinement_dependent}, and \ref{def:lr_poly}, $\forall \expr.$ if $\models \hastype{\expr}{\syntype}$, then $\tc{\expr}{\syntype}$.
\end{restatable}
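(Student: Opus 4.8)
The plan is to obtain this as an immediate corollary of Lemma~\ref{lemma:completeness_ext_extra}, exactly mirroring the core-language completeness argument. First I would take the contrapositive and unfold Definition~\ref{def:tc_core}: $\neg\tc{\expr}{\syntype}$ holds precisely when $\mkche{\syntype}{\expr} \smallsteps[] \eerror$, or when $\mkche{\syntype}{\expr} \smallsteps[] \eval$ for some value $\eval$ with $\eval \neq \gttrue$. So it suffices to show that whenever $\mkche{\syntype}{\expr}$ reaches $\eerror$, or reaches a value other than $\gttrue$, we have $\not\models \hastype{\expr}{\syntype}$.

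The one small bridging observation is that a checker, whenever it terminates at a value, terminates at $\gttrue$ or $\gtfalse$. This is seen by inspecting the clauses of Definition~\ref{def:checker_ext}: the base-case checkers $\matches{\expr}{\tint}$, $\matches{\expr}{\tbool}$, and $\expr \simeq \tpoly$ reduce only to booleans or to $\eerror$ (the latter via \ruleref{Opaque Pattern} or a propagated error), while the function, dependent-function, and refinement clauses are assembled solely from conditionals, the \texttt{and} operator, sub-checker applications, and applications of the refinement predicate $\expr_p$ (which is required to have type $\mkfun{\syntype}{\tbool}$); a routine structural induction then shows these clauses can only terminate at $\eerror$, $\gttrue$, or $\gtfalse$. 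They may also diverge or propagate an $\mzero$ produced inside a refinement generator, but $\mzero$ is not a value and behaves as divergence, so it never arises as an ``$\eval \neq \gttrue$'' case. Hence $\eval \neq \gttrue$ forces $\eval = \gtfalse$, and the hypothesis of the contrapositive is exactly the hypothesis of clause~(2) of Lemma~\ref{lemma:completeness_ext_extra}, which delivers $\not\models \hastype{\expr}{\syntype}$.

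Accordingly, I expect essentially no obstacle at the level of this lemma: the real work is already discharged in Lemma~\ref{lemma:completeness_ext_extra}, whose induction on $\textsc{size}(\syntype)$ must, in the dependent-function case, appeal to Lemma~\ref{lemma:size_of_type_subs} so that $\substitute{\syntype_2}{\eval}{\ev}$ stays within the induction, and, in the refinement case, use the stipulated totality and $\mkfun{\syntype}{\tbool}$-typing of the predicate to rule out both divergence and $\eerror$ of $\expr_p\ \mkgen{\syntype}$. Granting that lemma, the present statement follows immediately.
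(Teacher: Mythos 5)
Your proposal is correct and follows essentially the same route as the paper: reduce to the contrapositive, observe by inspection of the extended checker clauses that a checker can only terminate at \lstinline|ERROR| or a boolean (so $\eval \neq \gttrue$ forces $\eval = \gtfalse$), and then invoke clause~(2) of Lemma~\ref{lemma:completeness_ext_extra}. Your extra remarks about $\mzero$ behaving as divergence and about where Lemma~\ref{lemma:size_of_type_subs} and predicate totality enter the supporting lemma are consistent with the paper's development but not needed for this step.
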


\begin{proof}
  This is equivalent to showing: if \mkche{\syntype}{\expr} $\smallsteps[] \eerror$ or if $\exists\eval. \mkche{\syntype}{\expr} \smallsteps[] \eval$ and $\eval \neq \gttrue$, then $\not\models \hastype{\expr}{\syntype}$. By examining Definitions~\ref{def:checker_core}, \ref{def:checker_refinement_and_dependent} and \ref{def:checker_poly}, we can see that \textsc{checker} can only return \lstinline|ERROR| or boolean values, making the completeness statement follow immediately from Lemma~\ref{lemma:completeness_ext_extra}.
\end{proof}


\begin{restatable}[Soundness of Extended System]{lemma}{soundnessExtExtra}
  \label{lemma:soundness_ext_extra}
  For all types $\syntype$ defined in Definitions~\ref{def:lr_core}, \ref{def:lr_refinement_dependent}, and \ref{def:lr_poly},
  \begin{enumerate}
    \item $\forall \eval.$ if $\models\hastype{\eval}{\syntype}$, then $\forall\exprcon.$ if $\exprcon[\eval] \smallsteps[] \eerror$, then $\exprcon[\mkgen{\syntype}] \smallsteps[] \eerror$.
    \item $\forall\expr.$ if $\expr \smallsteps[] \eval$ and $\not\models\hastype{\eval}{\syntype}$, then $\neg\tc{\expr}{\syntype}$.
\end{enumerate}
\end{restatable}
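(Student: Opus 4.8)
I would prove this by induction on $\textsc{size}(\syntype)$, parallelling the proof of Lemma~\ref{lemma:soundness_extra} and reusing the $\tint$, $\tbool$, and $\mkfun{\syntype_1}{\syntype_2}$ cases verbatim; clauses (1) and (2) are established simultaneously so that each may invoke the other at strictly smaller types, and only the three new constructors need treatment. For $\syntype = \tpoly$ (a size-$1$ type), both clauses fall out of the new operational rules: in clause (1), $\models\hastype{\eval}{\tpoly}$ for a value $\eval$ forces $\eval\simeq\alpha\smallsteps[]\vtrue$, hence $\eval = \texttt{V}(\alpha) = \mkgen{\tpoly}$ by \ruleref{Poly-Check-True}, so there is nothing to show; in clause (2), if $\expr\smallsteps[]\eval$ and $\not\models\hastype{\eval}{\tpoly}$ then $\eval\ne\texttt{V}(\alpha)$, so $\mkche{\tpoly}{\expr} = \expr\simeq\tpoly \smallsteps[] \gtfalse$ by \ruleref{Poly-Check-False}, giving $\neg\tc{\expr}{\tpoly}$.

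\textbf{Refinement types.} For $\syntype = \mktset{\syntype_0}{\expr_p}$, clause (2) is easy: $\mkche{\mktset{\syntype_0}{\expr_p}}{\expr}$ is the conjunction $\mkche{\syntype_0}{\expr}\ \gtand\ (\expr_p\ \expr)$, and $\not\models\hastype{\eval}{\mktset{\syntype_0}{\expr_p}}$ means either $\not\models\hastype{\eval}{\syntype_0}$ — and induction hypothesis (2) at the smaller type $\syntype_0$ sends $\mkche{\syntype_0}{\expr}$ to $\eerror$ or $\gtfalse$, which the conjunction propagates — or $\models\hastype{\eval}{\syntype_0}$ but $\expr_p\ \eval\smallsteps[]\gtfalse$ (predicates are total and deterministic), sending the second conjunct to $\gtfalse$. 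For clause (1), note $\models\hastype{\eval}{\mktset{\syntype_0}{\expr_p}}$ implies $\models\hastype{\eval}{\syntype_0}$ and, by totality, $\expr_p\ \eval\smallsteps[]\gttrue$. Given $\exprcon$ with $\exprcon[\eval]\smallsteps[]\eerror$, I would form $\exprcon'$ from $\exprcon$ by replacing each hole with the generator skeleton ``bind the hole's value to $g$, test $\expr_p\ g$, return $g$ if true and $\mzero$ otherwise'', so that $\exprcon'[\mkgen{\syntype_0}] = \exprcon[\mkgen{\mktset{\syntype_0}{\expr_p}}]$. Since $\expr_p\ \eval\smallsteps[]\gttrue$, this skeleton reduces to its argument in a bounded number of steps and is therefore $\eerror$-transparent around $\eval$ in any context — where $\eval$ is forced the skeleton evaluates down to $\eval$, where it is not the $\eerror$ does not depend on the hole, and the $\mzero$ branch is never taken — so $\exprcon'[\eval]\smallsteps[]\eerror$; then induction hypothesis (1) at $\syntype_0$ applied to $\exprcon'$ yields $\exprcon'[\mkgen{\syntype_0}]\smallsteps[]\eerror$, which is the goal.

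\textbf{Dependent function types.} For $\syntype = \mkdfun{\ev}{\syntype_1}{\syntype_2}$, the argument copies the ordinary function case almost line for line, the only change being that every appeal to an induction hypothesis about the codomain is made at the substituted type $\substitute{\syntype_2}{\eval_0}{\ev}$ for whichever argument $\eval_0$ is in play; by Lemma~\ref{lemma:size_of_type_subs} this has size $\le \textsc{size}(\syntype_2) < \textsc{size}(\mkdfun{\ev}{\syntype_1}{\syntype_2})$, so the induction stays well-founded. Concretely, clause (1) is a nested induction on the length of $\exprcon[\eval]\smallsteps[]\eerror$, factoring the first step by Lemma~\ref{lemma:unique_factor}: parametric redex contexts go through Lemma~\ref{lemma:parametric_redcon}, and in the interesting case where the factored redex is $\circ\ \evalcon_2$ (writing $\eval_g = \mkgen{\mkdfun{\ev}{\syntype_1}{\syntype_2}}$) I would split on whether $\models\hastype{\evalcon_2[\eval_g]}{\syntype_1}$: if not, choosing $\pickb\smallstep[]\gttrue$ makes the argument check in $\eval_g$ fail by induction hypothesis (2) at $\syntype_1$ and drives the term to $\eerror$; if so, $\models\hastype{\eval}{\mkdfun{\ev}{\syntype_1}{\syntype_2}}$ gives $\models\hastype{\eval\ \evalcon_2[\eval_g]}{\substitute{\syntype_2}{\evalcon_2[\eval_g]}{\ev}}$, and choosing $\pickb\smallstep[]\gtfalse$ lets us reduce the application to $\substitute{\exprcon_v}{\evalcon_2[\eval_g]}{\ev}$ (where $\eval = \mkfunv{\ev}{\exprcon_v}$), apply the inner induction hypothesis, then apply induction hypothesis (1) at the smaller type $\substitute{\syntype_2}{\evalcon_2[\eval_g]}{\ev}$ to conclude that $\eval_g\ \evalcon_2[\eval_g]$ itself reaches $\eerror$. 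One also checks that \ruleref{Opaque Pattern} is harmless in these cases: in every function subcase $\eval$ and $\eval_g$ are both genuine function values, so any $\matches{\eval'}{p}$ redex behaves identically whichever of the two is plugged in. For clause (2): if $\eval_f$ is not a function the $\tttilde\,\gtfun$ test in the checker returns $\gtfalse$; otherwise there is a witness $\eval_0$ with $\models\hastype{\eval_0}{\syntype_1}$ and $\not\models\hastype{\eval_f\ \eval_0}{\substitute{\syntype_2}{\eval_0}{\ev}}$, so induction hypothesis (2) at the smaller type $\substitute{\syntype_2}{\eval_0}{\ev}$ gives $\neg\tc{\eval_f\ \eval_0}{\substitute{\syntype_2}{\eval_0}{\ev}}$; plugging $\eval_0$ into the single-hole context $\letin{a}{\circ}{\ife{\mkche{\substitute{\syntype_2}{a}{\ev}}{\eval_f\ a}}{1}{\eerror}}$ reaches $\eerror$, so induction hypothesis (1) at $\syntype_1$ transfers this to $\mkgen{\syntype_1}$, and since the checker for $\mkdfun{\ev}{\syntype_1}{\syntype_2}$, after reducing $\expr$ to $\eval_f$ and passing the $\tttilde\,\gtfun$ test, runs exactly the body of that context on a value produced by $\mkgen{\syntype_1}$, it has a run reaching $\eerror$ or $\gtfalse$, i.e. $\neg\tc{\expr}{\mkdfun{\ev}{\syntype_1}{\syntype_2}}$.

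\textbf{Main obstacle.} The hard part is, as in the core proof, clause (1) for the (dependent) function case: aligning the nested induction on computation length with the generated function's freedom to reproduce — via $\pickb$ and the nondeterminism folded into the codomain generator — whatever control flow the concrete argument induced. The one genuinely new ingredient is pure bookkeeping: the codomain is now $\substitute{\syntype_2}{\eval_0}{\ev}$ rather than $\syntype_2$, and Lemma~\ref{lemma:size_of_type_subs} is exactly what keeps the induction measure from growing, after which the dependent case is no harder than the non-dependent one. For refinement types the only delicate point is verifying that the generator skeleton is $\eerror$-transparent in an arbitrary (possibly multi-hole) context; this is routine once one observes that $\expr_p\ \eval\smallsteps[]\gttrue$ makes the $\mzero$ branch unreachable, so the skeleton neither blocks nor diverts the path to $\eerror$.
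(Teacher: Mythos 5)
Your proposal is correct and follows the same overall architecture as the paper's proof: simultaneous induction on $\textsc{size}(\syntype)$ for both clauses, verbatim reuse of the core-language cases, the $\tpoly$ base case via $\mkgen{\tpoly}$ coinciding with the unique untouchable value, and the dependent-function case replaying the inner induction on computation length with Lemma~\ref{lemma:size_of_type_subs} keeping the measure well-founded.

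The one place you genuinely diverge is clause (1) for $\mktset{\syntype_0}{\expr_p}$. The paper applies the induction hypothesis twice: once to the bespoke context $\ife{\expr_p\ \circ}{\eerror}{1}$ to conclude that some run of $\mkgen{\syntype_0}$ produces a value passing the predicate, and once to the original $\exprcon$, and then composes the two facts. You instead fold the refinement generator's skeleton into the context itself, so that $\exprcon'[\mkgen{\syntype_0}] = \exprcon[\mkgen{\mktset{\syntype_0}{\expr_p}}]$, and apply the induction hypothesis once to $\exprcon'$. This is arguably tighter: it avoids the paper's composition step, where the run in which the predicate succeeds and the run in which $\exprcon[\mkgen{\syntype_0}]$ reaches $\eerror$ must be reconciled despite the nondeterminism of $\mkgen{\syntype_0}$. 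The price is your ``$\eerror$-transparency'' claim -- that inserting the administrative redex $\letin{g}{\circ}{\ife{\expr_p\ g}{g}{\mzero}}$ at every hole preserves reachability of $\eerror$ when the plugged value satisfies the predicate -- which deserves a short simulation argument (the skeleton deterministically reduces to its argument wherever it is forced, and is inert wherever it is not), but this is at the same level of rigor as the corresponding step in the paper.
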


\begin{proof}
  We will prove this lemma by induction on the size of $\syntype$.

  \textbf{Base case: }$\syntype = \tpoly$

  Consider \textbf{clause (1)}. Since $\models\hastype{\eval}{\tpoly}$, we know that $\eval = \texttt{V}\ttop\tpoly\ttcp$. Since $\mkgen{\tpoly} = \texttt{V}\ttop\tpoly\ttcp$, it implies that $\mkgen{\tpoly} = \eval$. Because we already have $\exprcon[\eval] \smallsteps[] \eerror$, we can safely conclude that $\exprcon[\mkgen{\tpoly}] \smallsteps[] \eerror$.

  Consider \textbf{clause (2)}. Given $\not \models \hastype{\expr}{\tpoly}$ and $\expr \smallsteps[] \eval$, we can conclude that $\eval \neq \texttt{V}\ttop\tpoly\ttcp$. By definition, $\mkche{\tpoly}{\expr}$ = $\expr \simeq \tpoly$. Since $\expr \smallsteps[] \eval$ and $\eval \neq \texttt{V}\ttop\tpoly\ttcp$, we can conclude $\eval \simeq \tpoly \smallsteps[] \gtfalse$, establishing this case.

  The proof for the case where $\syntype = \tint$ and $\syntype = \tbool$ is very similar, so we omit it here for brevity.

  \textbf{Inductive step (1): }$\syntype = \mktset{\syntype_0}{\expr_p}$

    Consider \textbf{clause (1)}.
    
    Since $\models\hastype{\eval}{\mktset{\syntype_0}{\expr_0}}$, we know that $\forall \eval_p.$ if $\expr_p \ \eval \smallsteps[] \eval_p$, then $\eval_p = \gttrue$. Furthermore, we know this must be true for all evaluations of $\expr_p \ \eval$, since $\expr_p$ must be a total function. Consequently, defining $\exprcon' = \ife{\expr_p \ \circ}{\eerror}{1}$, we have $\exprcon'[\eval] \smallsteps[] \eerror$. Since $\models\hastype{\eval}{\syntype_0}$ and $\syntype_0$ is a smaller type, we can conclude by induction on \textbf{clause (1)} that $\exprcon'[\mkgen{\syntype_0}] \smallsteps[] \eerror$, which implies that there exists some evaluation such that $\expr_p \ \mkgen{\syntype_0} \smallsteps[] \gttrue$. By definition, we have $\mkgen{\mktset{\syntype_0}{\expr_p}} = $ \lstinline|let gend = $\ \mkgen{\syntype_0}$ in if $\expr_p$ gend then gend| \lstinline|else mzero|. Since we have proven that $\expr_p \ \mkgen{\syntype_0}$ must evaluate to $\gttrue$ for some execution, we can conclude that there must exist some evaluation such that $\mkgen{\mktset{\syntype_0}{\expr_p}} \smallsteps[] \mkgen{\syntype_0}$. 
    
    Since $\syntype_0$ is a smaller type, $\models\hastype{\eval}{\syntype_0}$, and that $\exprcon[\eval] \smallsteps[] \eerror$, by induction hypothesis on \textbf{clause (1)}, we have $\exprcon[\mkgen{\syntype_0}] \smallsteps[] \eerror$. Since we have shown that $\mkgen{\mktset{\syntype_0}{\expr_p}} \smallsteps[] \mkgen{\syntype_0}$, we can safely conclude that $\exprcon[\mkgen{\mktset{\syntype_0}{\expr_p}}] \smallsteps[] \eerror$, establishing the case.

    Consider \textbf{clause (2)}. We need to prove that if $\expr \smallsteps[] \eval$ and that $\not \models \hastype{\eval}{\mktset{\syntype_0}{\expr_p}}$, then $\mkche{\mktset{\syntype_0}{\expr_p}}{\eval} \smallsteps[] \eerror$ or $\mkche{\mktset{\syntype_0}{\expr_p}}{\eval} \smallsteps[] \gtfalse$. Expanding the definition, we have $\mkche{\mktset{\syntype_0}{\expr_p}}{\eval} = \mkche{\syntype_0}{\eval}$ \lstinline|and| $\expr_p \ \eval$.

    There are two causes for $\not\models\hastype{\eval}{\mktset{\syntype_0}{\expr_p}}$:

    \begin{enumerate}
      \item \sloppy $\not\models\hastype{\eval}{\syntype_0}$: In this case, by induction hypothesis on \textbf{clause (2)}, we have $\mkche{\syntype_0}{\eval} \smallsteps[] \gtfalse$ or $\mkche{\syntype_0}{\eval} \smallsteps[] \eerror$, leading to $\mkche{\mktset{\syntype_0}{\expr_p}}{\eval} \smallsteps[] \gtfalse$ or $\mkche{\mktset{\syntype_0}{\expr_p}}{\eval} \smallsteps[] \eerror$ respectively.
      \item $\expr_p \ \eval \smallsteps[] \gtfalse$: In this case, we have $\mkche{\mktset{\syntype_0}{\expr_p}}{\eval} \smallsteps[] \gtfalse$ directly.
    \end{enumerate}

  \textbf{Inductive step (2):} $\syntype = \mkdfun{\ev}{\syntype_1}{\syntype_2}$. 


  Consider \textbf{clause (1)}. 

    Given that $\models \hastype{\eval}{\syntype}$, and $\exprcon[\eval] \smallsteps[] \eerror$, we need to show that $\exprcon[\mkgen{\syntype}] \smallsteps[] \eerror$. We will prove this by induction on the length of the $\exprcon[\eval] \smallsteps[] \eerror$ computation.

    We omit the trivial base case where $\exprcon = \eerror$, in which $\exprcon[\eval]$ will step to \lstinline|ERROR| in zero steps. 

    \textbf{Base case: }$\exprcon[\eval] \smallstep[] \eerror$

    By Lemma~\ref{lemma:unique_factor}, we know that there exists unique $\rccon$ and $\exprcon'$ such that $\exprcon[\circ] = \redcon{\circ}{\exprcon'[\circ]}$, where $\exprcon'[\circ] = \redexcon[\circ]$ or $\exprcon'[\circ] = \circ$. However, here we do not have to consider the case where $\exprcon'[\circ] = \circ$; since we're filling white holes with a value, a value by itself cannot be a redex. 

    There are two cases here we need two consider: 
    \begin{enumerate}
      \item $\redexcon$ is parametric: $\exprcon[\eval]$ will have to take more than one step to evaluate to an \lstinline|ERROR| expression, thus it cannot be in the base case.
      \item $\redexcon$ is not parametric: We proceed by case analysis on the redex context, $\redexcon$.

      \begin{enumerate}
        \item $\redexcon = \evalcon_1 + \evalcon_2$: According to the operational semantics, addition will evaluate to \lstinline|ERROR| if and only if one of the operands is a non-integer. If $\evalcon_1 = \circ$ or $\evalcon_2 = \circ$, since $\eval$ and $\mkgen{\syntype}$ both are function values, we can safely conclude that $\redexcon[\eval]$ and $\redexcon[\mkgen{\syntype}]$ will both evaluate to \lstinline|ERROR|. The other scenario is where at least one of the operands takes the form of \lstinline|fun x -> $\ \exprcon_f$| or $b$, where $b$ is a boolean value; in both cases, the expression will evaluate to \lstinline|ERROR| regardless of which value gets substituted into $\circ$.
        \item $\redexcon = \evalcon_1 \ \evalcon_2$: According to the operational semantics, function application will evaluate to \lstinline|ERROR| if and only if $\evalcon_1$ is an integer or boolean value. This implies $\evalcon_1$ doesn't contain any $\circ$, since the relevant values being substituted here are functions. Therefore, this can be reduced to the parametric case above.
        \item $\redexcon = \ife{\evalcon}{\exprcon_2}{\exprcon_3}$: Since the expression steps to an \lstinline|ERROR|, we know that $\evalcon[\eval]$ is not a boolean value. This implies that $\evalcon = \circ$, or $\evalcon = n$ where $n \in \mathbb{Z}$, or $\evalcon =\ $\lstinline|fun x -> $\ \exprcon_f$|. In all three cases, when we substitute $\mkgen{\syntype}$ into $\circ$, $\evalcon[\mkgen{\syntype}]$ still will not be a boolean value. Therefore, we can conclude that $\exprcon[\mkgen{\syntype}] \smallstep[] \eerror$.
        \item $\redexcon = \matches{\evalcon}{p}$: This is an impossible case. Since $\matches{\eval}{p}$ always returns a value, we know that the overall expression, $\exprcon[\eval]$, will have to take more than one step to evaluate to an \lstinline|ERROR| expression in this case. 
      \end{enumerate}

    \end{enumerate}

    \textbf{Inductive step: }$\exprcon[\eval] \smallsteps[] \eerror$

    Let us examine the first step in the given computation, which is effectively $\exprcon[\eval] \smallstep[] \expr_1$ for some intermediate evaluation result, $\expr_1$.
    
    As with the base case, we know that there exists unique $\rccon$ and $\redexcon$ such that $\exprcon[\circ] = \redcon{\circ}{\redexcon[\circ]}$. There are two main scenarios to consider:

    \begin{enumerate}
      \item $\redexcon$ is parametric: By Lemma~\ref{lemma:parametric_redcon}, we know that $\exprcon[\eval] \smallstep[] \exprcon'[\eval]$ implies $\exprcon[\mkgen{\syntype}] \smallstep[] \exprcon'[\mkgen{\syntype}]$. The result then follows by the inner induction hypothesis. 
      
      \item $\redexcon$ is not parametric: We proceed by case analysis on the redex context, $\redexcon$.
      \begin{enumerate}

        \item $\redexcon = \evalcon_1 + \evalcon_2$: According to the operational semantics, there are two cases to consider here. If both operands are integers, it implies that $\evalcon_1$ and $\evalcon_2$ don't contain any holes, which means $\redexcon[\eval] = \redexcon[\mkgen{\syntype}]$, and the rest follows by the induction hypothesis. The case where either one of the operands is not an integer is handled by the base case, since this will result in an immediate \lstinline|ERROR| expression after a single step of evaluation. 
        
        \item $\redexcon = \evalcon_1 \ \evalcon_2$: We've already covered the case of applying to a non-function in the base case. Now, we only need to consider the cases where $\evalcon_1 = \circ$, since $\evalcon_1 =\ $\lstinline|fun x -> $\ \exprcon_f$| for some $\exprcon_f$ is one of the cases where the redex is parametric. 
        
        Let $\eval_g = \mkgen{\mkdfun{\ev}{\syntype_1}{\syntype_2}}$. Consider the application, $\redexcon[\eval_g] = \evalcon_1[\eval_g] \ \evalcon_2[\eval_g] = \eval_g \ \evalcon_2[\eval_g]$. 
        Now consider the type of $\evalcon_2[\eval_g]$. There are two cases to consider here:
      
        \begin{enumerate}
          \item Case $\not\models\hastype{\evalcon_2[\eval_g]}{\syntype_1}$, since we have
          \begin{lstlisting}
$\eval_g$ = fun x' -> if $\pickb$ then 
  if $\mkche{\syntype_1}{\texttt{x}}$ then $\mkgen{\substitute{\syntype_2}{\ev'}{\ev}}$ else $\eerror$ 
else $\mkgen{\substitute{\syntype_2}{\ev'}{\ev}}$
          \end{lstlisting}
        
          We know that 
          \begin{lstlisting}
$\eval_g$ $\evalcon_2[\eval_g]$ = if $\pickb$ then 
if $\mkche{\syntype_1}{\evalcon_2[\eval_g]}$ then $\mkgen{\substitute{\syntype_2}{\evalcon_2[\eval_g]}{\ev}}$ else $\eerror$ 
else $\mkgen{\substitute{\syntype_2}{\evalcon_2[\eval_g]}{\ev}}$
          \end{lstlisting}
        
          Since $\not\models\hastype{\evalcon_2[\eval_g]}{\syntype_1}$, and $\syntype_1$ is a smaller type, by induction hypothesis \textbf{clause (2)}, we know that $\mkche{\syntype_1}{\evalcon_2[\eval_g]} \smallsteps[] \gtfalse$ or $\mkche{\syntype_1}{\evalcon_2[\eval_g]} \smallsteps[] \eerror$. Either way, it will result in $\eval_g \ \evalcon_2[\eval_g] \smallsteps[] \eerror$ where $\pickb \smallstep[] \gttrue$, thus $\exprcon[\eval_g] \smallsteps[] \eerror$. 
              
        \item Case $\models\hastype{\evalcon_2[\eval_g]}{\syntype_1}$: Since $\eval$ is a function, we know that $\eval = \mkfunv{\ev}{\exprcon_v}$ for some context, $\exprcon_v$. Consider w.l.o.g. evaluations where $\pickb \smallstep[] \gtfalse$, i.e. the cases where the argument is not checked, we have $\redexcon[\eval] = \eval \ \evalcon_2[\eval] \smallsteps[] \substitute{\exprcon_v}{\evalcon_2[\eval]}{\ev}$. Putting it back into the overall context, we have $\redcon{\eval}{\substitute{\exprcon_v}{\evalcon_2[\eval]}{\ev}} \smallsteps[] \eerror$. Since this is smaller than the original computation, $\redcon{\eval}{\eval \ \evalcon_2[\eval]} \smallsteps[] \eerror$, by inner induction hypothesis, we have $\redcon{\eval_g}{\substitute{\exprcon_v}{\evalcon_2[\eval_g]}{\ev}} \smallsteps[] \eerror$. Now, consider the application, $\eval \ \evalcon_2[\eval_g]$ where $\pickb \smallstep[] \gtfalse$. By operational semantics, we have $\eval \ \evalcon_2[\eval_g] \smallsteps[] \substitute{\exprcon_v}{\evalcon_2[\eval_g]}{\ev}$. Since we have $\models\hastype{\evalcon_2[\eval_g]}{\syntype_1}$, and that $\models\hastype{\eval}{\mkdfun{\ev}{\syntype_1}{\syntype_2}}$, we can conclude that $\models\hastype{\substitute{\exprcon_v}{\evalcon_2[\eval_g]}{\ev}}{\substitute{\syntype_2}{\evalcon_2[\eval_g]}{\ev}}$. By Lemma~\ref{lemma:size_of_type_subs}, we know $\substitute{\syntype_2}{\evalcon_2[\eval_g]}{\ev}$ is a smaller type. Since we also know that $\redcon{\eval_g}{\substitute{\exprcon_v}{\evalcon_2[\eval_g]}{\ev}} \smallsteps[] \eerror$, by induction hypothesis on $\textbf{clause (1)}$, we can conclude that $\redcon{\eval_g}{\mkgen{\substitute{\syntype_2}{\evalcon_2[\eval_g]}{\ev}}} \smallsteps[] \eerror$. By operational semantics, we know that $\redexcon[\eval_g] = \eval_g \ \evalcon_2[\eval_g] \smallsteps[] \mkgen{\substitute{\syntype_2}{\evalcon_2[\eval_g]}{\ev}}$ when $\pickb \smallstep[] \gtfalse$, therefore we can conclude that $\exprcon[\eval_g] = \redcon{\eval_g}{\redexcon[\eval_g]} \smallsteps[] \eerror$.
        
      \end{enumerate}

        \item $\redexcon =\ $\lstinline|if $\evalcon_1$ then $\exprcon_2$ else $\exprcon_3$|: Since the expression doesn't immediately step to an \lstinline|ERROR|, we know that $\evalcon[\eval]$ is a boolean value. This implies that $\evalcon_1 = \vtrue$ or $\evalcon_1 = \vfalse$, reducing it to the case where $\redexcon$ is parametric.
        
        \item $\redexcon = \matches{\evalcon}{p}$: There are three possible cases to consider here: $\evalcon[\eval]$ is an integer, or $\evalcon[\eval]$ is a boolean, or $\evalcon[\eval]$ is a function value. In the first two cases, we can conclude that $\evalcon$ doesn't contain any holes, which means $\evalcon[\eval] = \evalcon[\mkgen{\syntype}]$. Therefore, we only need to consider the scenario where $\evalcon[\eval]$ is a function value, which means either $\evalcon = \circ$ or $\evalcon = \mkfunv{\ev}{\exprcon_f[\circ]}$. In both cases, it doesn't matter if we're substituting $\eval$ or $\mkgen{\syntype}$ into the hole, because the substitution result will always be functions, giving us the same pattern-matching results for both $\redexcon[\eval]$ and $\redexcon[\mkgen{\syntype}]$. 
      \end{enumerate}
    \end{enumerate}
  
  \sloppy
  Consider \textbf{clause (2)}. We need to prove that if $\expr \smallsteps[] \eval_f$ and that $\not \models \hastype{\eval_f}{\mkdfun{\ev}{\syntype_1}{\syntype_2}}$, then $\mkche{\mkdfun{\ev}{\syntype_1}{\syntype_2}}{\eval_f} \smallsteps[] \eerror$ or $\mkche{\mkdfun{\ev}{\syntype_1}{\syntype_2}}{\eval_f} \smallsteps[] \gtfalse$. Expanding the definition, we have $\mkche{\mkdfun{\ev}{\syntype_1}{\syntype_2}}{\eval_f} = $ \lstinline|if $\expr\ \tttilde$ fun then let arg = |\mkgen{$\syntype_1$}\ \lstinline| in |\mkche{$\substitute{\syntype_2}{\texttt{arg}}{\ev}$}{\lstinline|$\expr\ $ arg|} \lstinline| else false|. If $\eval_f$ is not a function value, we will have $\matches{\expr}{\gtfun} \smallsteps[] \gtfalse$, which means $\mkche{\mkdfun{\ev}{\syntype_1}{\syntype_2}}{\eval_f} \smallstep[] \gtfalse$. Now, let us consider the case where $\eval_f$ is a function value. Since $\not \models \hastype{\eval_f}{\mkdfun{\ev}{\syntype_1}{\syntype_2}}$, we know that there must exist some $\eval_0$ such that $\models \hastype{\eval_0}{\syntype_1}$ and $\not \models \hastype{\eval_f \ \eval_0}{\substitute{\syntype_2}{\eval_0}{\ev}}$. Since $\syntype_2$ is a smaller type, by induction hypothesis on \textbf{clause (2)}, we can conclude that $\neg\tc{\eval_f \ \eval_0}{\substitute{\syntype_2}{\eval_0}{\ev}}$, which means that $\mkche{\substitute{\syntype_2}{\eval_0}{\ev}}{\eval_f \ \eval_0} \smallsteps[] \gtfalse$ or $\mkche{\substitute{\syntype_2}{\eval_0}{\ev}}{\eval_f \ \eval_0} \smallsteps[] \eerror$. Let $\exprcon = $ \lstinline|if |$\mkche{\substitute{\syntype_2}{\circ}{\ev}}{\eval_f \ \circ}$ \lstinline| then 1 else ERROR|. We know that $\exprcon[\eval_0] \smallsteps[] \eerror$. Since $\syntype_1$ is a smaller type, by induction hypothesis on \textbf{clause (1)}, we can conclude that $\exprcon[\mkgen{\syntype_1}] \smallsteps[] \eerror$. This implies that $\mkche{\substitute{\syntype_2}{\mkgen{\syntype_1}}{\ev}}{\eval_f \ \mkgen{\syntype_1}} \smallsteps[] \gtfalse$ or $\mkche{\substitute{\syntype_2}{\mkgen{\syntype_1}}{\ev}}{\eval_f \ \mkgen{\syntype_1}} \smallsteps[] \eerror$. Since $\mkche{\mkdfun{\ev}{\syntype_1}{\syntype_2}}{\eval_f} \smallsteps[] \mkche{\substitute{\syntype_2}{\mkgen{\syntype_1}}{\ev}}{\eval_f \ \mkgen{\syntype_1}}$ by operational semantics, we can conclude that $\mkche{\mkdfun{\ev}{\syntype_1}{\syntype_2}}{\eval_f} \smallsteps[] \gtfalse$ or $\mkche{\mkdfun{\ev}{\syntype_1}{\syntype_2}}{\eval_f} \smallsteps[] \eerror$, i.e. $\neg\tc{\eval_f}{\mkdfun{\ev}{\syntype_1}{\syntype_2}}$.  

\end{proof}

Soundness is then direct from the above.

\begin{restatable}[Soundness of Extended System]{lemma}{soundnessExt}
  \label{lemma:soundness_ext}
  For all types $\syntype$ defined in Definitions~\ref{def:lr_core}, \ref{def:lr_refinement_dependent}, and \ref{def:lr_poly}, $\forall \expr.$ if $\tc{\expr}{\syntype}$, then $\models \hastype{\expr}{\syntype}$.
\end{restatable}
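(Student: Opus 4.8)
The plan is to obtain this lemma as an immediate corollary of Lemma~\ref{lemma:soundness_ext_extra}, arguing by contraposition: assuming $\not\models\hastype{\expr}{\syntype}$, I will show $\neg\tc{\expr}{\syntype}$, i.e.\ that $\mkche{\syntype}{\expr}$ either reduces to $\eerror$ or reduces to some value other than $\gttrue$. The genuinely hard content --- that every correctly-typed concrete value can be mimicked by the nondeterministic generator, so that any typing failure at $\syntype$ is reproducible by running the checker --- has already been discharged in Lemma~\ref{lemma:soundness_ext_extra} (clause~(1), together with the unique reduction-context factorization of Lemma~\ref{lemma:unique_factor} and the parametric-redex-context argument of Lemma~\ref{lemma:parametric_redcon}); what remains is purely bookkeeping on top of that.

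Concretely, I would first inspect Definition~\ref{def:lr_ext} to enumerate how $\models\hastype{\expr}{\syntype}$ can fail. In every clause this predicate is a conjunction of ``$\expr\nsmallsteps[]\eerror$'', a universally quantified condition on the values reachable from $\expr$, and --- only for refinement types $\mktset{\syntype_0}{\expr_p}$ --- the side condition $\models\hastype{\expr_p}{\mkfun{\syntype_0}{\tbool}}$. So the negation splits into: (i) $\expr\smallsteps[]\eerror$; or (ii) some value $\eval$ with $\expr\smallsteps[]\eval$ violates the value-level condition, which in each clause is exactly $\not\models\hastype{\eval}{\syntype}$; or (iii, refinement only) $\expr_p$ is not of type $\mkfun{\syntype_0}{\tbool}$. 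In case~(i), since $\expr$ occupies a reduction-context hole in $\mkche{\syntype}{\expr}$ for every $\syntype$ (a routine check against Definition~\ref{def:checker_ext}), the \ruleref{Err} rule of the operational semantics lifts the error, giving $\mkche{\syntype}{\expr}\smallsteps[]\eerror$. In case~(ii), I apply clause~(2) of Lemma~\ref{lemma:soundness_ext_extra} verbatim to $\expr$, $\eval$, and $\syntype$, which yields $\neg\tc{\expr}{\syntype}$ directly.

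Case~(iii) is the only point needing any care, and it is where I expect the (still small) effort to concentrate. If $\expr$ reaches some value $\eval$, then $\not\models\hastype{\eval}{\mktset{\syntype_0}{\expr_p}}$ also holds for that value --- the predicate-typing side condition appears in the definition regardless of $\eval$ --- so case~(iii) collapses into case~(ii); if instead $\expr$ diverges without error, ill-typedness of $\expr_p$ surfaces inside the checker's conjunct $(\expr_p\ \expr)$ once $\expr$ is run far enough, using that predicates are required total. Combining the three cases gives $\neg\tc{\expr}{\syntype}$ uniformly, and the contrapositive is the claimed statement; pairing it with Lemma~\ref{lemma:completeness_ext} then establishes Theorem~\ref{thm:sound_and_complete_ext}.
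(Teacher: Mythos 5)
Your proposal matches the paper's route exactly: the paper derives this lemma as an immediate corollary of Lemma~\ref{lemma:soundness_ext_extra}, arguing by contraposition and splitting on whether $\expr$ reaches $\eerror$ (lifted into the checker by the \ruleref{Err} rule) or reaches a value violating the value-level condition (handled by clause~(2) of that lemma), so your cases~(i) and~(ii) are the paper's proof verbatim. Your case~(iii) is the one place you go beyond the paper, which silently ignores failure of the side condition $\models\hastype{\expr_p}{\mkfun{\syntype_0}{\tbool}}$; your collapse of (iii) into (ii) when $\expr$ reaches a value is fine, but your claim that ill-typedness of $\expr_p$ ``surfaces inside the checker's conjunct once $\expr$ is run far enough'' does not work when $\expr$ diverges, since then $(\expr_p\ \expr)$ never evaluates $\expr_p$ on anything and the checker diverges, vacuously satisfying $\tc{\expr}{\syntype}$. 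This corner is really a gap in the paper's definitions rather than in your argument, and is presumably excused by the standing restriction that refinement predicates be total, deterministic (hence well-typed) functions; you should state that assumption explicitly rather than argue case~(iii) operationally.
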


So, we finally have both soundness and completeness.
\begin{proof}[Proof of Theorem \ref{thm:sound_and_complete_ext}]
  The forward implication follows from Lemma~\ref{lemma:completeness_ext} and the reverse from Lemma~\ref{lemma:soundness_ext}.
\end{proof}

\section{Full Bluejay Syntax}
\label{app_bjy_syntax}

Figure \ref{fig:bluejay_grammar} contains the full syntax of the Bluejay Language discussed in Section~\ref{sec_impl}.

\begin{figure}[hbt!]
      \begin{grammar}
        \grule[values]{\eval}{
          \mathbb{Z}
          \gor \mathbb {B}
          \gor
          \mkfunv{\ev}{\expr}
          \gline
          \gor \ttob l_1 = \eval_1; \ \cdots; \ l_n = \eval_n\ttcb
          \gor [\eval_1; \cdots; \eval_n]
        }
        \grule[expressions]{\expr}{
            \eval
            \gor \ev
            \gor \expr\ \expr
            \gor \expr\ \binop\ \expr
            \gline
            \gor \ttob l_1 = \expr_1; \ \cdots; \ l_n = \expr_n\ttcb
            \gor \expr.l 
            \gline
            \gor [\expr_1; \cdots; \expr_n]
            \gor \expr :: \expr
            \gline
            \gor \ife{\expr}{\expr}{\expr}
            \gline 
            \gor \texttt{\small match }\expr \texttt{\small\ with }p \texttt{\small\ -> }\expr \cdots
            \gline
            \gor \letin{\ev}{\expr}{\expr}
            \gor \letint{\ev}{\syntype}{\expr}{\expr}
            \gline
            \gor \letfun
            \gor \letfunt
            \gline
            \gor \texttt{\small let f (type a ... b) (x : $\syntype$) : $\syntype$ = $\expr$}
            \gline
            \gor \texttt{\small assert }\expr
            \gor \texttt{\small assume }\expr
            \gline
            \gor \syntype
            \gor \texttt{\small input}
            \gor \texttt{\small ERROR}
            \gor \mzero
        }
        \grule[variables]{\ev}{
          \textit{(identifiers)}
        }
        \grule[patterns]{p}{
          \tint
          \gor \tbool
          \gor \gtfun
          \gor \ttob l_1; \cdots; l_n \ttcb
          \gor \gtany
          \gline
          \gor x :: y
          \gor []
        }
        \grule[labels]{l}{
          \textit{(identifiers)}
        }
        \grule[poly variables]{\alpha}{
          \texttt{'a} 
          \gor \texttt{'b} 
          \gor \cdots
        }
        \grule[type variables]{\beta}{
          \textit{(identifiers)}
        }
        \grule[types]{\syntype}{
          \tint
          \gor \tbool
          \gor \tfun
          \gline
          \gor \mktset{\syntype}{\expr}
          \gor \mkdfun{\ev}{\syntype}{\syntype}
          \gline
          \gor (\mkvariant{\syntype_1}{\syntype_n})
          \gline 
          \gor \mkintersect{(\mkfun{(V_1\ \texttt{of}\ \syntype_{1})}{\syntype_{1}'})}{(\mkfun{(V_n\ \texttt{of}\ \syntype_{n})}{\syntype_{n}'})}
          \gline
          \gor \alpha \gor \beta
          \gor \mkmiu{\alpha}{\syntype}
          \gor \ttob \hastype{l_1}{\syntype_1}; \ \cdots; \ \hastype{l_n}{\syntype_n} \ttcb
          \gor \texttt{\small list }\syntype
        }
      \end{grammar}
      \caption{Bluejay Language Grammar}
      \label{fig:bluejay_grammar}
    \end{figure} 

\section{Record Subtyping Extension}
\label{app_record}

Figure \ref{fig:records_op_sem} contains the full set of additional operational semantics rules necessary for record subtyping as discussed in Section \ref{subsec:rec_and_subtyping}.

\begin{figure}
  \begin{mathpar}

    \bbrule{Record}{
    }{
      \ttob l_1 = \eval_1; \ldots; l_n = \eval_n \ttcb \smallstep[] \ttob l_1 = \eval_1; \ldots; l_n = \eval_n \ttcb^{\ttob l_1; \ldots; l_n \ttcb}
    }

    \bbrule{Projection}{i \leq m \leq n
    }{
      \ttob l_1 = \eval_1; \ldots; l_i = \eval_i; \ldots; l_n = \eval_n \ttcb^{\ttob l_1; \ldots; l_m\ttcb}.l_i \smallstep[] \eval_i
    }

    \bbrule{Proj-Error}{
      l \notin \ttob l_1; \ldots; l_m \ttcb
    }{
      \ttob l_1 = \eval_1; \ldots; l_n = \eval_n \ttcb^{\ttob l_1; \ldots; l_m\ttcb}.l \smallstep[] \eerror
    }

    \bbrule{Retag}{
      \eval = \ttob l_1 = \eval_1; \ldots; l_n = \eval_n \ttcb^{\ttob l'_1; \ldots; l'_k \ttcb} \\
      m \leq n
    }{
      \retag \ttop\eval, \ttob l_1; \ldots; l_m \ttcb\ttcp \smallstep[]  \ttob l_1 = \eval_1; \ldots; l_n = \eval_n \ttcb^{\ttob l_1; \ldots; l_m \ttcb}
    }

    \bbrule{Retag-Error1}{
      \eval \text{ is not a record value}
    }{
      \retag \ttop\eval, \ttob l_1; \ldots; l_m \ttcb\ttcp \smallstep[] \eerror
    }

    \bbrule{Retag-Error2}{
      \eval = \ttob l_1 = \eval_1; \ldots; l_n = \eval_n \ttcb^{\ttob l_1; \ldots; l_m \ttcb} \\
      \ttob l_1'; \ldots; l_m' \ttcb \not\subseteq \ttob l_1; \ldots; l_n \ttcb
    }{
      \retag \ttop\eval, \ttob l_1'; \ldots; l_m' \ttcb\ttcp \smallstep[] \eerror
    }
\end{mathpar}
\caption{Additional Operational Semantics Rules for Records with Subtyping}
  \label{fig:records_op_sem} 
\end{figure}

\section{Concolic evaluator implementation overview}
\label{app_concolic}

\ifdefined\gh
\else
  \newcommand\gh[1]{\href{#1}{\color{black}\tiny\faLink}}
\fi

In this section, we discuss how the concolic evaluator is implemented. It is covered at a high level, and there are frequent pointers to the code so that exact implementation can be seen. Refer to Section \ref{sec_backend} for the behavior of the concolic evaluator and for its heuristics.

\subsection{Interpreter}

\subsubsection{Jayil} The concolic evaluator interprets the Jayil language. Jayil was created for DDSE \cite{DDSE} and is in administrative normal form (ANF). For simplicity, the concolic evaluator works over the Jayil language as well. The grammar for Jayil is described in OCaml \gh{https://archive.softwareheritage.org/swh:1:cnt:6b61de06aa8d945714c9a31cdd4fd0b17f1d0f6a;origin=https://github.com/JHU-PL-Lab/jaylang;visit=swh:1:snp:d10322f66a3531ab7251120395770910fe5980dd;anchor=swh:1:rev:54ef2e63cefcaecab809bf1fe73ff5e1795e6f29;path=/src/lang-jayil/ast.ml}.

\subsubsection{Environments} The interpreter is environment-based with an immutable tree map from an identifier to a value. The concolic evaluator overlays the default interpreter, so the behavior of the interpreter can be seen inside the concolic evaluator \gh{https://archive.softwareheritage.org/swh:1:cnt:9079b3d3ddd04c930645e94fc4bfbe6b41436bef;origin=https://github.com/JHU-PL-Lab/jaylang;visit=swh:1:snp:d10322f66a3531ab7251120395770910fe5980dd;anchor=swh:1:rev:54ef2e63cefcaecab809bf1fe73ff5e1795e6f29;path=/src/concolic/evaluator.ml;lines=75}.

\subsection{Solver}

The concolic evaluator uses the Z3 version 4.12.5 SMT solver to solve the symbolic expressions and target new program paths.

\subsubsection{Representations} Jayil has only four data types: 1) integers, 2) booleans, 3) functions, and 4) records. Integers and booleans are a primitive sort in Z3, and we represent functions with a string identifier. Jayil clauses correspond nicely with Z3 formulas except for pattern matching clauses. This is easy for all patterns except record labels. For this reason, records are represented as bitvectors in Z3 \gh{https://archive.softwareheritage.org/swh:1:cnt:54694433d1e7c0bd80916a0b2c849e975a51b702;origin=https://github.com/JHU-PL-Lab/jaylang;visit=swh:1:snp:d10322f66a3531ab7251120395770910fe5980dd;anchor=swh:1:rev:54ef2e63cefcaecab809bf1fe73ff5e1795e6f29;path=/src-vendor/sudu/z3_api.ml;lines=82-85}, where each bit is an indicator for the presence of a label in the record \gh{https://archive.softwareheritage.org/swh:1:cnt:09f8afbc874ac8f6108bd78ba6f293a1091a92f8;origin=https://github.com/JHU-PL-Lab/jaylang;visit=swh:1:snp:d10322f66a3531ab7251120395770910fe5980dd;anchor=swh:1:rev:54ef2e63cefcaecab809bf1fe73ff5e1795e6f29;path=/src/from_dbmc/solve/riddler.ml;lines=11-84}. For example, a Jayil program with record labels \texttt{a}, \texttt{b}, and \texttt{c} might have bits 0, 1, and 2 indicate the presence of \texttt{a}, \texttt{b}, and \texttt{c} respectively. Then \texttt{r = \{ a = 1 ; c = 2 \}} is represented as the bitvector \texttt{101} in Z3. To check whether \texttt{r} has strict record pattern \texttt{p} (i.e. \texttt{r} has all the labels in \texttt{p} and no other labels), the Z3 formula asserts \texttt{r = p}. To check that \texttt{r} has at least the labels in \texttt{p}, the Z3 formula asserts that \texttt{p = r $\land$ p} \gh{https://archive.softwareheritage.org/swh:1:cnt:09f8afbc874ac8f6108bd78ba6f293a1091a92f8;origin=https://github.com/JHU-PL-Lab/jaylang;visit=swh:1:snp:d10322f66a3531ab7251120395770910fe5980dd;anchor=swh:1:rev:54ef2e63cefcaecab809bf1fe73ff5e1795e6f29;path=/src/from_dbmc/solve/riddler.ml;lines=280-305}.

\subsubsection{Keys} Since Jayil is in ANF, all clauses in the program have a unique string identifier. A hash map assigns a unique integer to each identifier and lets this integer identify the clause in Z3 \gh{https://archive.softwareheritage.org/swh:1:cnt:09f8afbc874ac8f6108bd78ba6f293a1091a92f8;origin=https://github.com/JHU-PL-Lab/jaylang;visit=swh:1:snp:d10322f66a3531ab7251120395770910fe5980dd;anchor=swh:1:rev:54ef2e63cefcaecab809bf1fe73ff5e1795e6f29;path=/src/from_dbmc/solve/riddler.ml;lines=97-135}. This is done to avoid Z3's internal handling of strings. This representation would be sufficient if there were no recursion; however, since variables in recursive functions can have different values depending on the recursive depth, the actual key for a clause is the clause's string identifier and the number of functions that have been entered on the path to that clause \gh{https://archive.softwareheritage.org/swh:1:cnt:d7daef5e3497108f8e4f8115be9ef37de4c84556;origin=https://github.com/JHU-PL-Lab/jaylang;visit=swh:1:snp:d10322f66a3531ab7251120395770910fe5980dd;anchor=swh:1:rev:54ef2e63cefcaecab809bf1fe73ff5e1795e6f29;path=/src/concolic/structure/concolic_key.ml}. Because the concolic evaluator solves for a condition along exactly one path at a time, this method is sufficient to uniquely identify each runtime clause in the solver.

\subsubsection{Mutation} Since the concolic evaluator is implemented functionally (except for occasional mutation that is behind an interface) and the Z3 SMT solver is mutable, there is an inherent incompatibility. We choose to create solver instances transiently and not make use of the mutable state \gh{https://archive.softwareheritage.org/swh:1:cnt:f3037b9e2b16697538dac8e248a86a66d06594ab;origin=https://github.com/JHU-PL-Lab/jaylang;visit=swh:1:snp:d10322f66a3531ab7251120395770910fe5980dd;anchor=swh:1:rev:54ef2e63cefcaecab809bf1fe73ff5e1795e6f29;path=/src/concolic/session.ml;lines=147-151}. We are therefore inefficient in our use of the solver, but we get the benefits of correctness from functional code.

\subsection{Path tree}

The concolic evaluator aims to execute all possible programs paths up to some fixed number of conditional branches. To do this, we must store all possible program paths. This is done with a path tree.

\subsubsection{Structure} The path tree is implemented with recursive modules and functional types \gh{https://archive.softwareheritage.org/swh:1:cnt:227eaae664f89cca38bf5497ddcf4d26d69565b8;origin=https://github.com/JHU-PL-Lab/jaylang;visit=swh:1:snp:d10322f66a3531ab7251120395770910fe5980dd;anchor=swh:1:rev:54ef2e63cefcaecab809bf1fe73ff5e1795e6f29;path=/src/concolic/structure/path_tree.mli}. Each node in the tree represents the set of clauses between the latest conditional branch (or the start of the program) and the next conditional branch (or the end of the program). Therefore, a node has a child for each direction of the next conditional branch \gh{https://archive.softwareheritage.org/swh:1:cnt:074a434052dae6b7480806c5a60199bee440e3b6;origin=https://github.com/JHU-PL-Lab/jaylang;visit=swh:1:snp:d10322f66a3531ab7251120395770910fe5980dd;anchor=swh:1:rev:54ef2e63cefcaecab809bf1fe73ff5e1795e6f29;path=/src/concolic/structure/path_tree.ml;lines=173}. Each child has a status: unsolved, unknown, unsatisfiable, or hit \gh{https://archive.softwareheritage.org/swh:1:cnt:074a434052dae6b7480806c5a60199bee440e3b6;origin=https://github.com/JHU-PL-Lab/jaylang;visit=swh:1:snp:d10322f66a3531ab7251120395770910fe5980dd;anchor=swh:1:rev:54ef2e63cefcaecab809bf1fe73ff5e1795e6f29;path=/src/concolic/structure/path_tree.ml;lines=306-310}. Children are \texttt{unsolved} when they are not yet hit, and the SMT solver has not determined their satisfiability. A child is \texttt{unknown} if the solver timed out when solving. This has not yet happened in any practical use. The child is \texttt{unsatisfiable} if its condition constraint cannot be satisfied alongside all the formulas in the path to that child. In all three of these cases, the child is a leaf. Otherwise, the child has been \texttt{hit} in some program execution.

\subsubsection{Data} The formulas acquired through interpretation of a node's clauses are stored at the node \gh{https://archive.softwareheritage.org/swh:1:cnt:074a434052dae6b7480806c5a60199bee440e3b6;origin=https://github.com/JHU-PL-Lab/jaylang;visit=swh:1:snp:d10322f66a3531ab7251120395770910fe5980dd;anchor=swh:1:rev:54ef2e63cefcaecab809bf1fe73ff5e1795e6f29;path=/src/concolic/structure/path_tree.ml;lines=121}. A child node also has constraints: formulas that must be satisfied to take the branch from the node's parent \gh{https://archive.softwareheritage.org/swh:1:cnt:074a434052dae6b7480806c5a60199bee440e3b6;origin=https://github.com/JHU-PL-Lab/jaylang;visit=swh:1:snp:d10322f66a3531ab7251120395770910fe5980dd;anchor=swh:1:rev:54ef2e63cefcaecab809bf1fe73ff5e1795e6f29;path=/src/concolic/structure/path_tree.ml;lines=242}. These constraints are formulas for the node's branch condition and any additional formulas needed to satisfy \texttt{assume} or \texttt{assert} statements at the node.

\subsection{Target paths}

While the evaluator interprets the program to execute the target path, it acquires new targets: the negation of each branch taken along the path.

\subsubsection{Target representation} Targets are represented by a path of conditionals (conditional variables with their boolean values) \gh{https://archive.softwareheritage.org/swh:1:cnt:897738a3ab8aea12cc16d0b20930e9dd779a7e84;origin=https://github.com/JHU-PL-Lab/jaylang;visit=swh:1:snp:d10322f66a3531ab7251120395770910fe5980dd;anchor=swh:1:rev:54ef2e63cefcaecab809bf1fe73ff5e1795e6f29;path=/src/concolic/structure/target.ml}. The target is solved by traversing the path tree corresponding to the target's path and adding all formulas found at the nodes to the Z3 SMT solver. The solver checks compatibility of these formulas with the target's condition constraint.

\subsubsection{Target queues} Targets are stored in tree functional priority search queues: one where the targets are given a priority such that they can be popped in a depth-first search manner, another in a breadth-first search manner, and one that is uniformly random \gh{https://archive.softwareheritage.org/swh:1:cnt:291d602ef65d8c9e74b06fc54cddd971cdac3450;origin=https://github.com/JHU-PL-Lab/jaylang;visit=swh:1:snp:d10322f66a3531ab7251120395770910fe5980dd;anchor=swh:1:rev:54ef2e63cefcaecab809bf1fe73ff5e1795e6f29;path=/src/concolic/structure/target_queue.ml;lines=162}. We use priority search queues instead of stacks or queues so that when a target is pushed, it is efficiently erased from the queue and doesn't exist multiple times.

\subsubsection{Target acquisition} When the interpreter finishes, the program path is visited in the path tree to acquire new targets \gh{https://archive.softwareheritage.org/swh:1:cnt:e197f8b7eb409c418809d912562dd567082c337a;origin=https://github.com/JHU-PL-Lab/jaylang;visit=swh:1:snp:d10322f66a3531ab7251120395770910fe5980dd;anchor=swh:1:rev:54ef2e63cefcaecab809bf1fe73ff5e1795e6f29;path=/src/concolic/symbolic_session.ml;lines=71-105}. For each branch in the path, the negation of the branch is checked in the tree, and if it is an \texttt{unsolved} branch or previously terminated on a failed assert or assume statement \gh{https://archive.softwareheritage.org/swh:1:cnt:074a434052dae6b7480806c5a60199bee440e3b6;origin=https://github.com/JHU-PL-Lab/jaylang;visit=swh:1:snp:d10322f66a3531ab7251120395770910fe5980dd;anchor=swh:1:rev:54ef2e63cefcaecab809bf1fe73ff5e1795e6f29;path=/src/concolic/structure/path_tree.ml;lines=334}, then it is added to the target queue.

\subsection{Integration with interpreter}

The concolic evaluator is overlayed on the default Jayil interpreter. It repeatedly interprets the program while it consults several ``session'' modules described here.

\subsubsection{Sessions} There are several ``session'' modules that track information for the concolic evaluator.

\begin{itemize}
  \item Concrete session: holds only the information needed to concretely run the interpreter \gh{https://archive.softwareheritage.org/swh:1:cnt:f3037b9e2b16697538dac8e248a86a66d06594ab;origin=https://github.com/JHU-PL-Lab/jaylang;visit=swh:1:snp:d10322f66a3531ab7251120395770910fe5980dd;anchor=swh:1:rev:54ef2e63cefcaecab809bf1fe73ff5e1795e6f29;path=/src/concolic/session.ml;lines=10}.
  \item Symbolic session: tracks all symbolic information during a single run of the interpreter \gh{https://archive.softwareheritage.org/swh:1:cnt:b6ee1c07f731325af6eacf974c7149f4b96604fa;origin=https://github.com/JHU-PL-Lab/jaylang;visit=swh:1:snp:d10322f66a3531ab7251120395770910fe5980dd;anchor=swh:1:rev:54ef2e63cefcaecab809bf1fe73ff5e1795e6f29;path=/src/concolic/symbolic_session.mli}.
  \item Session (main): handles inter-run information, e.g. accumulates the information from the symbolic session into the path tree \gh{https://archive.softwareheritage.org/swh:1:cnt:f3037b9e2b16697538dac8e248a86a66d06594ab;origin=https://github.com/JHU-PL-Lab/jaylang;visit=swh:1:snp:d10322f66a3531ab7251120395770910fe5980dd;anchor=swh:1:rev:54ef2e63cefcaecab809bf1fe73ff5e1795e6f29;path=/src/concolic/session.ml;lines=69-177}. 
\end{itemize}

With this setup, the interpreter runs while interfacing with the concrete and symbolic sessions, and it interacts with the main session to begin the next run. This means the meat of the concolic evaluator's logic is found in the session modules, and the modules described in the previous sections are simply used by the sessions.


\subsubsection{Optional arguments} There are many settings for the concolic evaluator (e.g. max tree depth, program max step, program timeout, etc.) to be chosen by the user, and these are all optional arguments to the evaluator's \texttt{test} function \gh{https://archive.softwareheritage.org/swh:1:cnt:6309abee6f2a3d2711c4e72dd54fe9b1f79c3384;origin=https://github.com/JHU-PL-Lab/jaylang;visit=swh:1:snp:d10322f66a3531ab7251120395770910fe5980dd;anchor=swh:1:rev:54ef2e63cefcaecab809bf1fe73ff5e1795e6f29;path=/src/concolic/driver.mli;lines=21} using an optional argument module. The arguments can be wrapped as a record \gh{https://archive.softwareheritage.org/swh:1:cnt:008d588a1586fe2b546e49167e7ae89a99786b8c;origin=https://github.com/JHU-PL-Lab/jaylang;visit=swh:1:snp:d10322f66a3531ab7251120395770910fe5980dd;anchor=swh:1:rev:54ef2e63cefcaecab809bf1fe73ff5e1795e6f29;path=/src/concolic/options.ml;lines=5} for internal use, and they extend nicely to \texttt{Argparse} command-line arguments using \texttt{ref} cells \gh{https://archive.softwareheritage.org/swh:1:cnt:008d588a1586fe2b546e49167e7ae89a99786b8c;origin=https://github.com/JHU-PL-Lab/jaylang;visit=swh:1:snp:d10322f66a3531ab7251120395770910fe5980dd;anchor=swh:1:rev:54ef2e63cefcaecab809bf1fe73ff5e1795e6f29;path=/src/concolic/options.ml;lines=20}. Optional arguments can be applied to the function as a record or as OCaml optional arguments, and compositions and mappings are supported in a monad-like way for functions with the same optional arguments \gh{https://archive.softwareheritage.org/swh:1:cnt:008d588a1586fe2b546e49167e7ae89a99786b8c;origin=https://github.com/JHU-PL-Lab/jaylang;visit=swh:1:snp:d10322f66a3531ab7251120395770910fe5980dd;anchor=swh:1:rev:54ef2e63cefcaecab809bf1fe73ff5e1795e6f29;path=/src/concolic/options.ml;lines=48-95}.

\subsubsection{Lwt} We use \texttt{Lwt} for timeouts with the assumption that the solver and interpreter are each independently fast on a single program path, so \texttt{Lwt} can cause the evaluator to quit between runs \gh{https://archive.softwareheritage.org/swh:1:cnt:9079b3d3ddd04c930645e94fc4bfbe6b41436bef;origin=https://github.com/JHU-PL-Lab/jaylang;visit=swh:1:snp:d10322f66a3531ab7251120395770910fe5980dd;anchor=swh:1:rev:54ef2e63cefcaecab809bf1fe73ff5e1795e6f29;path=/src/concolic/evaluator.ml;lines=324}, and these runs are dense enough in time space that the time to quit is near what is desired.

\subsection{Benchmarks and tests}
\label{app_tests}

\subsubsection{Testing} The concolic evaluator went through several versions when building it up from scratch, and to ensure correctness in the earlier versions, we tested hand-written Jayil programs and exact resulting branch statuses. In later versions, once the correctness of earlier versions had been established, the testing evolved to evaluating translated Bluejay programs and checking for the ability to find a path that leads to an \texttt{ERROR} clause \gh{https://archive.softwareheritage.org/swh:1:cnt:26396b1f23ecffce5c8f579c6f227b16fae20f7a;origin=https://github.com/JHU-PL-Lab/jaylang;visit=swh:1:snp:d10322f66a3531ab7251120395770910fe5980dd;anchor=swh:1:rev:54ef2e63cefcaecab809bf1fe73ff5e1795e6f29;path=/src-test/concolic/test_concolic.ml;lines=13}. This relies on the correctness of the translation, but it allows us to test significantly larger and more complex programs that cannot easily be worked out manually on the Jayil code, but we can convince ourselves easily of the existence or nonexistence of a type error in the Bluejay code. Language features are tested in a greater number of combinations this way, even though the testing has been truncated to only check a binary result.

Further, \texttt{assert} statements are sprinkled throughout the code (e.g. here \gh{https://archive.softwareheritage.org/swh:1:cnt:e197f8b7eb409c418809d912562dd567082c337a;origin=https://github.com/JHU-PL-Lab/jaylang;visit=swh:1:snp:d10322f66a3531ab7251120395770910fe5980dd;anchor=swh:1:rev:54ef2e63cefcaecab809bf1fe73ff5e1795e6f29;path=/src/concolic/symbolic_session.ml;lines=288}) for logical impossibilities to check the correctness of the implementation. These asserts, along with the binary testing on BlueJay programs, has us convinced that the concolic evaluator is correct.

\subsubsection{Landmarks} We use \texttt{landmarks} \cite{landmarks} to profile the performance of the evaluator. The results convinced us to use function depth in a clause's identifier to avoid hashing stacks, which was inherited from DDSE, among other small changes to improve performance. It also indicates that the functional data structures we use (sometimes with poor time complexity) only negligibly impact efficiency, and our attention is better directed at reducing calls to the SMT solver and creating fewer formulas.

\subsubsection{Benchmarks} The concolic evaluator is benchmarked by timing the call to read, parse, and translate a Bluejay program into a Jayil program separately from the call to concolically evaluate the Jayil program \gh{https://archive.softwareheritage.org/swh:1:cnt:a8776bcbcd5da0f645b952dfb8a316875422db0e;origin=https://github.com/JHU-PL-Lab/jaylang;visit=swh:1:snp:d10322f66a3531ab7251120395770910fe5980dd;anchor=swh:1:rev:54ef2e63cefcaecab809bf1fe73ff5e1795e6f29;path=/benchmark/concolic/cbenchmark.ml;lines=62-118}. Both processes are run repeatedly, and the average of the ten trials is reported. The results are printed as a LaTeX-formatted table.

In Table \ref{table_unit_tests}, we extend Table \ref{table_bjy_benchmarks} to include the smaller unit tests not included in that table. The soft contract benchmarks also have their detailed runtime and features presented in Table \ref{table_scheme_benchmarks_full}.

\begin{table}
  \begin{center}
    \scalebox{0.72}{
    \begin{tabular}{@{}r|c@{\hspace{3pt}}c@{\hspace{3pt}}c|@{\hspace{3pt}}c@{\hspace{3pt}}|c@{\hspace{3pt}}c@{\hspace{3pt}}c@{\hspace{3pt}}c@{\hspace{3pt}}c@{\hspace{3pt}}c@{\hspace{3pt}}c@{\hspace{3pt}}c@{\hspace{3pt}}c@{\hspace{3pt}}c@{\hspace{3pt}}c@{\hspace{3pt}}c@{\hspace{3pt}}c@{\hspace{3pt}}c@{\hspace{3pt}}c@{\hspace{3pt}}c@{\hspace{3pt}}c@{\hspace{3pt}}c@{\hspace{3pt}}}
    Test Name & Run & Transl & Total & LOC & \rot{\underline{P}olymorphic types} & \rot{\underline{V}ariants} & \rot{\underline{I}ntersection types} & \rot{\underline{R}ecursive functions} & \rot{\underline{M}u types} & \rot{\underline{H}igher order functions} & \rot{\underline{S}ubtyping} & \rot{\underline{T}ype casing} & \rot{\underline{O}OP-style} & \rot{Re\underline{f}inement types} & \rot{\underline{D}ependent types} & \rot{P\underline{a}rametric types} & \rot{Re\underline{c}ords} & \rot{\underline{W}rap required} & \rot{Assertio\underline{n}s} & \rot{Operator mis\underline{u}se} & \rot{Return t\underline{y}pe} & \rot{Match (X)}\\
      \hline
      \texttt{balanced\_tree}                &  4    &  140  &  143  &  49  &  --       &  V        &  --       &  R        &  \red{M}  &  --       &  --  &  --  &  --       &  \red{F}  &  --       &  --       &  C        &  --       &  --  &  --       &  --       &  X \\
      \texttt{bst\_instance}                 &  3    &  165  &  167  &  45  &  --       &  V        &  --       &  R        &  \red{M}  &  --       &  --  &  --  &  --       &  \red{F}  &  --       &  --       &  C        &  --       &  --  &  --       &  --       &  X \\
      \texttt{dep\_fun\_test\_1}             &  14   &  14   &  27   &  10  &  --       &  --       &  --       &  \red{R}  &  --       &  --       &  --  &  --  &  --       &  \red{F}  &  \red{D}  &  --       &  --       &  --       &  --  &  --       &  --       &  X \\
      \texttt{dep\_type\_test\_1}            &  48   &  8    &  55   &  11  &  --       &  --       &  --       &  --       &  --       &  \red{H}  &  --  &  --  &  --       &  --       &  \red{D}  &  --       &  --       &  --       &  --  &  --       &  Y        &  -- \\
      \texttt{flow\_sensitive\_1}            &  11   &  9    &  20   &  7   &  --       &  --       &  --       &  --       &  --       &  --       &  --  &  --  &  --       &  \red{F}  &  --       &  --       &  --       &  --       &  --  &  --       &  --       &  -- \\
      \texttt{intersection\_type\_1}         &  1    &  2    &  3    &  2   &  --       &  --       &  \red{I}  &  --       &  --       &  --       &  --  &  --  &  --       &  --       &  --       &  --       &  --       &  --       &  --  &  --       &  Y        &  -- \\
      \texttt{intersection\_type\_2}         &  1    &  21   &  21   &  3   &  --       &  --       &  I        &  --       &  --       &  H        &  --  &  --  &  --       &  --       &  --       &  --       &  --       &  --       &  --  &  --       &  \red{Y}  &  -- \\
      \texttt{let\_fun\_test\_1}             &  1    &  2    &  2    &  2   &  --       &  --       &  --       &  --       &  --       &  --       &  --  &  --  &  --       &  --       &  --       &  --       &  --       &  --       &  --  &  --       &  \red{Y}  &  -- \\
      \texttt{let\_fun\_test\_2}             &  1    &  2    &  2    &  2   &  --       &  --       &  --       &  --       &  --       &  --       &  --  &  --  &  --       &  --       &  --       &  --       &  --       &  --       &  --  &  --       &  \red{Y}  &  -- \\
      \texttt{let\_fun\_test\_4}             &  1    &  2    &  2    &  2   &  --       &  --       &  --       &  --       &  --       &  --       &  --  &  --  &  --       &  --       &  --       &  --       &  --       &  --       &  --  &  \red{U}  &  --       &  -- \\
      \texttt{let\_fun\_test\_5}             &  1    &  2    &  2    &  5   &  --       &  --       &  --       &  --       &  --       &  --       &  --  &  --  &  --       &  --       &  --       &  --       &  --       &  --       &  --  &  \red{U}  &  --       &  -- \\
      \texttt{let\_fun\_test\_7}             &  1    &  2    &  2    &  4   &  --       &  --       &  --       &  --       &  --       &  --       &  --  &  --  &  --       &  --       &  --       &  --       &  --       &  --       &  --  &  --       &  \red{Y}  &  -- \\
      \texttt{let\_fun\_test\_8}             &  1    &  8    &  8    &  7   &  --       &  --       &  --       &  --       &  --       &  H        &  --  &  --  &  --       &  --       &  --       &  --       &  --       &  --       &  --  &  --       &  \red{Y}  &  -- \\
      \texttt{list\_test\_1}                 &  13   &  8    &  20   &  7   &  --       &  --       &  --       &  --       &  --       &  --       &  --  &  --  &  --       &  --       &  --       &  --       &  --       &  --       &  --  &  --       &  \red{Y}  &  X \\
      \texttt{list\_test\_3}                 &  1    &  4    &  4    &  2   &  --       &  --       &  --       &  --       &  --       &  --       &  --  &  --  &  --       &  --       &  --       &  --       &  --       &  --       &  --  &  --       &  \red{Y}  &  -- \\
      \texttt{list\_test\_4}                 &  1    &  4    &  4    &  4   &  --       &  --       &  --       &  --       &  --       &  --       &  --  &  --  &  --       &  --       &  --       &  --       &  --       &  --       &  --  &  --       &  \red{Y}  &  -- \\
      \texttt{list\_test\_5}                 &  1    &  4    &  4    &  4   &  --       &  --       &  --       &  --       &  --       &  --       &  --  &  --  &  --       &  --       &  --       &  --       &  --       &  --       &  --  &  --       &  \red{Y}  &  -- \\
      \texttt{mutually\_rec\_1}              &  1    &  5    &  5    &  10  &  --       &  --       &  --       &  \red{R}  &  --       &  --       &  --  &  --  &  --       &  --       &  --       &  --       &  --       &  --       &  --  &  --       &  \red{Y}  &  -- \\
      \texttt{mutually\_rec\_2}              &  1    &  5    &  5    &  10  &  --       &  --       &  --       &  \red{R}  &  --       &  --       &  --  &  --  &  --       &  --       &  --       &  --       &  --       &  --       &  --  &  --       &  \red{Y}  &  -- \\
      \texttt{mutually\_rec\_dep\_types\_1}  &  340  &  6    &  345  &  15  &  --       &  --       &  --       &  \red{R}  &  --       &  --       &  --  &  --  &  --       &  --       &  \red{D}  &  --       &  --       &  --       &  --  &  --       &  \red{Y}  &  -- \\
      \texttt{parametric\_id}                &  2    &  2    &  3    &  6   &  \red{P}  &  --       &  --       &  --       &  --       &  --       &  --  &  --  &  --       &  --       &  --       &  \red{A}  &  --       &  \red{W}  &  --  &  --       &  --       &  -- \\
      \texttt{pattern\_match\_1}             &  1    &  1    &  1    &  3   &  --       &  --       &  --       &  --       &  --       &  --       &  --  &  --  &  --       &  --       &  --       &  --       &  \red{C}  &  --       &  --  &  --       &  --       &  \red{X} \\
      \texttt{poly\_apply}                   &  1    &  6    &  6    &  4   &  \red{P}  &  --       &  --       &  --       &  --       &  H        &  --  &  --  &  --       &  --       &  --       &  --       &  --       &  --       &  --  &  --       &  --       &  -- \\
      \texttt{poly\_casting}                 &  1    &  2    &  2    &  4   &  \red{P}  &  --       &  --       &  --       &  --       &  --       &  --  &  --  &  --       &  --       &  --       &  --       &  --       &  --       &  --  &  \red{U}  &  --       &  -- \\
      \texttt{poly\_fst}                     &  1    &  2    &  3    &  4   &  \red{P}  &  --       &  --       &  --       &  --       &  --       &  --  &  --  &  --       &  --       &  --       &  --       &  --       &  --       &  --  &  --       &  \red{Y}  &  -- \\
      \texttt{poly\_map}                     &  83   &  28   &  110  &  7   &  \red{P}  &  --       &  --       &  \red{R}  &  --       &  H        &  --  &  --  &  --       &  --       &  --       &  --       &  --       &  --       &  --  &  --       &  \red{Y}  &  X \\
      \texttt{poly\_record}                  &  2    &  4    &  5    &  8   &  \red{P}  &  --       &  --       &  --       &  --       &  --       &  --  &  --  &  --       &  --       &  --       &  \red{A}  &  \red{C}  &  \red{W}  &  --  &  --       &  --       &  -- \\
      \texttt{poly\_specification}           &  1    &  6    &  6    &  7   &  \red{P}  &  --       &  --       &  --       &  --       &  --       &  --  &  --  &  --       &  --       &  --       &  --       &  --       &  --       &  --  &  --       &  --       &  -- \\
      \texttt{project\_non\_record}          &  1    &  1    &  1    &  1   &  --       &  --       &  --       &  --       &  --       &  --       &  --  &  --  &  --       &  --       &  --       &  --       &  \red{C}  &  --       &  --  &  --       &  --       &  -- \\
      \texttt{rec\_fun\_1}                   &  13   &  8    &  20   &  17  &  --       &  --       &  --       &  \red{R}  &  --       &  --       &  --  &  --  &  --       &  \red{F}  &  --       &  --       &  --       &  --       &  --  &  --       &  --       &  X \\
      \texttt{rec\_fun\_2}                   &  71   &  2    &  73   &  6   &  --       &  --       &  --       &  R        &  --       &  --       &  --  &  --  &  --       &  --       &  --       &  --       &  --       &  --       &  --  &  --       &  \red{Y}  &  -- \\
      \texttt{rec\_fun\_4}                   &  70   &  2    &  72   &  6   &  --       &  --       &  --       &  R        &  --       &  --       &  --  &  --  &  --       &  --       &  --       &  --       &  --       &  --       &  --  &  --       &  \red{Y}  &  -- \\
      \texttt{rec\_fun\_5}                   &  74   &  2    &  76   &  6   &  --       &  --       &  --       &  R        &  --       &  --       &  --  &  --  &  --       &  --       &  --       &  --       &  --       &  --       &  --  &  --       &  \red{Y}  &  -- \\
      \texttt{rec\_fun\_6}                   &  68   &  4    &  72   &  10  &  --       &  --       &  --       &  R        &  --       &  --       &  --  &  --  &  --       &  --       &  --       &  --       &  --       &  --       &  --  &  --       &  \red{Y}  &  -- \\
      \texttt{rec\_fun\_7}                   &  15   &  81   &  95   &  7   &  --       &  --       &  --       &  R        &  --       &  H        &  --  &  --  &  --       &  --       &  --       &  --       &  --       &  --       &  --  &  --       &  \red{Y}  &  X \\
      \texttt{record\_1}                     &  1    &  8    &  9    &  4   &  --       &  --       &  --       &  --       &  --       &  --       &  --  &  --  &  --       &  --       &  --       &  --       &  \red{C}  &  --       &  --  &  --       &  \red{Y}  &  -- \\
      \texttt{record\_11}                    &  1    &  3    &  3    &  5   &  --       &  --       &  --       &  --       &  --       &  --       &  --  &  --  &  --       &  --       &  --       &  --       &  \red{C}  &  --       &  --  &  --       &  --       &  \red{X} \\
      \texttt{record\_2}                     &  1    &  4    &  5    &  4   &  --       &  --       &  --       &  --       &  --       &  --       &  --  &  --  &  --       &  --       &  --       &  --       &  \red{C}  &  --       &  --  &  --       &  --       &  -- \\
      \texttt{record\_4}                     &  16   &  326  &  341  &  14  &  --       &  --       &  --       &  --       &  --       &  --       &  --  &  --  &  --       &  \red{F}  &  --       &  --       &  \red{C}  &  --       &  --  &  --       &  --       &  -- \\
      \texttt{record\_5}                     &  15   &  47   &  61   &  10  &  --       &  --       &  --       &  --       &  --       &  --       &  --  &  --  &  --       &  \red{F}  &  --       &  --       &  C        &  --       &  --  &  --       &  --       &  -- \\
      \texttt{record\_6}                     &  1    &  9    &  9    &  4   &  --       &  --       &  --       &  --       &  --       &  --       &  --  &  --  &  --       &  --       &  --       &  --       &  C        &  --       &  --  &  --       &  \red{Y}  &  -- \\
      \texttt{record\_7}                     &  183  &  45   &  228  &  12  &  --       &  --       &  --       &  --       &  --       &  --       &  --  &  --  &  --       &  \red{F}  &  --       &  --       &  C        &  --       &  --  &  --       &  --       &  -- \\
      \texttt{self\_passing}                 &  2    &  7    &  9    &  10  &  --       &  --       &  --       &  --       &  --       &  H        &  --  &  --  &  \red{O}  &  --       &  --       &  --       &  \red{C}  &  --       &  --  &  --       &  --       &  -- \\
      \texttt{set\_type\_1}                  &  14   &  15   &  29   &  11  &  --       &  --       &  --       &  --       &  --       &  --       &  --  &  --  &  --       &  \red{F}  &  --       &  --       &  --       &  --       &  --  &  --       &  --       &  -- \\
      \texttt{set\_type\_2}                  &  1    &  2    &  3    &  4   &  --       &  --       &  --       &  --       &  --       &  --       &  --  &  --  &  --       &  \red{F}  &  --       &  --       &  --       &  --       &  --  &  --       &  --       &  -- \\
      \texttt{set\_type\_4}                  &  1    &  18   &  19   &  14  &  --       &  --       &  --       &  R        &  --       &  --       &  --  &  --  &  --       &  \red{F}  &  --       &  --       &  --       &  --       &  --  &  --       &  --       &  X \\
      \texttt{sub\_simple\_function}         &  1    &  101  &  101  &  8   &  --       &  --       &  --       &  --       &  --       &  \red{H}  &  S   &  --  &  --       &  --       &  --       &  --       &  \red{C}  &  --       &  --  &  --       &  Y        &  -- \\
      \texttt{sub\_variant}                  &  1    &  38   &  38   &  4   &  P        &  \red{V}  &  --       &  --       &  --       &  --       &  S   &  --  &  --       &  --       &  --       &  --       &  --       &  --       &  --  &  --       &  \red{Y}  &  -- \\
      \texttt{union\_type\_1}                &  1    &  20   &  20   &  4   &  --       &  \red{V}  &  --       &  --       &  --       &  --       &  --  &  --  &  --       &  --       &  --       &  --       &  C        &  --       &  --  &  --       &  --       &  -- \\
      \texttt{union\_type\_2}                &  1    &  701  &  701  &  4   &  --       &  \red{V}  &  --       &  --       &  --       &  --       &  --  &  --  &  --       &  --       &  --       &  --       &  C        &  --       &  --  &  --       &  --       &  -- \\
      \texttt{union\_type\_3}                &  1    &  27   &  28   &  4   &  --       &  \red{V}  &  --       &  --       &  --       &  --       &  --  &  --  &  --       &  --       &  --       &  --       &  --       &  --       &  --  &  --       &  --       &  -- \\
      \texttt{variant\_type\_1}              &  1    &  3    &  3    &  4   &  --       &  \red{V}  &  --       &  --       &  --       &  --       &  --  &  --  &  --       &  --       &  --       &  --       &  --       &  --       &  --  &  --       &  --       &  -- \\
    \end{tabular}
  }
  \end{center}
  \captionof{table}{Complete set of tests and benchmarks not include in Table \ref{table_bjy_benchmarks}. Run times and translation times are in ms. Letters are used for readability to indicate which features from Table \ref{table_bjy_features} are in the test. Black font indicates the feature is used, \red{red font} indicates the feature is key to the type error, and -- indicates the feature is not present in the test. }
  \label{table_unit_tests}
\end{table}

\begin{table}
  \begin{center}
    \scalebox{0.9}{
    \begin{tabular}{@{}r|c@{\hspace{3pt}}c@{\hspace{3pt}}c|@{\hspace{3pt}}c@{\hspace{3pt}}|c@{\hspace{3pt}}c@{\hspace{3pt}}c@{\hspace{3pt}}c@{\hspace{3pt}}c@{\hspace{3pt}}c@{\hspace{3pt}}c@{\hspace{3pt}}c@{\hspace{3pt}}c@{\hspace{3pt}}c@{\hspace{3pt}}c@{\hspace{3pt}}c@{\hspace{3pt}}c@{\hspace{3pt}}c@{\hspace{3pt}}c@{\hspace{3pt}}c@{\hspace{3pt}}c@{\hspace{3pt}}c@{\hspace{3pt}}}
    Test Name & Run & Transl & Total & LOC & \rot{\underline{P}olymorphic types} & \rot{\underline{V}ariants} & \rot{\underline{I}ntersection types} & \rot{\underline{R}ecursive functions} & \rot{\underline{M}u types} & \rot{\underline{H}igher order functions} & \rot{\underline{S}ubtyping} & \rot{\underline{T}ype casing} & \rot{\underline{O}OP-style} & \rot{Re\underline{f}inement types} & \rot{\underline{D}ependent types} & \rot{P\underline{a}rametric types} & \rot{Re\underline{c}ords} & \rot{\underline{W}rap required} & \rot{Assertio\underline{n}s} & \rot{Operator mis\underline{u}se} & \rot{Return t\underline{y}pe} & \rot{Match (X)}\\
      \hline
      \texttt{all}              &  85   &  20   &  104  &  11  &  \red{P}  &  --  &  --  &  R        &  --       &  H        &  --  &  --  &  --  &  --       &  --       &  --  &  --  &  --       &  --       &  --  &  \red{Y}  &  -- \\
      \texttt{append}           &  372  &  61   &  432  &  13  &  P        &  --  &  --  &  R        &  --       &  --       &  --  &  --  &  --  &  \red{F}  &  --       &  --  &  --  &  --       &  --       &  --  &  --       &  -- \\
      \texttt{boolflip\_e}      &  3    &  2    &  4    &  17  &  --       &  --  &  --  &  R        &  --       &  --       &  --  &  --  &  --  &  --       &  --       &  --  &  --  &  --       &  \red{N}  &  --  &  --       &  -- \\
      \texttt{braun\_tree}      &  219  &  107  &  325  &  33  &  P        &  V   &  --  &  \red{R}  &  \red{M}  &  --       &  --  &  --  &  --  &  \red{F}  &  --       &  --  &  C   &  --       &  --       &  --  &  --       &  X \\
      \texttt{flatten}          &  465  &  38   &  503  &  18  &  P        &  V   &  --  &  \red{R}  &  \red{M}  &  --       &  --  &  --  &  --  &  --       &  --       &  --  &  --  &  --       &  --       &  --  &  \red{Y}  &  X \\
      \texttt{fold\_fun\_list}  &  143  &  24   &  167  &  20  &  --       &  --  &  --  &  \red{R}  &  --       &  H        &  --  &  --  &  --  &  \red{F}  &  --       &  --  &  --  &  --       &  --       &  --  &  --       &  X \\
      \texttt{foldl}            &  93   &  82   &  175  &  7   &  --       &  --  &  --  &  \red{R}  &  --       &  H        &  --  &  --  &  --  &  --       &  --       &  --  &  --  &  \red{W}  &  --       &  --  &  --       &  X \\
      \texttt{foldl1}           &  13   &  42   &  55   &  11  &  P        &  --  &  --  &  R        &  --       &  H        &  --  &  --  &  --  &  --       &  --       &  --  &  --  &  --       &  \red{N}  &  --  &  --       &  X \\
      \texttt{foldr}            &  99   &  78   &  176  &  7   &  --       &  --  &  --  &  \red{R}  &  --       &  H        &  --  &  --  &  --  &  --       &  --       &  --  &  --  &  \red{W}  &  --       &  --  &  --       &  X \\
      \texttt{foldr1}           &  13   &  39   &  51   &  11  &  P        &  --  &  --  &  R        &  --       &  H        &  --  &  --  &  --  &  --       &  --       &  --  &  --  &  --       &  \red{N}  &  --  &  --       &  X \\
      \texttt{hors}             &  18   &  25   &  43   &  28  &  --       &  --  &  --  &  R        &  --       &  \red{H}  &  --  &  --  &  --  &  \red{F}  &  --       &  --  &  --  &  \red{W}  &  --       &  --  &  --       &  -- \\
      \texttt{hrec}             &  1    &  4    &  5    &  9   &  --       &  --  &  --  &  \red{R}  &  --       &  H        &  --  &  --  &  --  &  \red{F}  &  --       &  --  &  --  &  --       &  --       &  --  &  --       &  -- \\
      \texttt{intro1}           &  72   &  7    &  79   &  14  &  --       &  --  &  --  &  --       &  --       &  H        &  --  &  --  &  --  &  \red{F}  &  --       &  --  &  --  &  \red{W}  &  --       &  --  &  --       &  -- \\
      \texttt{intro3}           &  13   &  15   &  27   &  15  &  --       &  --  &  --  &  --       &  --       &  H        &  --  &  --  &  --  &  \red{F}  &  D        &  --  &  --  &  \red{W}  &  --       &  --  &  --       &  -- \\
      \texttt{last}             &  13   &  10   &  22   &  20  &  P        &  --  &  --  &  R        &  --       &  H        &  --  &  --  &  --  &  --       &  --       &  --  &  --  &  --       &  \red{N}  &  --  &  --       &  X \\
      \texttt{lastpair}         &  83   &  90   &  173  &  13  &  P        &  --  &  --  &  R        &  --       &  --       &  --  &  --  &  --  &  \red{F}  &  --       &  --  &  --  &  --       &  --       &  --  &  --       &  X \\
      \texttt{max}              &  28   &  66   &  94   &  13  &  --       &  --  &  --  &  --       &  --       &  H        &  --  &  --  &  --  &  \red{F}  &  D        &  --  &  --  &  --       &  --       &  --  &  --       &  -- \\
      \texttt{mem}              &  100  &  20   &  119  &  21  &  --       &  --  &  --  &  \red{R}  &  --       &  --       &  --  &  --  &  --  &  \red{F}  &  \red{D}  &  --  &  --  &  --       &  --       &  --  &  --       &  X \\
      \texttt{member}           &  13   &  13   &  26   &  10  &  --       &  --  &  --  &  R        &  --       &  --       &  --  &  --  &  --  &  --       &  --       &  --  &  --  &  --       &  --       &  --  &  \red{Y}  &  X \\
      \texttt{mult}             &  35   &  34   &  68   &  9   &  --       &  --  &  --  &  R        &  --       &  H        &  --  &  --  &  --  &  \red{F}  &  \red{D}  &  --  &  --  &  --       &  --       &  --  &  --       &  -- \\
      \texttt{mult\_all\_e}     &  71   &  5    &  75   &  14  &  --       &  --  &  --  &  R        &  --       &  --       &  --  &  --  &  --  &  \red{F}  &  --       &  --  &  --  &  --       &  --       &  --  &  --       &  -- \\
      \texttt{mult\_cps\_e}     &  1    &  3    &  4    &  15  &  --       &  --  &  --  &  R        &  --       &  H        &  --  &  --  &  --  &  --       &  --       &  --  &  --  &  --       &  \red{N}  &  --  &  --       &  -- \\
      \texttt{mult\_e}          &  1    &  4    &  4    &  9   &  --       &  --  &  --  &  R        &  --       &  --       &  --  &  --  &  --  &  \red{F}  &  --       &  --  &  --  &  --       &  --       &  --  &  --       &  -- \\
      \texttt{nth0}             &  56   &  4    &  60   &  29  &  --       &  --  &  --  &  R        &  --       &  --       &  --  &  --  &  --  &  --       &  --       &  --  &  --  &  --       &  \red{N}  &  --  &  --       &  X \\
      \texttt{r\_lock}          &  745  &  5    &  750  &  23  &  --       &  --  &  --  &  --       &  --       &  --       &  --  &  --  &  --  &  \red{F}  &  --       &  --  &  --  &  \red{W}  &  --       &  --  &  --       &  -- \\
      \texttt{reverse}          &  1    &  4    &  4    &  23  &  --       &  --  &  --  &  R        &  --       &  --       &  --  &  --  &  --  &  --       &  --       &  --  &  --  &  --       &  \red{N}  &  --  &  --       &  X \\
      \texttt{sum\_acm\_e}      &  1    &  1    &  2    &  8   &  --       &  --  &  --  &  R        &  --       &  H        &  --  &  --  &  --  &  --       &  --       &  --  &  --  &  --       &  \red{N}  &  --  &  --       &  -- \\
      \texttt{sum\_all\_e}      &  66   &  4    &  70   &  14  &  --       &  --  &  --  &  R        &  --       &  --       &  --  &  --  &  --  &  \red{F}  &  --       &  --  &  --  &  --       &  --       &  --  &  --       &  -- \\
      \texttt{sum\_e}           &  1    &  4    &  4    &  9   &  --       &  --  &  --  &  R        &  --       &  --       &  --  &  --  &  --  &  \red{F}  &  --       &  --  &  --  &  --       &  --       &  --  &  --       &  -- \\
      \texttt{tree\_depth}      &  1    &  63   &  64   &  13  &  --       &  V   &  --  &  R        &  \red{M}  &  --       &  --  &  --  &  --  &  \red{F}  &  --       &  --  &  C   &  --       &  --       &  --  &  --       &  X \\
    \end{tabular}
    }
  \end{center}
  \captionof{table}{Complete run times and translation times in ms for soft contract benchmarks \cite{RelCompleteCounterexamples}. Letters are used for readability to indicate which features from Table \ref{table_bjy_features} are in the test. Black font indicates the feature is used, \red{red font} indicates the feature is key to the type error, and -- indicates the feature is not present in the test. }
  \label{table_scheme_benchmarks_full}
\end{table}

}
\end{document}